\definecolor{darkblue}{rgb}{0,0,0.6}
\newcommand{\NN}{\mathbb{N}}
\newcommand{\RR}{\mathbb{R}}
\newcommand{\twocasefunction}[4]{
	\begin{cases}
		#1 & \text{if #3} \\
		#2 & \text{if #4}
	\end{cases}
}
\newcommand{\twocaseotherwisefunction}[3]{
	\begin{cases}
		#1 & \text{if #3} \\
		#2 & \text{otherwise}
	\end{cases}
}
\newcommand{\id}{1\hspace{-1,5ex}1}
\newcommand{\BigO}[1]{\mathcal{O}(#1)}
\newcommand{\bit}{\mathfrak{b}}
\newcommand{\nsb}{\nu}
\newcommand{\nub}{\mu}
\newcommand{\bitset}{\mathcal{B}}
\newcommand{\boldone}{\mathbf{1}}
\newcommand{\boldzero}{\mathbf{0}}
\newcommand{\succedge}[3]{\mathrm{succ}_{#1}(#2,#3)}
\newcommand{\mdpval}[1]{\mathtt{val}_{#1}}
\newcommand{\mdppot}[1]{\mathtt{pot}_{#1}}
\newcommand{\pri}[1]{{\langle #1\rangle}}
\begin{document}

\markboth{D.~Avis, O.~Friedmann}{An exponential lower bound for Cunningham's rule}

\title{An exponential lower bound for Cunningham's rule}
\author{David Avis and Oliver Friedmann}
\institute{Oliver Friedmann \at University of Munich
   \\\email{me@oliverfriedmann.de}
  \and
  David Avis \at
   School of Informatics, Kyoto University, Kyoto, Japan and 
   \\ School of Computer Science, 
    McGill University, Montr{\'e}al, Qu{\'e}bec, Canada \\\email{avis@cs.mcgill.ca}}

\maketitle

\begin{abstract}
In this paper we give an exponential lower bound for Cunningham's least recently considered (round-robin)
rule as applied to parity games, Markhov decision processes and linear programs. 
This improves a recent subexponential bound 
of Friedmann for this rule on these problems.
The round-robin rule fixes a cyclical order of the variables and chooses the next pivot variable
starting from the previously chosen variable and proceeding in the given circular order. It is perhaps the
simplest example from the class of history-based pivot rules.         

Our results are based on a new lower bound construction for parity games.
Due to the nature of the construction we are also able to obtain an exponential lower bound
for the round-robin rule applied to acyclic unique sink
orientations of hypercubes (AUSOs). Furthermore these AUSOs are realizable as polytopes. 
We believe these are the first such results for history based rules for AUSOs, realizable or not.

The paper is self-contained and requires no previous knowledge of parity games.

\keywords{simplex method, Cunningham's rule, parity games, acyclic unique sink orientations, Markhov decision processes}
\subclass{90C05}
\end{abstract}







\maketitle

\section{Introduction}

The search for a polynomial time pivoting rule for the simplex method is as old as the method itself.
Klee and Minty showed in 1970 \cite{Klee1972} that Dantzig's original rule was exponential and similar results 
were soon found for most of the other known rules. All such lower bound
constructions were based on variations of the deformed hypercube 
that appeared in Klee and Minty's original paper. The constructions have the property that
some variables pivoted only a very few times - sometimes only once - in the exponential pivot path. 
This motivated Cunningham \cite{Cunningham79}, Zadeh \cite{Zadeh80} and
others to consider so-called history based rules. 
See \cite{AADMM12} for a formal description of these and several other history based rules.

Cunningham's {\it least recently consider rule (round-robin rule)}
assigns a cyclic order to the variables and remembers the last variable to enter the basis.
The next entering variable is chosen to be the first allowable candidate 
starting from the last chosen variable and following the given
circular order. Zadeh's {\it least entered rule} chooses the entering variable to be the candidate that
has entered the basis least often. History based rules defeat the deformed hypercube
constructions because they tend to average out how many times a variable pivots.
This pseudo-random behaviour held out the possibility that they might be at worst subexponential,
if not polynomial, since the random facet rule \cite{Kalai92} 
\cite{MSW92} is subexponential.

Friedmann gave the first evidence that history-based rules can sometimes be non-polynomial
by showing the least entered rule \cite{Friedmann11} and the
round-robin rule \cite{Friedmann12} are superpolynomial in the worst case.
These results were obtained by first
constructing a certain two person game, known as a parity game, for which the players (zero and one) follow a superpolynomial number of moves. These games are then related to
Markhov decision processes (MDPs) and finally to linear programs (LPs).
It is shown that each strategy in the parity game corresponds to a vertex in
the derived LP and improving from one strategy to the next corresponds to a pivot step in the LP.

In this paper we first obtain a new lower bound construction for parity games. We define
a strategy improvement rule for player zero that corresponds to the least recently considered rule and show
an exponential lower bound on the number of 
strategy improvements made to complete the game. Using the earlier transformations this
shows an exponential lower bound for MDPs and LPs using this rule. 

However the nature of the new construction allows us
to do more. An acyclic unique sink orientation of a hypercube (AUSO) \cite{SzWe01} is an orientation of 
the hypercube's edges
so that the resulting directed graph has no cycles and each face of the hypercube has a unique sink 
(vertex of outdegree
zero). AUSOs appear in many applications and are an abstraction of linear programming itself \cite{GaSc06}.
LP pivot rules have natural analogues on AUSOs and their analysis has been the subject of 
several papers. For example in \cite{MatousekS04} an exponential lower bound is given
for the random edge pivot selection rule. We are able to show an exponential lower bound for the
least recently considered rule on AUSOs, which are realizable as LPs.

The paper is organized as follows. In the next section we
begin by defining parity games and policy iteration, and
present the new lower bound construction. The longest part of the paper is a proof that this
game requires an exponential number of moves before terminating. The section is concluded with
some applications of this result to other types of games. 
In Section \ref{mdp} we review the known connection between parity games and Markhov decision processes.
This gives rise to an exponential lower bound on MDPs that use an analogue of Cunningham's rule.
In Section \ref{lp} we again exploit a known connection to obtain explicit LPs from the MDP examples.
The least recently considered pivot rule on these LPs is shown to require an exponential number
of steps. We turn to AUSOs in Section \ref{auso}. Here we show that the parity game exhibited in
Section \ref{parity} gives a natural acyclic orientation of a hypercube built on player zero's 
strategies. Cunningham's rule on this AUSO follows an exponentially long path. Furthermore
we show that the each AUSO can be realized as a polytope. In fact the polytope is
the one that arises from the same parity game after transforming it to a MDP and then to an LP.
The paper concludes with some open problems for future research.

\section{Parity Game Policy Iteration Lower Bound}
\label{parity}

This section is organized as follows. We first define parity games and how the general 
strategy improvement algorithms 
operate on them. Since some readers may not be familiar with this material we will illustrate them on
an example that will later be generalized for the lower bound results.
We then describe the lower bound construction and prove it correct. 
Finally, we show how to extend the results to related game classes.

\subsection{Parity Games}

A \emph{parity game} is a tuple $G = (V,V_0,V_1,E,\Omega)$ where $(V,E)$ forms a directed graph whose node set is partitioned into $V = V_0 \cup V_1$ with $V_0 \cap V_1 = \emptyset$, and $\Omega : V \to \NN$ is the \emph{priority function} that assigns to each node a natural number called the \emph{priority} of the node. We assume the graph to be total, i.e.\ for every $v \in V$ there is a $w \in V$ s.t.\ $(v,w) \in E$.

We depict parity games as directed graphs where nodes owned by player 0 are drawn as circles and nodes owned by player 1 are drawn as rectangles; all nodes are labelled with their respective name and priority. 
An example of such a graph is shown in
Figure~\ref{figure: example}.
For each node the name, $a_2, F_1, t, \cdots $, is on top and the priority is underneath.
For the moment we ignore the colours on the edges and that some edges are shown dashed.
(For monochromatic figures, we ignore that some edges are bold and some are dashed.)
A very important property of this game is that the out-degree of each node belonging
to player 0 is two. We call a game with this property a \emph{binary game}.

\begin{figure}[h!]
\centering
\includegraphics[scale=0.58]{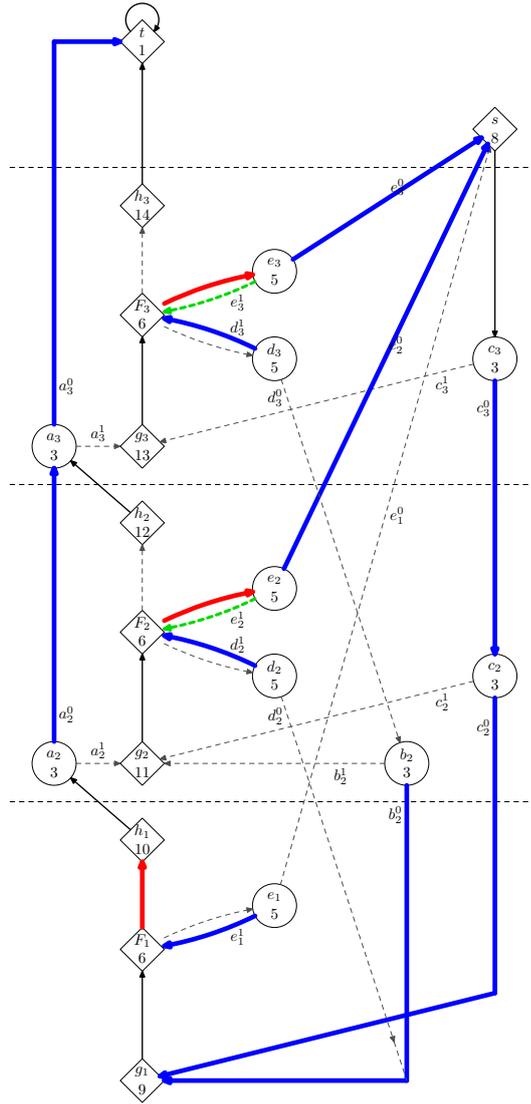}
\caption{Parity Game Lower Bound Graph $G_3$}
\label{figure: example}
\end{figure}
%


We use infix notation $vEw$ instead of $(v,w) \in E$ and define the set of all \emph{successors} of $v$ as $vE := \{ w \mid vEw \}$. The size $|G|$ of a parity game $G = (V,\ V_0,\ V_1,\ E,\ \Omega)$ is defined to be the cardinality of $E$, i.e.\ $|G| := |E|$; since we assume parity games to be total w.r.t.\ $E$, this is a reasonable way to measure the size.
The example has size 36 and $F_2 E := \{ h_2 , e_2 , d_2 \}$. 

The game is played between two players called $0$ and $1$: starting at a node $v_0 \in V$, they construct an 
infinite path through the graph as follows. If the construction so far has yielded a finite sequence $v_0\ldots v_n$ and $v_n \in V_i$ then player $i$ selects a $w \in v_nE$ and the play continues with $v_0\ldots v_n w$.
In the example a game may have started as the sequence $a_2 g_2 F_2 d_2$ ending at a node owned by player
0. She can choose between $F_2$ and $g_1$ and could continue by appending either node to the sequence.

Every play has a unique winner given by the \emph{parity} of the greatest priority that occurs infinitely often. The winner of the play $v_0 v_1 v_2 \ldots$ is player $i$ iff $\max \{ p \mid \forall j \in \NN\, \exists k \geq j:\, \Omega(v_k) = p \} \equiv_2 i$\footnote{$x \equiv_2 y$ if and only if $x$ and $y$ are congruent mod 2}.
That is, player 0 tries to make an even priority occur infinitely often without any greater odd priorities occurring infinitely often, player 1 attempts the converse.
In the example we may consider the infinite path $a_2 g_2 F_2 d_2 F_2 d_2 ...$. In this case 6 is
the largest priority that occurs infinitely often and player 0 wins since this number is even.

A \emph{strategy} for player $i$ is a -- possibly partial -- function $\sigma: V^*V_i \to V$, s.t.\ for all sequences $v_0 \ldots v_n$ with $v_{j+1} \in v_jE$ for all $j=0,\ldots,n-1$, and all $v_n \in V_i$ we have: $\sigma(v_0\ldots v_n) \in v_nE$. A play $v_0 v_1 \ldots$ \emph{conforms} to a strategy $\sigma$ for player $i$ if for all $j \in \NN$ we have: if $v_j \in V_i$ then $v_{j+1} = \sigma(v_0\ldots v_j)$. Intuitively, conforming to a strategy means to always make those choices that are prescribed by the strategy. A strategy $\sigma$ for player $i$ is a \emph{winning strategy} in node $v$ if player $i$ wins every play that begins in $v$ and conforms to $\sigma$.

A strategy $\sigma$ for player $i$ is called \emph{positional} if for all $v_0\ldots v_n \in V^*V_i$ and all $w_0\ldots w_m \in V^*V_i$ we have: if $v_n = w_m$ then $\sigma(v_0\ldots v_n) = \sigma(w_0\ldots w_m)$. That is, the choice of the strategy on a finite path only depends on the last node on that path.
So in this case we need only specify $\sigma(v)$ for each $v \in V$.
The set of positional strategies for player $i$ is denoted by $\mathcal{S}_i(G)$.
In the example a partial positional strategy could consist of $\sigma(a_2)=g_2$,    
$\sigma(g_2)=F_2$, $\sigma(F_2)=d_2$, $\sigma(d_2)=F_2$ and player 0 wins.
Note that we do not need to give a strategy for out-degree one nodes, such as $g_2$, and
will omit these in the sequel.

Recall that a \emph{binary game} is one where each node belonging to player 0 has out-degree two.
In this case player 0's positional strategy has a very 
simple representation. Suppose she owns $n$ nodes and labels
the out-edges for each of them 0 or 1 in any arbitrary way. Then her positional strategy can be represented
as a binary $n$-vector specifying for each node which edge is chosen in the strategy.

With $G$ we associate two sets $W_0,W_1 \subseteq V$, where $W_i$ is the set of all nodes $v$ where player
$i$ wins the game $G$ starting at $v$. Here we may restrict ourselves to positional strategies because it is
well-known that a player has a (general) winning strategy if and only if she has a positional winning strategy for a
given game. In fact, parity games enjoy positional determinacy meaning that for every node
$v$ in the game either $v \in W_0$ or $v \in W_1$ \cite{focs91*368}. Furthermore,
it is not difficult to show that, whenever player $i$ has winning strategies $\sigma_v$ for all $v \in U$ for
some $U \subseteq V$, then there is also a single strategy $\sigma$ that is winning for player $i$ from every
node in $U$.

The problem of solving a parity game is to compute $W_0$ and $W_1$ as well as corresponding winning
strategies $\sigma_0$ and $\sigma_1$ for the players on their respective winning regions.
In the example $F_3 \in W_1$ since player 1 can set $\sigma(F_1) = h_3$ and end up in the infinite loop
on $t$ with priority one which is odd. 


\subsection{Strategy Improvement}

We describe here the basic definitions of the strategy improvement algorithm.
For a given parity game $G = (V,\ V_0,\ V_1,\ E,\ \Omega)$, the \emph{reward} of node $v$ is defined as follows: $rew(v) := \Omega(v)$ if $\Omega(v) \equiv_2 0$ and $rew(v) := -\Omega(v)$ otherwise. The set of \emph{profitable nodes} for player 0 resp.\ 1 is defined to be $V_\oplus := \{v \in V \mid \Omega(v) \equiv_2 0\}$ resp.\ $V_\ominus := \{v \in V \mid \Omega(v) \equiv_2 1\}$.

The \emph{relevance ordering} $<$ on $V$ is induced by $\Omega$: $v < u :\iff \Omega(v) < \Omega(u)$.
Additionally one defines the \emph{reward ordering} $\prec$ on $V$ by $v \prec u :\iff rew(v) < rew(u)$. 
In our construction, although priorities are not unique, they are unique on each cycle.
Therefore on each cycle both orderings are total.

Let $v$ be a node, $\sigma$ be a positional player~0 strategy and $\tau$ be a positional player~1 strategy. Starting at $v$, there is exactly one path $\pi_{\sigma,\tau,v}$ that conforms to $\sigma$ and $\tau$. Since $\sigma$ and $\tau$ are positional strategies, this path can be uniquely written as follows.
\begin{displaymath}
\pi_{\sigma,\tau,v} = v_1\ldots v_k(w_1\ldots w_l)^\omega
\end{displaymath}
The superscript $\omega$ denotes an infinite cycle on the given vertex or vertices.
Here $v_1 \ldots v_k$ is a (possibly empty) non-repeating set of vertices and $w_1\ldots w_l$ is an infinite
cycle with  $w_1 > w_j$ for all $1 < j \leq l$. If $k \ge 1$ then $v = v_1$ otherwise $v$ is a member of 
the cycle.
Note that the uniqueness follows from the fact that all nodes on the cycle have different priorities.

Discrete strategy improvement relies on a more abstract description of such a play $\pi_{\sigma,\tau,v}$. In fact, we only consider the \emph{dominating cycle node} $w_1$, the set of \emph{more relevant nodes} -- i.e.\ all $v_i > w_1$ -- on the path to the cycle node, and the \emph{length} $k$ of the path leading to the cycle node. 

The \emph{node valuation of $v$ w.r.t.\ $\sigma$ and $\tau$} is defined as follows.
\begin{displaymath}
\vartheta_{\sigma,\tau,v} := (w_1, \{v_i > w_1 \mid 1 \leq i \leq k\}, k)
\end{displaymath}
Given a node valuation $\vartheta$, we refer to $w_1$ as the \emph{cycle component}, to $\{v_i > w_1 \mid 1 \leq i \leq k\}$ as the \emph{path component}, and to $k$ as the \emph{length component} of $\vartheta$.

In the example if we have $\sigma(a_3)=g_3$, $\tau(F_3)=d_3$ and $\sigma(d_3)=F_3$ then
$\pi_{\sigma,\tau,h_2} = h_2 a_3 g_3 (F_3 d_3)^\omega$ and 
node valuation $\vartheta_{\sigma,\tau,h_2} := (F_3, \{h_2, g_3\}, 3)$.
$\{F_3, d_3 \}$ is the cycle component with dominating node $F_3$, $\{h_2,a_3,g_3\}$ is the path to the cycle
and has length 3, $\{h_2,g_3\}$ are the more relevant nodes.

In order to compare node valuations with each other, we introduce a partial ordering 
$\prec$ on the set of node valuations. 
For this we first define a partial ordering on the path components.
For a set of nodes $M$, let $\Omega(M)$ denote the \emph{priority occurence mapping}:
\[
	\Omega(M): p \mapsto |\{v \in M \mid \Omega(v) = p \}|
\]
Let $M$ and $N$ be two distinct sets of vertices. If $\Omega(M) = \Omega(N)$ then
$M$ and $N$ are not comparable and we write this $M \sim N$.
Otherwise let $p$ be the highest priority s.t.\ $\Omega(M)(p) \not= \Omega(N)(p)$.
Then $M \prec N$ if $\Omega(M)(p) > \Omega(N)(p)$ and $p\equiv_2 1$, or $\Omega(M)(p) < \Omega(N)(p)$ and $p\equiv_2 0$.  
Otherwise $N \prec M$.
We observe that in a game with all priorities unique $\prec$ defines a total order on
the subsets of vertices.
 

In the example, if we compare 
$M=\{c_3, c_2, g_2\}$ with 
$N=\{h_1, a_2, g_2\}$, we have 
$\Omega(M)=\{3\mapsto 2,11\mapsto 1\}$ and
$\Omega(N)=\{10\mapsto 1,3\mapsto 1,11\mapsto 1\}$.
The priority with maximum value in which both sets differ is 10.
Since this priority is even we have
$M \prec N$.

Now we are able to extend the partial ordering on sets of nodes to node valuations.
If $u \prec v$ then $(u, M, e) \prec (v, N, f)$.
Otherwise $rew(u) = rew(v)$ and
\begin{displaymath}
(u, M, e) \prec (v, N, f)  \iff
\begin{cases}
 M \prec N   \\
 M \sim N \textrm ,~ e < f \textrm{ and } u \in V_\ominus  \\
 M \sim N \textrm ,~ e > f \textrm{ and } u \in V_\oplus 
\end{cases}
\end{displaymath}

We write $(u, M, e) \sim (v, N, f)$ iff neither $(u, M, e) \prec (v, N, f)$ nor $(v, N, f) \prec (u, M, e)$.
We write $(u, M, e) \preceq (v, N, f)$ to abbreviate $(u, M, e) \prec (v, N, f)$ or $(u, M, e) \sim (v, N, f)$.  

We observe that if all priorities are unique then we have a total order on node valuations.
For in this case if $rew(u)=rew(v)$ then $u=v$ and if $M \sim N$ then $M=N$.
We cannot have $e=f$ for otherwise $(u, M, e) = (v, N, f)$.

The motivation behind the above ordering is a lexicographic measurement of the profitability of a positional play w.r.t.\ player~0: the most prominent part of a positional play is the cycle in which the plays eventually stays, and here it is the reward ordering on the dominating cycle node that defines the profitability for player~0. The second important part is the loopless path that leads to the dominating cycle node. Here, we measure the profitability of a loopless path by a \emph{lexicographic} ordering on the \emph{relevancy} of the nodes on path, applying the \emph{reward} ordering on each component in the lexicographic ordering. Finally, we consider the length, and the intuition behind the definition is that, assuming we have an even-priority dominating cycle node, it is better to reach the cycle fast whereas it is better to stay as long as possible out of the cycle otherwise.

In the example suppose $\sigma$ and $\tau$ give rise to the paths $c_3 c_2 g_2 (F_2 d_2)^{\omega}$
and $h_1 a_2 g_2 (F_2 d_2)^{\omega}$. We have node evaluations
$\vartheta_{\sigma,\tau,h_1} := (F_2, \{h_1, g_2\}, 3)$, 
$\vartheta_{\sigma,\tau,c_3} := (F_2, \{g_2\}, 3)$ and
$\vartheta_{\sigma,\tau,c_2} := (F_2, \{g_2\}, 2)$.
We have
\[
 (F_2, \{g_2\}, 2) \prec (F_2, \{g_2\}, 3) \prec  (F_2, \{h_1, g_2\}, 3)
\]
The first $\prec$ is due to the fact that the length of the path to the cycle node
is smaller in the first node evaluation.
The second $\prec$ is because the symmetric difference of the path components is $h_1$ 
with priority 10 which is even.

Given a player~0 strategy $\sigma$, it is player 1's goal
to find a best response counter-strategy $\tau$ that minimizes the associated node valuations.
A strategy $\tau$ is an \emph{optimal counter-strategy} w.r.t.\ $\sigma$ iff for every opponent strategy $\tau'$ and for every node $v$ we have: $\vartheta_{\sigma,\tau,v} \preceq \vartheta_{\sigma,\tau',v}$.

Is is well-known that an optimal counter-strategy always exists and that it is efficiently computable.
\begin{lemma}[\cite{conf/cav/VogeJ00}] \label{update in poly time}
Let $G$ be a parity game and $\sigma$ be a player~0 strategy. An optimal counter-strategy for player~1 w.r.t.\ $\sigma$ exists and can be computed in polynomial time.
\end{lemma}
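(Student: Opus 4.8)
The plan is to eliminate player~0 from the picture and solve a one-player optimization for player~1. Since $\sigma$ is positional, I would form the graph $G_\sigma$ obtained from $G$ by replacing every $v \in V_0$ with its single out-edge $(v,\sigma(v))$. In $G_\sigma$ the only remaining choices are those of player~1 at the nodes of $V_1$, so any positional $\tau$ makes $G_\sigma$ fully deterministic and, from each start node $v$, produces a unique lasso $v_1 \dots v_k (w_1 \dots w_l)^\omega$ and hence a well-defined valuation $\vartheta_{\sigma,\tau,v}$. The goal is then to exhibit a single positional $\tau$ that minimizes $\vartheta_{\sigma,\tau,v}$ with respect to $\preceq$ simultaneously for every $v$, and to compute it in polynomial time.

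For existence I would set $\Xi(v) := \min_\tau \vartheta_{\sigma,\tau,v}$, the minimum over the finitely many positional $\tau$; it is attained because $\preceq$ is total (a total preorder in general, and a genuine total order on our construction by the remark that priorities are unique on each cycle). The heart of the argument is a consistency principle. I would introduce a prepend operator $v \odot {-}$ that maps the valuation of a successor to the valuation obtained by extending the play one step backwards through $v$, adding $v$ to the path component exactly when $v$ outranks the cycle node and increasing the length by one. One then shows that $\Xi$ satisfies the Bellman identities $\Xi(v) = v \odot \Xi(\sigma(v))$ for $v \in V_0$ and $\Xi(v) = \min_{w \in vE}\, v \odot \Xi(w)$ for $v \in V_1$, subject to the boundary condition that, on a closed cycle, the cycle component is pinned to its dominating node. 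A positional $\tau^*$ that at each $V_1$-node selects a successor attaining this minimum is then optimal everywhere: tracing the play from any $v$ under $\sigma, \tau^*$ and applying the identities inductively gives $\vartheta_{\sigma,\tau^*,v} = \Xi(v)$, while by definition no $\tau$ can beat $\Xi(v)$. This combination/exchange step is what turns the pointwise minima into a single uniform optimum, and it simultaneously yields the claimed optimal counter-strategy.

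For the polynomial bound I would compute $\Xi$ in two phases. First settle the cycle components: for the minimizer this amounts to determining, for each node, the reward-minimal cycle it can be driven into, which I would obtain by a greedy sweep over nodes in reward order combined with reachability tests, so that no cycle is ever enumerated explicitly. Once the cycle component of every node is fixed, the path and length components are computed by a Bellman--Ford-style relaxation of the above identities; since every relevant play-prefix is loopless, path components are subsets of $V$ of size at most $|V|$ and lengths are at most $|V|$, so each valuation has polynomial bit-size and the relaxation stabilizes after $\BigO{|V|}$ sweeps over the edge set. Each individual comparison $\preceq$ is polynomial as well: it inspects two rewards, then the priority-occurrence maps $\Omega(M),\Omega(N)$ from the highest priority downward, and finally the two lengths. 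Reading off $\tau^*$ from the stabilized $\Xi$ is immediate.

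The hard part will be the cycle component, and this is where the argument really lives. Unlike the path and length components it is not accumulated along the play but is the most relevant node of the cycle the play eventually enters, so the prepend operator and the Bellman identities must correctly detect cycle closure and must respect the last clause of the valuation ordering, in which ties in the cycle and path components are broken by length with a direction that depends on whether the dominating node lies in $V_\oplus$ or $V_\ominus$. The delicate points are (i) forcing the fixpoint to select, among all reachable cycles, precisely the one minimal in the reward ordering before any path optimization occurs, and (ii) verifying that a strategy chosen locally optimal with respect to $\Xi$ cannot close a spurious suboptimal cycle. Once the operator is organized to settle cycle components first and loopless-path components second, both the uniform optimality and the polynomial running time follow as sketched.
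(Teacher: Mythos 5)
The paper does not prove this lemma; it imports it verbatim from V\"oge and Jurdzi\'nski \cite{conf/cav/VogeJ00}, so there is no in-paper argument to compare against. Your sketch is, in outline, precisely the valuation subprocedure of that reference: fix $\sigma$, reduce to a one-player minimization for player~1 on $G_\sigma$, settle the cycle components first by sweeping candidate dominating nodes in reward order with reachability tests restricted to nodes of lower priority, and only then optimize the loopless path and length components by a monotone relaxation; the uniform optimality of a locally minimizing positional $\tau^*$ is obtained by the same Bellman/exchange argument you describe. The two points you flag as delicate are exactly where the proof lives, and I would make one of them explicit rather than deferred: the prepend operator is monotone with respect to $\preceq$ because adding the same node $v$ to two loopless path components $M, N$ with $v \notin M \cup N$ leaves unchanged the highest priority at which $\Omega(M)$ and $\Omega(N)$ differ, and increments both lengths by one; this single observation is what makes the exchange step go through and guarantees that the relaxation stabilizes at valuations realized by an actual positional strategy that closes no spurious cycle (any such cycle would have a dominating node no better than the one already fixed in phase one, contradicting local optimality). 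With that spelled out, your argument is a correct reconstruction of the cited proof.
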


A fixed but arbitrary optimal counter-strategy will be denoted by $\tau_\sigma$ from now on. The associated \emph{game valuation} $\Xi_\sigma$ is a map that assigns to each node the node valuation w.r.t.\ $\sigma$ and $\tau_\sigma$:
\begin{displaymath}
\Xi_\sigma: v \mapsto \vartheta_{\sigma,\tau_\sigma,v}
\end{displaymath}

In the example suppose player 0 has played to obtain the path $a_3 g_3 (F_3 d_3)^{\omega}$.
It is easy to see that $\tau_{\sigma}(F_3)=h_3$ is an optimal
counter-strategy and $\vartheta_{\sigma,\tau_{\sigma},a_3} = (t,\{a_3,g_3,F_3,h_3\},4)$.
We have $\Xi_\sigma (a_3) = (t,\{a_3,g_3,F_3,h_3\},4)$ and player 1 wins.

Game valuations are used to measure the performance of a strategy of player~0. 
For a fixed strategy $\sigma$ of player~0 and a node $v$, the associated valuation essentially states which is the worst cycle that can be reached from $v$ conforming to $\sigma$ as well as the worst loopless path leading to that cycle (also conforming to $\sigma$).
We also write $v \prec_\sigma u$ to compare the $\Xi_\sigma$-valuations of two nodes, i.e.\ to abbreviate $\Xi_\sigma(v) \prec \Xi_\sigma(u)$.

A run of the strategy improvement algorithm can be expressed by a sequence of \emph{improving} game valuations; a partial ordering on game valuations is quite naturally defined as follows:
\begin{displaymath}
\Xi \lhd \Xi' \, : \iff \, \left(\Xi(v) \preceq \Xi'(v) \textrm{ for all } v \in V\right) \textrm{ and } \left(\Xi \not= \Xi'\right)
\end{displaymath}

Let $\sigma$ be a strategy, $v \in V_0$ and $w \in vE$. We say that $(v,w)$ is
a \emph{$\sigma$-improving switch} iff $\sigma(v) \prec_\sigma w$.
If $w \prec_\sigma \sigma(v)$, we say that $(v,w)$ is a \emph{$\sigma$-degradable switch}.
We say that
$\sigma$ is \emph{improvable} iff $\sigma$ has an improving switch.
We write $I_\sigma$ to denote the set of improving switches and write
$I_\sigma(v) = \{w \mid (v,w) \in I_\sigma\}$.

Again things become very simple for binary games. For any node $v$ we can 
write $vE=\{ \sigma(v), w \}$
and an improving switch selects edge $(v,w)$
if $\sigma(v) \prec_\sigma w$.
Clearly the notion of improving switch requires that the node valuations of $\sigma(v)$
and $w$ are ordered by $\prec_\sigma$.
The binary games we construct do not have unique priorities and $\prec$ defines only a partial order
on the node valuations. 

Recall that for binary games player 0's current strategy can be represented by a binary $n$-vector.
As improving switch just flips one of the bits in this vector. The connection with
paths on hypercubes now becomes apparent.

In the example consider strategies of the two players leading to the infinite path
$d_2 g_1 F_1 h_1 a_2 a_3 t^{\omega}$ with node valuation
\[
\vartheta_{\sigma,\tau,d_2} := (t, \{d_2, g_1, F_1, h_1, a_2, a_3\}, 6).
\]
Then $(d_2, F_2)$ is an improving switch for player 0 since it leads to the infinite path $(d_2 F_2)^{\omega} $
with node valuation 
$\vartheta_{\sigma,\tau,d_2} := (F_2, \emptyset , 0)$ which is better for her since $t \prec F_2$.

The improvement step from one strategy to the next is carried out by an
\emph{improvement rule}. It is a map $\mathcal{I}_G: \mathcal{S}_0(G) \rightarrow \mathcal{S}_0(G)$
s.t.\ $\Xi_\sigma \lhd \Xi_{\mathcal{I}_G(\sigma)}$ for every $\sigma$ and additionally
 $\Xi_\sigma \unlhd \Xi_{\mathcal{I}_G(\sigma)}$ if $\sigma$ is improvable.
We say that a function $\mathcal{I}_G: \mathcal{S}_0(G) \rightarrow \mathcal{S}_0(G)$
is a \emph{standard improvement rule} iff it only selects improving switches
for finding a successor strategy, i.e.\
\begin{enumerate}
\item For every node $v \in V_0$ it holds that $\sigma(v) \preceq_\sigma \mathcal{I}_G(\sigma)(v)$.
\item If $\sigma$ is improvable then there is a node $v \in V_0$ s.t.\ $\sigma(v) \prec_\sigma \mathcal{I}_G(\sigma)(v)$.
\end{enumerate}

Jurdzi{\'n}ski and V\"oge \cite{conf/cav/VogeJ00} showed that 
an improving switch always exists for any non-optimal strategy $\sigma$.
They also showed that improving $\sigma$ by
an arbitrary, non-empty selection of improving switches can only result in strategies with valuations strictly better than the valuation of $\sigma$.

\begin{theorem}[\cite{conf/cav/VogeJ00}]
Let $G$ be a parity game, $\sigma$ be an improvable strategy and $\mathcal{I}_G$ be a standard improvement rule. Then $\Xi_\sigma \lhd \Xi_{\mathcal{I}_G(\sigma)}$.
\end{theorem}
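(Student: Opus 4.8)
The plan is to realize $\Xi_{\sigma'}$ as the limit of a monotone fixed-point iteration seeded at $\Xi_\sigma$ and driven by the switched strategy $\sigma' := \mathcal{I}_G(\sigma)$. Writing $\tau_\sigma$ and $\tau_{\sigma'}$ for the chosen optimal counter-strategies, so that $\Xi_\sigma(v)=\vartheta_{\sigma,\tau_\sigma,v}$ and $\Xi_{\sigma'}(v)=\vartheta_{\sigma',\tau_{\sigma'},v}$, unfolding the definition of $\lhd$ reduces the claim to two statements: (i) $\Xi_\sigma(v)\preceq\Xi_{\sigma'}(v)$ for every $v\in V$, and (ii) $\Xi_\sigma\neq\Xi_{\sigma'}$. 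Since $\mathcal{I}_G$ is a standard improvement rule it never degrades a node and, because $\sigma$ is improvable, it flips at least one node; I fix such a node $u\in V_0$, for which $\sigma(u)\prec_\sigma\sigma'(u)$, and will use it to witness (ii).

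The central device is the local Bellman-type characterization of a best-response valuation. For a node $v$ and a valuation $\vartheta=(c,M,\ell)$ of one of its successors, let $\mathrm{ext}_v(\vartheta)$ be the valuation obtained by prepending $v$ to the path recorded by $\vartheta$: the cycle component is retained, $v$ is inserted into the path component exactly when $v$ is more relevant than $c$, and the length is incremented (with the degenerate adjustment needed when $v$ itself dominates the cycle). I would first record that $\Xi_\sigma$ satisfies $\Xi_\sigma(v)=\mathrm{ext}_v(\Xi_\sigma(\sigma(v)))$ for $v\in V_0$ and $\Xi_\sigma(v)=\min_{w\in vE}\mathrm{ext}_v(\Xi_\sigma(w))$ for $v\in V_1$, the minimum being attained at $\tau_\sigma(v)$; the analogous equations hold for $\sigma'$ and $\tau_{\sigma'}$. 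Define the one-step operator $T_{\sigma'}$ by $(T_{\sigma'}\Xi)(v)=\mathrm{ext}_v(\Xi(\sigma'(v)))$ on $V_0$ and $(T_{\sigma'}\Xi)(v)=\min_{w\in vE}\mathrm{ext}_v(\Xi(w))$ on $V_1$, so that $\Xi_{\sigma'}$ is a fixed point of $T_{\sigma'}$. The structural lemma I must prove is that $\mathrm{ext}_v$ is monotone for $\preceq$, whence $T_{\sigma'}$ is monotone for the pointwise lifting $\unlhd$.

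Granting monotonicity, the argument closes quickly. I would show that $\Xi_\sigma$ is a subsolution for $T_{\sigma'}$, i.e.\ $\Xi_\sigma\unlhd T_{\sigma'}\Xi_\sigma$: at each $v\in V_1$ the player-1 equation gives $\Xi_\sigma(v)\preceq\mathrm{ext}_v(\Xi_\sigma(w))$ for every successor, in particular for $w=\tau_{\sigma'}(v)$; at unchanged player-0 nodes $(T_{\sigma'}\Xi_\sigma)(v)=\Xi_\sigma(v)$; and at the switched node $u$, the improving condition $\Xi_\sigma(\sigma(u))\prec\Xi_\sigma(\sigma'(u))$ combined with monotonicity of $\mathrm{ext}_u$ yields $\Xi_\sigma(u)\prec(T_{\sigma'}\Xi_\sigma)(u)$. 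Iterating the monotone $T_{\sigma'}$ produces an ascending chain $\Xi_\sigma\unlhd T_{\sigma'}\Xi_\sigma\unlhd T_{\sigma'}^2\Xi_\sigma\unlhd\cdots$ inside the finite set of valuations, which stabilizes at a fixed point that is a best response to $\sigma'$ and hence equals $\Xi_{\sigma'}$ (existence guaranteed by Lemma~\ref{update in poly time}); this yields (i). For (ii) I would chain directly: $\Xi_\sigma(u)=\mathrm{ext}_u(\Xi_\sigma(\sigma(u)))\prec\mathrm{ext}_u(\Xi_\sigma(\sigma'(u)))\preceq\mathrm{ext}_u(\Xi_{\sigma'}(\sigma'(u)))=\Xi_{\sigma'}(u)$, where the middle step uses the switch and the last step uses (i) at $\sigma'(u)$; since a strict $\prec$ followed by a $\preceq$ cannot return to the starting valuation, $\Xi_\sigma(u)\neq\Xi_{\sigma'}(u)$, giving (ii).

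The step I expect to be the real obstacle is the monotonicity of $\mathrm{ext}_v$ for $\preceq$, together with the closely related point that $\preceq$ is only a partial relation here. Because our priorities are not globally unique, two valuations can be genuinely incomparable ($\sim$), so $\preceq$ is neither total nor, a priori, transitive, and the clean lattice-theoretic reading of the fixed-point iteration must be justified rather than assumed. Proving $\mathrm{ext}_v$ monotone requires checking that prepending a single node preserves each lexicographic layer of the valuation order --- the reward of the cycle node, the priority-occurrence ordering on the path set, and the parity-dependent length comparison --- without converting a strict inequality into incomparability, and it requires separate treatment of the degenerate case where the prepended node is more relevant than the current cycle node and thereby reshapes the cycle. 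A secondary care point is ensuring the ascending chain does not overshoot, i.e.\ that every iterate stays $\unlhd\Xi_{\sigma'}$, so that the stabilized fixed point is exactly the best-response valuation for $\sigma'$.
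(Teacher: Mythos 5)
The paper does not prove this theorem; it is imported verbatim from V\"oge and Jurdzi\'nski \cite{conf/cav/VogeJ00}, so there is no internal proof to compare against and your attempt has to be judged on its own. The front half of your plan is sound: the local Bellman characterization of $\tau_\sigma$, the monotonicity of $\mathrm{ext}_v$ (prepending the same node to two valuations with equal cycle components adds the same occurrence to both priority-occurrence mappings and increments both lengths, so every lexicographic layer is preserved), and the subsolution inequality $\Xi_\sigma\unlhd T_{\sigma'}\Xi_\sigma$ with strictness at the switched node $u$ are all correct and provable along the lines you sketch.

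The genuine gap is the final step: identifying the limit of the ascending chain with $\Xi_{\sigma'}$. Two distinct problems arise. First, your proposed safeguard against overshooting --- showing every iterate satisfies $T_{\sigma'}^k\Xi_\sigma\unlhd\Xi_{\sigma'}$ --- is circular: the inductive step is fine by monotonicity, but the base case $k=0$ is exactly the statement $\Xi_\sigma\unlhd\Xi_{\sigma'}$ you are trying to prove. Second, the alternative route (limit is a fixed point, fixed points are best-response valuations) fails because $T_{\sigma'}$ is not contractive in any useful sense: $\mathrm{ext}_v$ retains the cycle component of its argument forever, so unrolling the subsolution inequality along a play never washes out the seed valuation, and $T_{\sigma'}$ can admit spurious fixed points that are locally consistent but are not the valuation of any actual counter-strategy (the operator is purely formal and never checks that the recorded path and cycle are realizable as the play from $v$; this is the same phenomenon as spurious solutions of Bellman--Ford equations in the presence of cycles). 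Even granting that the limit corresponds to some locally optimal counter-strategy, the implication ``locally optimal counter-strategy $\Rightarrow$ globally optimal counter-strategy'' is itself a substantial lemma of the V\"oge--Jurdzi\'nski framework, not a freebie. The published proof closes this hole by a direct analysis of the play under $(\sigma',\tau_{\sigma'})$: it decomposes the play at the switched nodes, compares each $\sigma$-conforming segment against $\Xi_\sigma$ via a concatenation lemma, and argues separately about the cycle eventually entered (either it avoids all switched nodes and was already available under $\sigma$, or it contains one and the improving condition forces strict improvement). Some argument of this kind, pinning down the tail of the play rather than relying on fixed-point uniqueness, is what your proposal is missing.
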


If a strategy is not improvable, the strategy improvement procedure comes to an end.
The game has been solved.
The winning sets for both players as well as associated winning strategies can be easily derived from the given valuation.

\begin{theorem}[\cite{conf/cav/VogeJ00}] \label{winning theorem}
Let $G$ be a parity game and $\sigma$ be a non-improvable strategy. Then the following holds:
\begin{enumerate}
\item $W_0 = \{v \mid \Xi_\sigma(v) = (w, \_, \_) \textrm{ and } w \in V_\oplus\}$
\item $W_1 = \{v \mid \Xi_\sigma(v) = (w, \_, \_) \textrm{ and } w \in V_\ominus\}$
\item $\sigma$ is a winning strategy for player 0 on $W_0$
\item $\tau_\sigma$ is a winning strategy for player 1 on $W_1$
\item $\sigma$ is $\unlhd$-optimal
\end{enumerate}
\end{theorem}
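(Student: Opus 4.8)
The plan is to show that the non-improvability of $\sigma$, together with the optimality of the counter-strategy $\tau_\sigma$, turns $\Xi_\sigma$ into a two-sided (saddle-point) fixed point of the valuation map, and then to read off the five assertions from this fixed-point property together with positional determinacy. First I would record the \emph{valuation equations}. For a node $u$ and a valuation $\Xi=(c,P,k)$ let $u\cdot\Xi$ denote the valuation obtained by prepending $u$ to a play realizing $\Xi$; this is well defined whenever $u$ does not lie on the realized cycle, in which case the cycle component stays $c$, the length becomes $k+1$, and $u$ is added to the path component iff $u>c$. The basic technical lemma I need is \emph{prepend monotonicity}: if $\Xi\preceq\Xi'$ then $u\cdot\Xi\preceq u\cdot\Xi'$, which follows directly from the definition of $\prec$ on valuations. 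Optimality of $\tau_\sigma$ then gives, for each $v\in V_1$, that $\Xi_\sigma(v)=v\cdot\Xi_\sigma(\tau_\sigma(v))=\min_{w\in vE} v\cdot\Xi_\sigma(w)$, while non-improvability of $\sigma$ gives, for each $v\in V_0$, that $\Xi_\sigma(v)=v\cdot\Xi_\sigma(\sigma(v))=\max_{w\in vE} v\cdot\Xi_\sigma(w)$; in particular $v\cdot\Xi_\sigma(w)\preceq\Xi_\sigma(v)$ for every $(v,w)\in E$ with $v\in V_0$, and with equality whenever player~1 moves.

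Next I would prove the two global-optimality statements that drive everything else. Item~3 is the easy direction and uses only $\tau_\sigma$: for $v\in W_0$ the cycle component $w$ of $\Xi_\sigma(v)$ lies in $V_\oplus$, so $rew(w)\ge 0$; since $\tau_\sigma$ minimizes, $\Xi_\sigma(v)\preceq\vartheta_{\sigma,\tau',v}$ for every player-1 strategy $\tau'$, and the top-level comparison of $\preceq$ forces the cycle component of $\vartheta_{\sigma,\tau',v}$ to have reward $\ge rew(w)\ge 0$, hence to lie in $V_\oplus$, so player~0 wins every such play. The harder, central claim is the dual: \emph{for every player-0 strategy $\sigma'$ and every node $v$, $\vartheta_{\sigma',\tau_\sigma,v}\preceq\Xi_\sigma(v)$.} I would prove this by telescoping the valuation equations along the play $\pi_{\sigma',\tau_\sigma,v}=u_0u_1\ldots$, where each step satisfies $u_i\cdot\Xi_\sigma(u_{i+1})\preceq\Xi_\sigma(u_i)$ (equality at player-1 nodes, the $V_0$ inequality from non-improvability). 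Prepend monotonicity then yields $\Xi_\sigma(v)\succeq (u_0\cdots u_{n-1})\cdot\Xi_\sigma(u_n)$ for all $n$. Applying this once around the eventual cycle $C$, with $u_n$ taken to be the dominant node $c_C$ of $C$, gives $(C)\cdot\Xi_\sigma(c_C)\preceq\Xi_\sigma(c_C)$; since $c_C$ is the most relevant prepended node, a $c_C\in V_\oplus$ sitting above the cycle component of $\Xi_\sigma(c_C)$ would strictly \emph{increase} the valuation, contradicting this inequality. Ruling out such a favorable cycle pins down the parity of $c_C$ and, fed back through the telescope, delivers $\vartheta_{\sigma',\tau_\sigma,v}\preceq\Xi_\sigma(v)$.

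Finally I would assemble the theorem. Item~4 follows from the central claim exactly as item~3 did: for $v\in W_1$ the cycle component of $\Xi_\sigma(v)$ lies in $V_\ominus$, and $\vartheta_{\sigma',\tau_\sigma,v}\preceq\Xi_\sigma(v)$ forces every play from $v$ conforming to $\tau_\sigma$ to have an odd dominant cycle node, so $\tau_\sigma$ wins. Items~1 and~2 then follow from positional determinacy: the two right-hand sides partition $V$ according to the parity of the cycle component, and items~3 and~4 exhibit player~0 winning on the first and player~1 on the second, so they coincide with $W_0$ and $W_1$. Item~5 is immediate from the central claim combined with the optimality of each $\tau_{\sigma'}$, since $\Xi_{\sigma'}(v)=\vartheta_{\sigma',\tau_{\sigma'},v}\preceq\vartheta_{\sigma',\tau_\sigma,v}\preceq\Xi_\sigma(v)$ for all $\sigma'$ and $v$, so no strategy attains a $\lhd$-greater valuation and $\sigma$ is $\unlhd$-optimal.

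The main obstacle is the cycle step of the central claim. The telescoping inequality is transparent on the loopless prefix, but closing the loop forces the prepend operation to be used in exactly the regime where the prepended nodes form a cycle and its dominant node $c_C$ may be more relevant than the cycle component of $\Xi_\sigma(c_C)$; controlling this, together with the delicate behaviour of the path- and length-components in the incomparable case $M\sim N$, is where the real work lies and where the hypothesis that priorities are unique on each cycle is used to keep $\preceq$ a total order on the valuations that actually arise.
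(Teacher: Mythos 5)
The paper does not prove this theorem: it is imported verbatim from V\"oge and Jurdzi\'nski \cite{conf/cav/VogeJ00}, so there is no in-paper argument to compare yours against. Your outline is, in essence, the argument from that reference: the valuation equations (local optimality of $\tau_\sigma$ at $V_1$-nodes, non-improvability at $V_0$-nodes), prepend monotonicity, the saddle-point inequality $\vartheta_{\sigma',\tau_\sigma,v}\preceq\Xi_\sigma(v)\preceq\vartheta_{\sigma,\tau',v}$, then items 1--4 by reading off the parity of the cycle component and item 5 by chaining through $\tau_{\sigma'}$. The decomposition, and your identification of the cycle-closing step as the crux, are both right.

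Two places need more than you give. First, in item 3 you only rule out positional counter-strategies $\tau'$; a winning strategy must defeat \emph{every} play conforming to $\sigma$, so you need the additional (standard) observation that once $\sigma$ is fixed the residual game is a one-player game for player 1 in which a positional strategy is optimal --- or an appeal to positional determinacy already at this point, not only for items 1 and 2. Second, the telescoping argument bounds the finite prepends $(u_0\cdots u_{n-1})\cdot\Xi_\sigma(u_n)$, but the quantity to be bounded is $\vartheta_{\sigma',\tau_\sigma,v}$, which is defined from the loopless prefix and the eventual cycle rather than as a limit of such prepends; moreover, going once around the cycle uses the prepend operation exactly outside the regime where you declared it well defined (``$u$ does not lie on the realized cycle''). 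You flag this yourself as where the real work lies, and that assessment is accurate: as written the central claim is a plan rather than a proof, and closing it requires either redoing the cycle analysis with the $M\sim N$ cases of $\prec$ in hand or deferring to \cite{conf/cav/VogeJ00}, which is what the paper itself does.
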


Strategy improvement starts with an initial strategy $\sigma$ and runs for a
given improvement rule $\mathcal{I}$ as follows and returns an optimal
player~0 strategy as outlined in the pseudo-code of
Algorithm~\ref{algorithm: discrete strategy improvement}.

\begin{algorithm}
\begin{algorithmic}[1]
\Procedure{StandardStratIt}{$\mathcal{I}$, $G$, $\sigma$}
	\While {$\sigma$ is improvable}
		\State $\sigma \gets \mathcal{I}_G(\sigma)$
	\EndWhile
	\State \textbf{return} $\sigma$.
\EndProcedure	
\end{algorithmic}
\caption{Strategy Improvement}
\label{algorithm: discrete strategy improvement}
\end{algorithm}

Given an initial strategy $\sigma$, a game $G$ and a rule $\mathcal{I}$, the
unique execution trace, called \emph{run}, of strategy improvement is the sequence of strategies
$\sigma_1$, $\ldots$, $\sigma_k$ s.t.\ $\sigma_1 = \sigma$, 
$\sigma_{i+1} = \mathcal{I}_G(\sigma_i)$ for all $i < k$, $\sigma_k$
optimal and $\sigma_i$ improvable for all $i < k$. The \emph{length} of the run
is denoted by $k$ and we say that strategy improvement \emph{requires $k+1$ iterations to find the optimal strategy}.

We call a parity game $G$ 
(in combination with an initial strategy $\theta$) 
a \emph{sink game} iff the following two properties hold:
\begin{enumerate}
\item \emph{Sink Existence}: there is a node $v^*$ (called the \emph{sink} of $G$) with $v^*Ev^*$ and $\Omega(v^*) = 1$ reachable from all nodes; also, there is no other node $w$ with $\Omega(w) \leq \Omega(v^*)$.
\item \emph{Sink Seeking}: for each player~0 strategy $\sigma$ with $\Xi_{\theta} \unlhd \Xi_\sigma$ and each node $w$ it holds that the cycle component of $\Xi_\sigma(w)$ equals $v^*$.
\end{enumerate}

At this point the reader may wish to verify that the example is a sink game with $v^* = t$.
Obviously, a sink game is won by player~1. Note that comparing node valuations in a sink game can be reduced to comparing the path components of the respective node valuations, for two reasons. First, the cycle component remains constant. Second, the path-length component equals the cardinality of the path component, because all nodes except the sink node are more relevant than the cycle node itself.
In the case of a sink game, we will therefore identify node valuations with their path component.

Given a parity game $G$ 
the sink existence property can be verified by standard graph algorithms.
Given an initial strategy $\theta$
the sink seeking property can also be easily checked, as shown by the following lemma.

\begin{lemma}[\cite{FriedmannExpBound09}] \label{lemma: sink game lemma} Let $G$ be a parity game 
with initial strategy $\theta$ fulfilling the sink existence property w.r.t.\ $v^*$. $G$ is a sink game iff $G$ is completely won by player~1 (i.e.\ $W_1 = V$) and for each node $w$ it holds that the cycle component of $\Xi_{\theta}(w)$ equals $v^*$.
\end{lemma}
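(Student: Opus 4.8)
The plan is to exploit that the sink existence property is already assumed, so it remains only to prove that the sink-seeking property is equivalent to the conjunction of $W_1 = V$ with the condition that the cycle component of $\Xi_\theta(w)$ equals $v^*$ for every $w$. Throughout I will lean on one arithmetic observation flowing from sink existence: since $\Omega(v^*)=1$ is the unique smallest priority, $rew(v^*)=-1$, every even node $u$ has $rew(u)=\Omega(u)\ge 0 > rew(v^*)$, and every odd node $u\neq v^*$ has $\Omega(u)\ge 3$, hence $rew(u)\le -3 < rew(v^*)$. Consequently, whenever two node valuations have cycle components of distinct reward, the first clause of the valuation ordering decides their comparison outright, independently of the path and length components. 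This reduces every comparison I need to a statement about cycle rewards, where the ordering is total.

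For the forward direction, assume $G$ is a sink game. Instantiating sink-seeking at $\sigma=\theta$ (legitimate, since $\Xi_\theta\unlhd\Xi_\theta$) immediately gives that the cycle component of $\Xi_\theta(w)$ is $v^*$ for all $w$, which is half the claim. For $W_1=V$ I run strategy improvement from $\theta$ to an optimal (non-improvable) strategy $\sigma^*$; by monotonicity of strategy improvement (each step strictly raises the valuation) we have $\Xi_\theta\unlhd\Xi_{\sigma^*}$, and sink-seeking then forces the cycle component of $\Xi_{\sigma^*}(w)$ to be $v^*$ for every $w$. Since $v^*\in V_\ominus$, clause (2) of Theorem~\ref{winning theorem} yields $W_1=V$.

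The backward direction is the substantive one. Assume $W_1=V$ and that the cycle component of $\Xi_\theta(w)$ is $v^*$ for all $w$; I must verify sink-seeking, namely that for every $\sigma$ with $\Xi_\theta\unlhd\Xi_\sigma$ and every $w$, the cycle component $u$ of $\Xi_\sigma(w)$ equals $v^*$. I proceed by excluding the two ways $u$ could differ from $v^*$. First I rule out that $u$ is even: since $W_1=V$, player~1 has a single strategy $\tau^*$ winning from every node, so the play $\pi_{\sigma,\tau^*,w}$ is won by player~1 and its cycle component has odd priority; optimality of $\tau_\sigma$ gives $\vartheta_{\sigma,\tau_\sigma,w}\preceq\vartheta_{\sigma,\tau^*,w}$, and by the arithmetic observation the reward of $u$ is at most that of an odd node, hence negative, so $u\in V_\ominus$. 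Second, having established that $u$ is odd, suppose $u\neq v^*$: then $\Omega(u)\ge 3$ and $rew(u)<rew(v^*)$, so $\Xi_\sigma(w)\prec\Xi_\theta(w)$, contradicting $\Xi_\theta(w)\preceq\Xi_\sigma(w)$. Hence $u=v^*$.

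The main obstacle I anticipate is the first elimination step above: correctly relating the optimal counter-strategy $\tau_\sigma$, defined purely as a minimizer of node valuations, to the globally winning player-1 strategy guaranteed by $W_1=V$. The crux is that such a $\tau^*$ produces, against any fixed $\sigma$, valuations whose cycle components are odd, and that optimality of $\tau_\sigma$ can only improve on these. Care is needed because $\prec$ is merely a partial order on full valuations, so I must argue through the cycle-reward component alone—where the ordering is total—rather than comparing entire valuations; concretely, $\vartheta_{\sigma,\tau_\sigma,w}\preceq\vartheta_{\sigma,\tau^*,w}$ forbids the reward of $u$ from exceeding that of an odd cycle node, forcing $u$ odd. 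Once even cycle components are excluded, pinning the cycle component to exactly $v^*$ follows directly from $v^*$ having the smallest reward among odd nodes together with the hypothesis $\Xi_\theta\unlhd\Xi_\sigma$.
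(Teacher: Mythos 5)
Your proof is correct. Note first that the paper does not prove this lemma at all: it is imported verbatim from \cite{FriedmannExpBound09}, so there is no in-paper argument to compare against. Your derivation is a sound, self-contained reconstruction from the definitions: the key arithmetic observation (sink existence forces $rew(v^*)=-1$ to dominate the reward of every other odd node and to lie below every even node, so cycle-component comparisons are decided by the first clause of the valuation ordering, where $\prec$ is total) is exactly what makes both directions go through, and your use of a single positional winning strategy $\tau^*$ for player~1 together with optimality of $\tau_\sigma$ to exclude even cycle components is the right way to bridge $W_1=V$ and the valuation machinery. Two small remarks. First, in your closing paragraph you write that $v^*$ has the \emph{smallest} reward among odd nodes; it has the \emph{largest} (equivalently, the smallest priority), which is what your displayed inequality $rew(u)<rew(v^*)$ for odd $u\neq v^*$ correctly uses, so this is only a wording slip. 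Second, in the forward direction you can avoid appealing to termination of strategy improvement by invoking clause~5 of Theorem~\ref{winning theorem} directly: any non-improvable $\sigma^*$ is $\unlhd$-optimal, hence $\Xi_\theta\unlhd\Xi_{\sigma^*}$, and sink-seeking plus clause~2 then give $W_1=V$ as you conclude.
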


Let $G$ be a sink game and $v, r \in V_G$. We define $\Xi_\sigma^{>r}(v)$ to be
the path component of $\Xi_\sigma(v)$ by filtering the nodes which are more relevant
than $r$, i.e.\
\begin{displaymath}
	\Xi_\sigma^{>r}(v) = \{u \in \Xi_\sigma^{>r}(v) \mid \Omega(u) > \Omega(r)\}
\end{displaymath}
It is easy to see that $\Xi_\sigma^{>r}(v) \prec \Xi_\sigma^{>r}(u)$ implies
$\Xi_\sigma(v) \prec \Xi_\sigma(u)$.
We assume from now on that every game we consider is a sink game.

Cunningham's rule \cite{Cunningham79} is a \emph{deterministic} \emph{history based}
pivot rule for selecting entering variables in the network simplex method. 
It fixes an
initial ordering on all variables and then selects the entering variables
in a round-robin fashion starting from the last entering variable selected.
The history is simply to remember this variable.
The rule can be adapted in a straightforward manner to other local improvement algorithms.

We describe Cunningham's pivoting rule in the context of parity games. We
assume that we are given a total ordering $\prec$ on the player~$0$ edges of the
parity game. The history is simply to record the last edge that has been applied.

Given a non-empty subset of player~$0$ edges $\emptyset \not= F \subseteq E_0$
and a player~$0$ edge $e \in E_0$, we define a \emph{successor operator} as follows:
\begin{displaymath}
	\succedge{\prec}{e}{F} := \begin{cases}
		\min_\prec \{e' \in F \mid e \preceq e'\} & \text{ if } \{e' \in F \mid e \preceq e'\} \not= \emptyset \\
		\min_\prec \{e' \in F \mid e' \preceq e\} & \text{ otherwise }
	\end{cases}
\end{displaymath}

See Algorithm~\ref{algorithm: cunningham} for a pseudo-code specification
of the Cunningham's rule for applied to parity games. 

\begin{algorithm}[!h]
\begin{algorithmic}[1]
\Procedure{RoundRobin}{$G$, $\sigma$, $\prec$, $e$}
	\While {$\sigma$ is improvable}
		\State $e \gets \succedge{\prec}{e}{I_\sigma}$
		\State $\sigma \gets \sigma[e]$
	\EndWhile
	\State \textbf{return} $\sigma$.
\EndProcedure
\end{algorithmic}
\caption{Cunningham's Improvement Algorithm}
\label{algorithm: cunningham}
\end{algorithm}

Let $(\sigma_1,e_1)$, $\ldots$, $(\sigma_n,e_n)$ be a trace of the algorithm w.r.t.\
some selection ordering $\prec$. We write $(\sigma,e) \leadsto_\prec (\sigma',e')$
iff there are $i < j$ s.t.\ $(\sigma,e) = (\sigma_i,e_i)$ and $(\sigma',e') = (\sigma_j,e_j)$.

In the original specification of Cunningham's rule \cite{Cunningham79} it is
assumed that the ordering on the edges and the
initial edge $e$ is given as part of the input.
In fact, we know that the asymptotic behavior of Cunningham's improvement rule
highly depends on the ordering used, at least in the world
of parity games and strategy improvement for games in general. We have the following
theorem which is easy to verify (the idea is that there is at least one
improving switch towards the optimal strategy in each step).

\begin{theorem}\label{theorem: linear diameter}
Let $G$ be a parity game with $n$ nodes and $\sigma_0$ be a strategy. There is a 
sequence policies $\sigma_0,\sigma_1,\ldots,\sigma_{N}$ and a sequence of \emph{different} switches
$e_1,e_2,\ldots,e_N$ with $N\leq n$ s.t.\ $\sigma_{N-1}$ is optimal, $\sigma_{i+1} = \sigma_i[e_{i+1}]$
and $e_{i+1}$ is an $\sigma_i$-improving switch.
\end{theorem}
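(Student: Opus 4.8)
The plan is to fix any optimal (non-improvable) strategy $\sigma^*$, whose existence is guaranteed by Theorem~\ref{winning theorem}, and to walk from $\sigma_0$ to $\sigma^*$ by correcting, one node at a time, the disagreements between the current strategy and $\sigma^*$. Let $D = \{v \in V_0 \mid \sigma_0(v) \neq \sigma^*(v)\}$; since player~0 owns at most $n$ nodes, $|D| \leq n$. I would build the sequence greedily: given $\sigma_i$ that agrees with $\sigma^*$ everywhere outside some set $D_i \subseteq D$ (with $D_0 = D$), either $D_i = \emptyset$, in which case $\sigma_i = \sigma^*$ is optimal and we stop, or I pick a node $v_i \in D_i$ for which the switch $(v_i, \sigma^*(v_i))$ is $\sigma_i$-improving, set $\sigma_{i+1} := \sigma_i[(v_i,\sigma^*(v_i))]$ and $D_{i+1} := D_i \setminus \{v_i\}$. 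Because each step strictly shrinks the disagreement set and never re-touches an already-corrected node, the tail nodes $v_0, v_1, \ldots$ are pairwise distinct, hence so are the edges $e_{i+1} = (v_i, \sigma^*(v_i))$; the construction terminates after $N = |D| \leq n$ distinct improving switches at the optimal strategy $\sigma_N = \sigma^*$.

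Everything therefore reduces to the following claim, which is exactly the informal statement that ``there is always an improving switch towards $\sigma^*$'': if $\sigma$ is improvable and agrees with $\sigma^*$ outside a nonempty set $D' \subseteq V_0$, then some $v \in D'$ satisfies $\sigma(v) \prec_\sigma \sigma^*(v)$, i.e.\ $(v,\sigma^*(v)) \in I_\sigma$. I would prove this by contradiction. Suppose instead that $\sigma^*(v) \preceq_\sigma \sigma(v)$ for every $v \in D'$ (and trivially for the nodes where the two strategies already coincide). By part~5 of Theorem~\ref{winning theorem} the non-improvable strategy $\sigma^*$ is $\unlhd$-optimal, so $\Xi_\sigma \unlhd \Xi_{\sigma^*}$; and since $\sigma$ is improvable this domination is strict, $\Xi_\sigma \lhd \Xi_{\sigma^*}$.

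The main obstacle is to show that the assumption ``no differing node offers an improving switch towards $\sigma^*$'' forces the reverse domination $\Xi_{\sigma^*} \unlhd \Xi_\sigma$, which contradicts $\Xi_\sigma \lhd \Xi_{\sigma^*}$. This is the standard monotonicity argument of policy iteration transported to parity-game valuations: the valuation $\Xi_\sigma$ is a fixpoint of the evaluation induced by $\sigma$ together with the best response $\tau_\sigma$, and the hypothesis $\sigma^*(v) \preceq_\sigma \sigma(v)$ at every node says precisely that replacing $\sigma$ by $\sigma^*$ nowhere increases a node valuation, so $\Xi_\sigma$ is also a post-fixpoint for the operator associated with $\sigma^*$. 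By the monotonicity of that operator on Vöge and Jurdzi\'nski's valuation lattice \cite{conf/cav/VogeJ00}, such a post-fixpoint must dominate the fixpoint $\Xi_{\sigma^*}$, i.e.\ $\Xi_{\sigma^*} \unlhd \Xi_\sigma$. Combined with $\Xi_\sigma \lhd \Xi_{\sigma^*}$ this yields $\Xi_\sigma = \Xi_{\sigma^*}$, so $\sigma$ is non-improvable, contradicting our assumption and establishing the claim.

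With the claim in hand the greedy construction goes through verbatim: at each improvable $\sigma_i \neq \sigma^*$ the disagreement set $D_i$ is nonempty, the claim supplies a node $v_i \in D_i$ with $(v_i,\sigma^*(v_i))$ a $\sigma_i$-improving switch, and applying it both makes progress towards $\sigma^*$ and uses a previously unused edge. I expect the bookkeeping for distinctness and the $N \leq n$ bound to be routine; the only genuine content is the lattice-monotonicity step in the proof of the claim.
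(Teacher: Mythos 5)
Your proposal is correct and follows exactly the route the paper itself indicates: the paper offers no detailed proof, only the parenthetical remark that ``there is at least one improving switch towards the optimal strategy in each step,'' and your argument is precisely that idea made rigorous (fix $\sigma^*$, repair one disagreement per step, justify the existence of an improving switch toward $\sigma^*$ via the standard V\"oge--Jurdzi\'nski monotonicity argument). The only point deserving a second look is the final contradiction step, where the partial (rather than total) order on valuations makes ``$\Xi_\sigma \unlhd \Xi_{\sigma^*}$ and $\Xi_{\sigma^*} \unlhd \Xi_\sigma$ imply equality'' slightly less immediate than stated, but this is a technicality, not a gap in the approach.
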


Since all switches are different in the sequence, it follows immediately that
there is always a way to select an ordering that results in a linear number of pivoting
steps to solve a parity game with Cunningham's improvement rule. However, there is no
obvious method to efficiently find such an ordering. 
In order to derive a lower bound we are entitled to give both the input graph and the ordering
to be used.

\subsection{Lower Bound Construction}

Our lower bound construction is a natural generalization of the parity game $G_3$, shown in 
Figure~\ref{figure: example}, that we used throughout the last subsection.
For each $n \ge 3$ we define the underlying graph $G_n = (V_0,V_1,E,\Omega)$ 
as follows.
\begin{flalign*}
V_0 \;:=\;& \{a_i, c_i, d_i \mid 1 < i \leq n\} \cup \{b_i \mid 1 < i < n\} \cup \{e_i \mid 1 \leq i \leq n\} \\
V_1 \;:=\;& \{F_i \mid 1 \leq i \leq n\} \cup \{g_i,h_i \mid 1 \leq i \leq n\} \cup \{s,t\}
\end{flalign*}
Figure~\ref{table: parity game lower bound edges} defines the edge sets and the priorities of $G_n$.
For convenience of notation, we identify the node names
$a_{n+1}$ with $t$,
$b_1$ with $g_1$, and
$c_1$ with $g_1$.
Explicit constructions of $G_n$ for small $n$ are available online \cite{online}.

\begin{figure}[h]
\centering
\begin{tabular}[ht]{|c||c|c||c||c|c|}
  \hline
  Node & Successors & Priority & Node & Successors & Priority \\
  \hline
  \hline
  $a_i$ & $g_i$, $a_{i+1}$    & $3$		&		$F_i$ & $h_i$, $d_{i>1}$, $e_i$ & $6$ \\
  $b_i$ & $g_i$, $b_{i-1}$    & $3$ 	&		$g_i$ & $F_i$               & $2 \cdot i + 7$ \\
  $c_i$ & $g_i$, $c_{i-1}$    & $3$		&		$h_i$ & $a_{i+1}$           & $2 \cdot i + 8$ \\
  $d_i$ & $F_i$, $b_{i-1}$    & $5$		&		$s$   & $c_n$               & $8$ \\
  $e_i$ & $F_i$, $s$          & $5$		&		$t$   & $t$                 & $1$ \\
  \hline
\end{tabular}
\caption{{Parity Game Lower Bound Graph}}
\label{table: parity game lower bound edges}
\end{figure}

\begin{lemma}
For every $n$, the game $G_n$ is a binary sink parity game.
\end{lemma}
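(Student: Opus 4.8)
The statement packages three assertions: that $G_n$ is \emph{binary}, that it is a legitimate (total) parity game, and that it is a \emph{sink} game. The plan is to settle the first two by direct inspection of the edge table in Figure~\ref{table: parity game lower bound edges}, and then to obtain the sink property from Lemma~\ref{lemma: sink game lemma}, which---given sink existence---reduces it to the two conditions $W_1 = V$ and ``the cycle component of $\Xi_\theta$ is the sink at every node''. For the binary property one reads off that every node of $V_0$, i.e.\ each $a_i, b_i, c_i, d_i, e_i$, is listed with exactly two successors (using the identifications $a_{n+1}=t$ and $b_1=c_1=g_1$ to read the boundary cases), so every player~0 node has out-degree two. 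Totality is equally immediate, since each player~1 node $F_i, g_i, h_i, s, t$ also has a listed successor; in particular $F_1$ still has the two successors $h_1, e_1$. Hence $G_n$ is a binary parity game.

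Next I would verify \emph{sink existence} with $v^* = t$. The node $t$ has the self-loop $tEt$ and priority $\Omega(t) = 1$, and scanning the priority column shows every other priority is at least $3$, so $t$ is the unique node of minimal priority and no $w$ satisfies $\Omega(w) \le 1$. Reachability of $t$ from every node follows by tracing forward edges: the chain $a_i \to a_{i+1} \to \cdots \to a_{n+1} = t$ absorbs all $a$-nodes, while $g_i \to F_i \to h_i \to a_{i+1}$, $b_i \to g_i$, $c_i \to g_i$, $d_i \to F_i$, $e_i \to F_i$ and $s \to c_n$ all feed into this chain.

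To establish $W_1 = V$ I would exhibit a single winning strategy for player~1, namely $\tau^*(F_i) = h_i$ for all $i$, and show that fixing player~1 to $\tau^*$ yields a graph whose only cycle is the self-loop at $t$. The argument is that $a_i \to a_{i+1}$ and $h_i \to a_{i+1}$ strictly advance the $a$-index toward $t$, the $b$- and $c$-chains strictly \emph{decrease} their index until they reach $b_1 = c_1 = g_1$ and rejoin this forward flow, and $d_i, e_i, s$ each enter the flow within a bounded number of steps and never return. Making this precise with a potential $\phi$ that strictly decreases along every $\tau^*$-edge except $t \to t$ proves acyclicity apart from $t$; consequently every play conforming to $\tau^*$ eventually loops at $t$ with dominating priority $1$, player~1 wins from every node, and $W_1 = V$.

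The remaining and most delicate condition is that the cycle component of $\Xi_\theta(w)$ equals $t$ for every $w$, where $\theta$ is the prescribed initial strategy of the construction. The difficulty is that, although player~1 \emph{can} force the play into $t$, the full graph also admits cycles of strictly lower reward that player~1 would prefer---most conspicuously the loop $g_n \to F_n \to e_n \to s \to c_n \to g_n$, dominated by $g_n$ with reward $-(2n+7) < -1 = rew(t)$, as well as analogous loops closing through other $g_i$. The verification thus hinges on checking that under $\theta$ player~0's fixed choices break every such loop, so that the optimal counter-strategy $\tau_\theta$ cannot steer any play into a cycle other than $t$'s self-loop. Since $\theta$ is one explicit strategy this is a finite computation of $\Xi_\theta$, but I expect this case analysis---ruling out all reachable lower-reward cycles under $\theta$---to be the main obstacle; once it is done, Lemma~\ref{lemma: sink game lemma} closes the proof.
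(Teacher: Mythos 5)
The paper states this lemma without any proof at all---the text immediately moves on to counting nodes and edges---so there is no ``paper's proof'' to compare against; your proposal has to be judged on its own. On that basis it is essentially correct and follows the only sensible decomposition: binary-ness and totality by inspection of Figure~\ref{table: parity game lower bound edges}, sink existence with $v^*=t$ (self-loop, unique minimal priority $1$ against everything else being $\geq 3$, reachability via the $a$-ladder), $W_1=V$ via the counter-strategy $\tau^*(F_i)=h_i$ under which the only cycle is $t$'s self-loop, and then Lemma~\ref{lemma: sink game lemma} to reduce sink-seeking to a statement about $\Xi_\theta$ alone. The one place you stop short is the final check that the cycle component of $\Xi_\theta(w)$ is $t$ for every $w$, which you flag as ``the main obstacle'' and defer. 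It is in fact the easiest part: under the initial strategy $\theta=\{a_*^0,b_*^0,c_*^0,d_*^1,e_1^1,e_{*>1}^0\}$ the $b$- and $c$-ladders point strictly downward to $g_1$ and every $e_{i>1}$ points to $s$, so the low-reward loops you worry about (such as $g_n\to F_n\to e_n\to s\to c_n\to g_n$) are all broken by player~0's own choices, and the only cycles in the $\theta$-restricted graph besides $t\to t$ are the two-cycles $(F_i\,d_i)$ for $i>1$ and $(F_1\,e_1)$. Each of these is dominated by $F_i$ with even priority $6$, hence has reward $+6\succ -1=rew(t)$, so player~1's optimal response strictly prefers to escape via $F_i\to h_i\to a_{i+1}\to\cdots\to t$, which is always available; thus $\tau_\theta$ drives every play into $t$ and the cycle component is $t$ everywhere. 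With that two-line case analysis inserted, your proof is complete, and it is arguably a useful addition to the paper, which asserts the lemma without justification.
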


To avoid special cases we assume $n \ge 3$. It is easy to verify that
the total number of nodes is $G_n$ is
$8n - 3$, the total number of edges is $15n - 9$, the number of different priorities is 
$2n + 5$ and the highest priority is $2n + 8$. Therefore we have $|G_n| \in \BigO{n}$.

We refer to the edges of player~0 by the names given in Figure~\ref{table: parity game lower bound edge names}.
For convenience of notation, we write $\sigma(a_i) = j$ to indicate that $a_i^j \in \sigma$ etc.

\begin{figure}[h]
\centering
\begin{tabular}[ht]{|c||c|c||c||c|c|}
  \hline
  Name & Node & Successor & Name & Node & Successor \\
  \hline
  \hline
  $a_i^1$ & $a_i$ & $g_i$ & $a_i^0$ & $a_i$ & $a_{i+1}$ \\
  $b_i^1$ & $b_i$ & $g_i$ & $b_i^0$ & $b_i$ & $b_{i-1}$ \\
  $c_i^1$ & $c_i$ & $g_i$ & $c_i^0$ & $c_i$ & $c_{i-1}$ \\
  $d_i^1$ & $d_i$ & $F_i$ & $d_i^0$ & $d_i$ & $b_{i-1}$ \\
  $e_i^1$ & $e_i$ & $F_i$ & $e_i^0$ & $e_i$ & $s$ \\
  \hline
\end{tabular}
\caption{{Parity Game Lower Bound Player~0 Edges}}
\label{table: parity game lower bound edge names}
\end{figure}

From a high-level point of view, a run of the strategy improvement algorithm mimics the counting process 
of a binary counter, yielding an exponential number of steps. Obviously, the specifics of the run depend on our choice of the edge-ordering. Every strategy that we obtain during a run corresponds to the state of the binary counter. However, a single increment step of the binary counter corresponds to several consecutive improvements in strategy improvement. 
These intermediate steps can be partitioned into well-defined \emph{phases}.

Before describing the phases in detail, consider the layout of the game graph. It is separated into uniform layers that correspond to the different bits of the binary counter.
The first and the last bit have less nodes than all the other bits.
We could include the additional nodes in the game graph but they would be of no use for the counting process. Every (disregarding the first and the last) such layer contains five player~0 nodes $a_i$, $b_i$, $c_i$, $d_i$, $e_i$, and
three player~1 nodes $F_i$, $g_i$, $h_i$.

The general construction extends Figure \ref{figure: example} in a natural way.
We use terms such as up, down, left, right, etc. based on the layout used in the figure.
The $a_i$-nodes build up a ladder-like structure that connect layers with each other, starting from the 
least-significant to the most-significant bit. For every layer, player~0 has to the choice to either enter the layer or to 
directly pass on to the next layer. By making $g_i$ highly unprofitable and $h_i$ highly profitable, it follows that it will only be profitable for player~0 to enter a layer, if player~1 moves from $F_i$ to $h_i$.
This corresponds to a set bit. 

Player~0 can force player~1 to move from $F_i$ to $h_i$ by moving from both $d_i$ and $e_i$ to $F_i$.
This is due to the fact the game is a sink game and so the optimal counter-response to a strategy 
cannot be moving into any other cycle than $t$.

The other two remaining nodes of player~0, $b_i$ and $c_i$, build up ladder-like structures as well that connect layers 
with each other, but this time starting from the most-significant to the least-significant bit. The nodes $d_i$ and $e_i$ have direct access to these additional two ladders which will allow them to get reset, corresponding to unsetting
a set bit in the binary counter.

Next, we explain, from a high-level point of view, how the intermediate phases contribute to incrementing the counter. 
At beginning of phase 1 the switches have the following
settings depending
on the value in the binary counter.
All $a_i$ and $e_i$ point upwards if and only if bit $i$ is zero, all $d_i$ point 
to the left, 
and both ladder $b$ and $c$ move down to the 
least significant set bit in the counter. 
We initiate the binary counter at $00 \cdots 001$ which corresponds to the 
to the {\em initial strategy} $\{ a_*^0, b_*^0, c_*^0, d_*^1, e_1^1, e_{*>1}^0 \}$.

\begin{enumerate}
\item 
We apply improving switches in this phase 
s.t.\ all $e_i$ point left, exactly all those $d_i$ point 
to the left that correspond to a set bit in the current or in the next counter state, and update the ladder $b$ s.t.\ it moves down to least significant set bit in the next counter state.
\item In phase~2, we update the ladder $c$ s.t.\ it moves down to least significant set bit in the next counter state.
\item In phase~3, we apply improving switches s.t.\ exactly all those $e_i$ point left that correspond to a set bit in the next counter state.
\item In phase~4, we apply improving switches s.t.\ all $d_i$ point left.
\item In phase~5, we update the $a_i$ ladder to point to the right at exactly those 
bits which are set in the next counter state.
\end{enumerate}
After completing phase~5, the counter has been incremented by one and we can start with phase~1 again.

Phases 1-5 finish when the counter reaches all ones. 
The $a_i$ nodes point to the right, the $d_i$ and $e_i$ nodes point
to the left, the $c$ chain leads down to $g_2$, the $b$ chain leads
down to $g_1$ and all $F_i$ nodes point vertically up. 
This corresponds to the strategy $\{ a_*^1, b_*^0, c_*^0, d_*^1, e_*^1 \}$ with the exception that
$c_2^1$ is chosen instead of $c_2^0$. 
In fact replacing $c_2^1$ by $c_2^0$ is an improving switch and the only such switch.
Therefore it will be chosen by Algorithm~\ref{algorithm: cunningham}.
At this point
Player
0 has no improving switches and loses the game.
We call $\{ a_*^1, b_*^0, c_*^0, d_*^1, e_*^1 \}$
the {\em terminal strategy} of the game. 

Next we specify a total ordering of player 0 edges to be used in
Algorithm~\ref{algorithm: cunningham}.
\begin{displaymath}
\underbrace{\{b_*^*, d_*^0, e_*^1\}}_{\text{Phase 1}} \prec
\underbrace{\{c_*^*\}}_{\text{Phase 2}} \prec
\underbrace{\{e_*^0\}}_{\text{Phase 3}} \prec
\underbrace{\{d_*^1\}}_{\text{Phase 4}} \prec
\underbrace{\{a_*^*\}}_{\text{Phase 5}}
\end{displaymath}
The detailed ordering for every phase is as follows:
\begin{flalign}
&\text{Phase 1}: e_1^1 \prec d_2^0 \prec e_2^1 \prec b_2^1 \prec b_2^0 \prec \ldots \prec d_{n-1}^0 \prec e_{n-1}^1 \prec b_{n-1}^1 \prec b_{n-1}^0 \prec d_n^0 \prec e_n^1 \nonumber \\ 
&\text{Phases 2,3}:c_2^0 \prec c_2^1 \prec \ldots \prec c_n^0 \prec c_n^1,
    ~~~e_1^0 \prec e_2^0 \prec \ldots \prec e_n^0 
\label{ordering} \\
&\text{Phases 4,5}:d_2^1 \prec d_3^1 \prec \ldots \prec d_n^1,~~~~
a_n^1 \prec a_n^0 \prec \ldots \prec a_2^1 \prec a_2^0    \nonumber
\end{flalign}
For each $n \ge 3$ we define $P_n$ to be the sequence of improving switches generated by
Algorithm~\ref{algorithm: cunningham} following this edge ordering, starting at the initial strategy 
and ending at the terminal strategy.
Our goal will be to show that the sequence $P_n$ has exponential length in $n$. 

Before proceeding with the proof let us apply Algorithm~\ref{algorithm: cunningham}
to the example in Figure~\ref{figure: example}. The edge ordering is
\[
e_1^1, d_2^0, e_2^1, b_2^1, b_2^0, d_3^0, e_3^1, c_2^0, c_2^1, c_3^0, c_3^1, e_1^0,
e_2^0, e_3^0, d_2^1, d_3^1, a_3^1, a_3^0, a_2^1, a_2^0
\]

Figure \ref{figure: example}
corresponds to the initial state $001$ of the binary counter,
The current strategy for player 0 is shown by the blues edges, and that
for player 1 by the red edges. 
Improving edges for player 0 are shown
in dotted green 
and the other non-strategy edges are shown in dotted black.
(Coloured edges show in bold on monochromatic printing.)

Note that since bits 2 and 3 of the counter (reading from right to left)
are set to zero the strategy edges for $a_2, a_3, e_2$ and $e_3$ all point upwards.
Each $d_i$ edge points to $F_i$, and the $b$ and $c$ ladders point
to the first bit, which is the least bit set. The improving edges
for player zero are $e_2^1, e_3^1$ and $e_2^1$ is chosen as it comes 
first in order. Since player 0 stands to win on the infinite loop
$(e_2 F_2)$ player 1 counters by changing the strategy on $F_2$ to
point to $h_2$. The first five phases involve nine player 0 moves
and are given in Figure \ref{table: simulation}.
Note in some cases player 1 does not make a response as the position
is still winning for him.
At the end of the nine moves the counter has moved to $010$
and we are back to the settings required to initiate Phase 1.

\begin{figure}[h]
\label{simulation}
\centering
\begin{tabular}[ht]{|c|c|c|c|}
  \hline
  Phase& Improving & Selected & Player 1 \\
       & Edges & Edge & Response \\
  \hline
  \hline
  $1$ & $e_2^1, e_3^1$ & $e_2^1$ & $F_2 h_2$  \\
  $1$ & $a_2^1, b_2^1, c_2^1, e_3^1$ & $b_2^1$ &   \\
  $1$ & $a_2^1, c_2^1, d_3^0, e_3^1$ & $d_3^0$ &   \\
  $1$ & $a_2^1, c_2^1, e_3^1$ & $e_3^1$ & $F_3 d_3$  \\
  $2$ & $a_2^1, c_2^1, d_3^1$ & $c_2^1$ &   \\
  $3$ & $a_2^1, d_3^1, e_1^0, e_3^0$ & $e_3^0$ &   \\
  $3$ & $a_2^1, d_3^1, e_1^0$ & $e_1^0$ &   \\
  $4$ & $a_2^1, d_3^1$ & $d_3^1$ & $F_3 e_3$  \\
  $5$ & $a_2^1, e_3^1$ & $a_2^1$ & $F_1 e_1$  \\
  \hline
\end{tabular}
\caption{{Binary counter moving from 001 to 010}}
\label{table: simulation}
\end{figure}

Continuing in this fashion for a total of 36 improving 
switches by player 0
we arrive at the terminating position where she loses from each node
and has no further improving edges. A complete simulation is given on
the website \cite{online}.

\subsection{Lower Bound Proof}

In this section we prove the fundamental result of the paper.

\begin{theorem} 
\label{lbthm} The sequence $P_n$ of improving switches
followed by Algorithm \ref{algorithm: cunningham} 
in $G_n$ from the initial strategy to the terminal strategy using the ordering 
(\ref{ordering})
has length at least $2^n$ where $G_n$ has size $O(n)$.
\end{theorem}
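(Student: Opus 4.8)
The plan is to establish the exponential lower bound by proving a precise invariant: that the run $P_n$ of Algorithm~\ref{algorithm: cunningham} faithfully simulates an $(n{-}1)$-bit binary counter, so that each of the $2^{n-1}$ counter states is realized by a distinct player~0 strategy along the run. Since strategy improvement produces a strictly $\lhd$-increasing sequence of game valuations by the Jurdzi\'nski--V\"oge theorem, no strategy can repeat; hence realizing $2^{n-1}$ distinct states forces at least $2^{n-1}$ strategies, and since each increment costs several switches, the total length of $P_n$ is at least $2^n$. Because $|G_n|\in\BigO{n}$, this yields the claimed exponential gap.

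First I would formalize the correspondence between a counter value $z\in\{0,1\}^{n-1}$ and a canonical player~0 strategy $\sigma_z$, encoding ``bit $i$ set'' by the configuration of the $a_i$, $e_i$, $d_i$ edges and the positions of the $b$- and $c$-ladders, exactly as sketched in the phase description: the $a_i$ and $e_i$ edges point into layer $i$ iff bit $i$ is set, the $d_i$ point left, and both ladders descend to the least significant set bit. The initial strategy $\{a_*^0,b_*^0,c_*^0,d_*^1,e_1^1,e_{*>1}^0\}$ corresponds to $z=0\cdots01$ and the terminal strategy to $z=1\cdots1$. The heart of the argument is then a single induction showing that, starting from the canonical strategy for counter value $z$, Algorithm~\ref{algorithm: cunningham} driven by the ordering~(\ref{ordering}) passes through phases~1--5 and arrives precisely at the canonical strategy for $z{+}1$. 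Within each phase I would determine the exact set $I_\sigma$ of improving switches, verify that the successor operator $\succedge{\prec}{e}{I_\sigma}$ selects the intended edge, and track player~1's optimal counter-strategy $\tau_\sigma$ via the game valuations $\Xi_\sigma$, which I may compute using the sink-game simplification (identifying valuations with their path components, since the cycle component is always $t$).

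The main obstacle, and where the bulk of the work lies, is the per-phase bookkeeping of improving switches: I must show that at each step the only edges the ordering is allowed to select are exactly those prescribed, and that no spurious improving switch from an unrelated layer can preempt the round-robin choice or corrupt the counter encoding. This requires careful comparison of node valuations under $\prec$, exploiting that priorities are unique on each cycle, that $g_i$ is made highly unprofitable while $h_i$ is highly profitable (so entering layer $i$ is improving only once player~1 has been forced from $F_i$ to $h_i$), and that player~1's best response into the sink forces the intended ladder resets. The phase ordering $\{b_*^*,d_*^0,e_*^1\}\prec\{c_*^*\}\prec\{e_*^0\}\prec\{d_*^1\}\prec\{a_*^*\}$ must be shown to cooperate with the valuations so that switches fire in the stated sequence; carry propagation in the counter corresponds to the cascading resets of the $b$- and $c$-ladders triggered by $d_i^0$ and $e_i^1$ switches in phase~1.

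Finally, once the increment lemma is established, the count follows cleanly. Each full increment from $z$ to $z{+}1$ consumes a number of switches that is at least one (indeed the total over all increments is at least $2^n$ once the per-phase switch counts are summed, since the number of switches per increment is on average at least two), and the $2^{n-1}$ distinct canonical strategies visited are pairwise distinct by strict monotonicity of $\lhd$. I would close by summing the phase-wise switch counts across all increments to obtain the bound $|P_n|\geq 2^n$, and observe that since $G_n$ has $8n-3$ nodes and $15n-9$ edges, the construction has size $\BigO{n}$, completing the proof.
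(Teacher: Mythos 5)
Your plan follows essentially the same route as the paper's proof: it encodes counter states as canonical (phase-wise) strategies, computes the set of improving switches in each phase, proves an increment lemma showing that the round-robin ordering drives the run from the strategy for $\bit$ to the strategy for $\bit^\oplus$ through phases 1--5, and then sums switch counts using strict monotonicity of the valuations. The only correction is bookkeeping: the construction realizes an $n$-bit counter (one bit per layer $i=1,\dots,n$), not an $(n-1)$-bit one, and with $2^n-2$ increments each contributing at least two switches (phase~1 and phase~2 alone suffice) the bound $|P_n|\geq 2^n$ follows without needing your averaging argument.
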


First, we introduce notation to succinctly describe binary counters. It will be
convenient for us to consider counter configurations with an \emph{infinite}
tape, where unused bits are zero. The set of $n$-bit configurations is
formally defined as $\bitset_n = \{\bit \in \{0,1\}^\infty \mid \forall i>n: \bit_i = 0\}$.

We start with index one, i.e.\ $\bit \in \bitset_n$ is essentially a tuple
$(\bit_{n},\ldots,\bit_1)$, with $\bit_1$ being the least and $\bit_n$ being
the most significant bit. By $\boldzero$, we denote the configuration in which
all bits are zero, and by $\boldone_n$, we denote the configuration in which
the first $n$ bits are one.
We write $\bitset = \bigcup_{n>0} \bitset_n$ to denote the set of all counter
configurations. 

The \emph{integer value} of a $\bit \in \bitset$ is defined as
usual, i.e.\ $|\bit| := \sum_{i>0} \bit_i \cdot 2^{i-1} < \infty$.
For two $\bit, \bit' \in \bitset$, we induce the lexicographic linear ordering
$\bit < \bit'$ by $|\bit| < |\bit'|$.
It is well-known that $\bit \in \bitset \mapsto |\bit| \in \NN$ is a bijection.
For $\bit \in \bitset$ let $\bit^\oplus$ denote the unique $\bit'$
s.t.\ $|\bit'| = |\bit| + 1$.

Given a configuration $\bit$, we access the 
\emph{least unset bit} by $\nub(\bit) = \min \{j \mid \bit_j = 0\}$
and the \emph{least set bit} by $\nsb(\bit) = \min \{j \mid \bit_j = 1\}$.
Let $\bit^\nub$ denote $\bit[\nub(\bit) \mapsto 1]$.

We are now ready to formulate the conditions for strategies that fulfill one of
the five phases along with the improving edges. See Figure~\ref{table: phases}
for a complete description (with respect to a bit configuration $\bit$). We say
that a strategy $\sigma$ is a \emph{phase $p$ strategy with configuration $\bit$} iff
every node is mapped by $\sigma$ to a choice included
in the respective cell of the table.

\begin{figure}[h]
\center
\begin{tabular}[ht]{|l||c|c|c|c|c|c|}
\hline
Phase & 1: $\exists j \geq \nub(\bit), k \in \{1,2,3\}$ & 2 & 3 & 4 & 5: $\exists j \leq \nub(\bit)$ \\
\hline
\hline
$\sigma(a_i)$ & $\bit_i$ & $\bit_i$ & $\bit_i$ & $\bit_i$ & $\twocasefunction{\bit_i}{\bit^\oplus_i}{$i \leq j$}{$i>j$}$ \\
$\sigma(b_i)$ & $\twocaseotherwisefunction{\id_{i = \nub(\bit)}}{\id_{i = \nsb(\bit)}}{$j > i$}$ & $\id_{i = \nub(\bit)}$ & $\id_{i = \nub(\bit)}$ & $\id_{i = \nub(\bit)}$ & $\id_{i = \nub(\bit)}$ \\
$\sigma(c_i)$ & $\id_{i = \nsb(\bit)}$ & $\id_{i = \nsb(\bit)}$ & $\id_{i = \nub(\bit)}$ & $\id_{i = \nub(\bit)}$ & $\id_{i = \nub(\bit)}$ \\
$\sigma(d_i)$ & $\twocaseotherwisefunction{\bit^\nub_i}{1}{$j > i \vee (j=i \wedge k > 1)$}$ & $\bit^\nub_i$ & $\bit^\nub_i$ & 1, $\bit^\nub_i$ & 1 \\
$\sigma(e_i)$ & $\twocaseotherwisefunction{1}{\bit_i}{$j > i \vee (j=i \wedge k > 2)$}$ & 1 & 1, $\bit^\oplus_i$ & $\bit^\oplus_i$ & $\bit^\oplus_i$ \\
\hline	
\end{tabular}
\caption{Policy Phases}
\label{table: phases}
\end{figure}

The following lemma computes an optimal counter-strategy along with the associated
valuations of the nodes $F_i$ given a strategy $\sigma$ belonging to one of the
phases.

\begin{lemma}\label{lemma: valuation of player one nodes}
Let $n \ge 3$ and $\sigma$ be a strategy belonging to one of the phases w.r.t.\ $\bit$.
Let $k = \max (\{i \mid \sigma(a_i) \not= \bit_i^\oplus\} \cup \{1\})$.
Then the following holds: $(\tau_\sigma(F_i), \Xi^{>6}_\sigma(F_i)) =$
\begin{displaymath}
\begin{cases}
	(h_i, \{g_j, h_j \mid \bit_j = 1, j > i\} \cup \{h_i\}) & \text{ if } \sigma(d_i) = \sigma(e_i) = 1 \\
	(e_i, \{g_j, h_j \mid \bit_j = 1\} \cup \{s\}) & \text{ if } \sigma(e_i) = 0 \wedge P=1 \\
	(d_i, \{g_j, h_j \mid \bit^\oplus_j = 1\}) & \text{ if } \sigma(d_i) = 0 \wedge (\sigma(e_i) = 1 \vee P\not=1) \\
	(e_i, \{g_j, h_j \mid \bit^\oplus_j = 1\} \cup \{s\}) & \text{ if } \sigma(e_i) = 0 \wedge \sigma(d_i) = 1 \wedge P\not=1 \wedge i \geq k \\
	(h_i, \{g_j, h_j \mid \bit_j = 1, j > i\} \cup \{h_i\}) & \text{ if } \sigma(e_i) = 0 \wedge \sigma(d_i) = 1 \wedge P\not=1 \wedge i < k = \nub(\bit) \\
	(h_i, \{g_j, h_j \mid \bit^\oplus_j = 1 \vee i < j < k\} \cup \{s,h_i,g_k\}) & \text{ if } \sigma(e_i) = 0 \wedge \sigma(d_i) = 1 \wedge P\not=1 \wedge i < k < \nub(\bit)
\end{cases}
\end{displaymath}
\end{lemma}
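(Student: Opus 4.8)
The plan is to exploit that $G_n$ is a sink game, so by the remark following Lemma~\ref{lemma: sink game lemma} the cycle component of every valuation equals $t$ and comparisons reduce to the path components restricted to nodes of priority exceeding $6$; these are exactly the $g_j$ (priority $2j+7$, odd), the $h_j$ (priority $2j+8$, even) and $s$ (priority $8$, even), which is why the claimed sets live in $\{g_j,h_j\}\cup\{s\}$. I would first read the candidate counter-strategy $\tau_\sigma$ off the six cases (namely $\tau_\sigma(F_i)\in\{h_i,e_i,d_i\}$ as the first coordinate indicates) and record the forcing observation that whenever $\sigma(e_i)=1$ or $\sigma(d_i)=1$ the corresponding edge loops straight back to $F_i$, producing a cycle of maximal even priority $6$; since player~1 wins everywhere in a sink game, such an edge is never part of an optimal response. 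This already matches the candidate in every case.

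The second step is to verify that $(\sigma,\tau_\sigma)$ induces precisely the claimed valuations by \emph{unfolding the deterministic play} from each $F_i$, which is legitimate (no circularity) because the play is unique and reaches the sink in finitely many steps. I would isolate three transit facts, each a direct read-off of the $a$-, $b$- and $c$-rows of Figure~\ref{table: phases}: (i) from $h_i$ the forced edge $h_i\to a_{i+1}$ enters the $a$-ladder, where $\sigma(a_j)=\bit_j$ (phases 1--4) or the phase-5 variant, so the play climbs, collecting $h_i$ and every $g_j,h_j$ with $\bit_j=1$, $j>i$, until it meets a set bit $j$ with $a_j\to g_j\to F_j$; (ii) from $e_i$ with $\sigma(e_i)=0$ the play goes $e_i\to s\to c_n$ and then descends the $c$-ladder (governed by $\id_{i=\nsb(\bit)}$ in phases 1--2 and $\id_{i=\nub(\bit)}$ in phases 3--5) to some $c_m\to g_m\to F_m$, collecting $s$; (iii) from $d_i$ with $\sigma(d_i)=0$ the play enters the $b$-ladder (governed by the phase-1 piecewise rule or $\id_{i=\nub(\bit)}$ otherwise) and descends to some $b_m\to g_m\to F_m$. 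Chaining these, the $h_i$-branch telescopes to $\{g_j,h_j\mid\bit_j=1,\,j>i\}\cup\{h_i\}$, and the $e_i$- and $d_i$-branches produce the sets featuring $\{g_j,h_j\mid\bit_j=1\}$ (phase~1, current configuration) or $\{g_j,h_j\mid\bit^\oplus_j=1\}$ (otherwise, incremented configuration), matching the six cases once the boundary identifications $a_{n+1}=t$ and $b_1=c_1=g_1$ are taken into account.

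The third step is optimality: I would show $\tau_\sigma$ admits no player-1-improving switch, which by the valuation theory underlying Lemma~\ref{update in poly time} characterizes the optimal counter-strategy. Concretely, at each $F_i$ I compare the valuation of the claimed edge against the valuations obtained by the other two admissible edges, where the competitors' values are exactly the (already computed) valuations of $h_i$-, $e_i$- or $d_i$-transit. Each comparison reduces to locating the highest priority at which the two path-sets differ and applying the parity rule ($\prec$-smaller for player~1 means an extra odd-priority node, or one fewer even-priority node, at the dominant index).

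The main obstacle will be exactly this optimality case analysis, and within it the phase-dependent split. The delicate points are: showing that in phase~1 the $e_i$-branch (which collects the \emph{current} set bits via the $\nsb$-indexed $c$-ladder) beats $h_i$, whereas outside phase~1 the descent collects the \emph{incremented} bits $\bit^\oplus$; and the final two cases, where the $a$-ladder from $h_i$ overshoots past the least unset bit, so that the interaction of $i<k=\nub(\bit)$ versus $i<k<\nub(\bit)$ determines whether the dominant difference is an $h$-node (favouring $h_i$) or the combination $\{s,h_i,g_k\}$ together with the intermediate indices $i<j<k$. Verifying that the claimed edge is $\preceq$ its two competitors in precisely the regime each case stipulates, and not elsewhere, is the crux; the remaining work is the routine but lengthy bookkeeping of highest-differing-priority comparisons and the small-$i$ and top-layer boundary checks.
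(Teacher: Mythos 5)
Your proposal is correct and follows essentially the same route as the paper's proof: exploit the sink-game property to rule out the self-looping edges at $F_i$, compute the $>6$ path components by tracing the unique play along the $a$-, $b$- and $c$-ladders as dictated by the phase table, and pin down $\tau_\sigma(F_i)$ by comparing the valuations of $h_i$, $d_i$ and $e_i$ (the paper does these comparisons inline phase by phase, and handles your "delicate" $i<k$ cases by exactly the backward induction on $i$ you anticipate). The only cosmetic difference is that you frame the last step as verifying absence of player-1 improving switches for a guessed $\tau_\sigma$, whereas the paper derives $\tau_\sigma$ directly from the same local comparisons.
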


\begin{proof}
Let $n \ge 3$ and $\sigma$ be a strategy belonging to one of the phases w.r.t.\ $\bit$.
Let $k = \max (\{i \mid \sigma(a_i) \not= \bit_i^\oplus\} \cup \{1\})$.
\begin{itemize}
\item \emph{Phase 1}:
	First, if $\sigma(d_i) = \sigma(e_i) = 1$, it follows that $\tau_\sigma(F_i) = h_i$
	since we have a sink game.
	Furthermore it follows by definition of phase~1 that
	\begin{displaymath}
	\Xi^{>6}_\sigma(F_i) = \{g_j, h_j \mid \bit_j = 1, j > i\} \cup \{h_i\}
	\end{displaymath}
	
	Otherwise, $i \geq \nub(\bit)$. It follows by definition of phase~1 that
	\begin{flalign*}
	\Xi^{>6}_\sigma(s) &= \{g_j, h_j \mid \bit_j = 1\} \cup \{s\} \\
	\Xi^{>6}_\sigma(h_i) &= \{g_j, h_j \mid \bit_j = 1, j > i\} \cup \{h_i\}
	\end{flalign*}
	and if $\sigma(d_i) = 0$ that
	\begin{displaymath}
	\Xi^{>6}_\sigma(s) = \{g_j, h_j \mid \bit^\oplus_j = 1\}
	\end{displaymath}
	
	Second, if $\sigma(d_i) = 1$ and $\sigma(e_i) = 0$, it follows that
	$e_i \prec_\sigma h_i$ and therefore $\tau_\sigma(F_i) = e_i$ and
	\begin{displaymath}
	\Xi^{>6}_\sigma(F_i) = \{g_j, h_j \mid \bit_j = 1\} \cup \{s\}
	\end{displaymath}
	
	Third, if $\sigma(d_i) = 0$ and $\sigma(e_i) = 1$, it follows that
	$d_i \prec_\sigma h_i$ and therefore $\tau_\sigma(F_i) = d_i$ and
	\begin{displaymath}
	\Xi^{>6}_\sigma(F_i) = \{g_j, h_j \mid \bit^\oplus_j = 1\}
	\end{displaymath}
	
	Fourth, if $\sigma(d_i) = \sigma(e_i) = 0$, it follows that
	$e_i \prec_\sigma d_i \prec_\sigma h_i$ and therefore $\tau_\sigma(F_i) = e_i$ and
	\begin{displaymath}
	\Xi^{>6}_\sigma(F_i) = \{g_j, h_j \mid \bit_j = 1\} \cup \{s\}
	\end{displaymath}

\item \emph{Phase 2}:
	First, if $\bit_i^\nub = 1$, it follows that $\sigma(d_i) = \sigma(e_i) = 1$ by
	definition of phase~2. Since we have a sink game, it follows that $\tau_\sigma(F_i) = h_i$.
	Furthermore it follows by definition of phase~2 that
	\begin{displaymath}
	\Xi^{>6}_\sigma(F_i) = \{g_j, h_j \mid \bit_j = 1, j > i\} \cup \{h_i\}
	\end{displaymath}
	
	Second, if $\bit_i^\nub = 0$, if follows that $i > \nub(\bit)$ and from
	definition of phase~2 that $\sigma(d_i) = 0$, $\sigma(e_i) = 1$, and
	\begin{flalign*}
	\Xi^{>6}_\sigma(h_i) &= \{g_j, h_j \mid \bit_j = 1, j > i\} \cup \{h_i\} \\
	\Xi^{>6}_\sigma(b_{i-1}) &= \{g_j, h_j \mid \bit^\oplus_j = 1\}
	\end{flalign*}
	Hence, $b_{i-1} \prec_\sigma h_i$ and therefore $\tau_\sigma(F_i) = d_i$
	and
	\begin{displaymath}
	\Xi^{>6}_\sigma(F_i) = \{g_j, h_j \mid \bit^\oplus_j = 1\}
	\end{displaymath}

\item \emph{Phase 3 \& 4}:
	First, if $\bit_i^\oplus = 1$, it follows that $\sigma(d_i) = \sigma(e_i) = 1$ by
	definition of phase~3 and phase~4. Since we have a sink game, it follows that $\tau_\sigma(F_i) = h_i$.
	Furthermore it follows by definition of phase~3 and phase~4 that
	\begin{displaymath}
	\Xi^{>6}_\sigma(F_i) = \{g_j, h_j \mid \bit_j = 1, j > i\} \cup \{h_i\}
	\end{displaymath}
	
	Second, if $\bit_i^\nub = 0$, it follows that $i > \nub(\bit)$. By definition
	of phase~3 and phase~4, we have
	\begin{flalign*}
	\Xi^{>6}_\sigma(h_i) &= \{g_j, h_j \mid \bit_j = 1, j > i\} \cup \{h_i\} \\
	\Xi^{>6}_\sigma(b_{i-1}) &= \{g_j, h_j \mid \bit^\oplus_j = 1\} \\
	\Xi^{>6}_\sigma(s) &=  \{g_j, h_j \mid \bit^\oplus_j = 1\} \cup \{s\}
	\end{flalign*}
	It follows that $b_{i-1} \prec_\sigma s \prec_\sigma h_i$ and hence
	$\tau_\sigma(F_i) = d_i$ and
	\begin{displaymath}
	\Xi^{>6}_\sigma(F_i) = \{g_j, h_j \mid \bit^\oplus_j = 1\}
	\end{displaymath}
	if $\sigma(d_i) = 0$, and
	$\tau_\sigma(F_i) = e_i$ and
	\begin{displaymath}
	\Xi^{>6}_\sigma(F_i) = \{g_j, h_j \mid \bit^\oplus_j = 1\} \cup \{s\}
	\end{displaymath}
	if $\sigma(d_i) = 1$.
	
	Third, if $i < \nub(\bit) = k$, we show by backward induction on $i$ that
	$\tau_\sigma(F_i) = h_i$, which implies that
	\begin{displaymath}
	\Xi^{>6}_\sigma(F_i) = \{g_j, h_j \mid \bit_j = 1, j > i\} \cup \{h_i\}
	\end{displaymath}
	By induction hypothesis or by considering the base case $i = k - 1$ directly,
	we have
	\begin{displaymath}
	\Xi^{>6}_\sigma(h_{i}) = \{g_j, h_j \mid \bit_j = 1, j > i\} \cup \{h_{i}\}
	\end{displaymath}
	and therefore $h_{i} \prec_\sigma s$, implying $\tau_\sigma(F_{i}) = h_{i}$.
		
\item \emph{Phase 5}:
	First, if $\bit_i^\oplus = 1$, it follows that $\sigma(d_i) = \sigma(e_i) = 1$ by
	definition of phase~5. Since we have a sink game, it follows that $\tau_\sigma(F_i) = h_i$.
	Furthermore it follows by definition of phase~5 that
	\begin{displaymath}
	\Xi^{>6}_\sigma(F_i) = \{g_j, h_j \mid \bit_j = 1, j > i\} \cup \{h_i\}
	\end{displaymath}
	
	Second, if $\bit_i^\oplus = 0$ and $i \geq k$, it follows by definition of phase~5
	that
	\begin{flalign*}
	\Xi^{>6}_\sigma(h_i) &= \{g_j, h_j \mid \bit_j^\oplus = 1, j > i\} \cup \{h_i\} \\
	\Xi^{>6}_\sigma(s) &= \{g_j, h_j \mid \bit^\oplus_j = 1\} \cup \{s\}
	\end{flalign*}
	It follows that $s \prec_\sigma h_i$ and hence $\tau_\sigma(F_i) = e_i$ and
	\begin{displaymath}
	\Xi^{>6}_\sigma(F_i) = \{g_j, h_j \mid \bit^\oplus_j = 1\} \cup \{s\}
	\end{displaymath}
	
	Third, if $i < \nub(\bit) = k$, we show by backward induction on $i$ that
	$\tau_\sigma(F_i) = h_i$, which implies that
	\begin{displaymath}
	\Xi^{>6}_\sigma(F_i) = \{g_j, h_j \mid \bit_j = 1, j > i\} \cup \{h_i\}
	\end{displaymath}
	By induction hypothesis or by considering the base case $i = k - 1$ directly,
	we have
	\begin{displaymath}
	\Xi^{>6}_\sigma(h_{i}) = \{g_j, h_j \mid \bit_j = 1, j > i\} \cup \{h_{i}\}
	\end{displaymath}
	and therefore $h_{i} \prec_\sigma s$, implying $\tau_\sigma(F_{i}) = h_{i}$.

	Fourth, if $i < k < \nub(\bit)$, we show by backward induction on $i$ that
	$\tau_\sigma(F_i) = h_i$, which implies that
	\begin{displaymath}
	\Xi^{>6}_\sigma(F_i) = \Xi^{>6}_\sigma(s) \cup \{g_{j+1}, h_j \mid i \leq j < k\}
	\end{displaymath}
	By induction hypothesis or by considering the base case $i = k - 1$ directly,
	we have
	\begin{displaymath}
	\Xi^{>6}_\sigma(h_{i}) = \Xi^{>6}_\sigma(s) \cup \{g_{j+1}, h_j \mid i < j < k\} \cup \{h_i, g_{i+1}\}
	\end{displaymath}
	and therefore $h_{i} \prec_\sigma s$, implying $\tau_\sigma(F_{i}) = h_{i}$.
	
\end{itemize}
\qed
\end{proof}

Figure~\ref{table: improving switches} specifies the sets of improving switches 
for each phase $p$. It should be read as follows: an edge $e$ is included in the
set of improving switches $I(\sigma)$ iff $e \not\in \sigma$ and the 
condition holds that is specified in the respective cell. If a cell contains a
question mark, we do not specify whether the edge is included in the set.

\begin{figure}[h]
\center
\begin{tabular}[ht]{|l||c|c|c|c|c|c|}
\hline
Phase & 1 & 2 & 3 & 4 & 5 \\
\hline
\hline
$a_i^1 \in I(\sigma)$ & ? & ? & ? & ? & $i = \nub(\bit)$ \\
$a_i^0 \in I(\sigma)$ & ? & ? & ? & ? & $i<\nub(\bit) \wedge \sigma(a_{i+1}) = \bit_i^\oplus$ \\
$b_i^1 \in I(\sigma)$ & $i=\nub(\bit) \wedge \sigma(e_i)=1$ & ? & ? & ? & ? \\
$b_i^0 \in I(\sigma)$ & $i>\nub(\bit) \wedge \sigma(e_{\nub(\bit)})=1$ & ? & ? & ? & ? \\
$c_i^1 \in I(\sigma)$ & ? & $i=\nub(\bit)$ & ? & ? & ? \\
$c_i^0 \in I(\sigma)$ & ? & $i\not=\nub(\bit)$ & ? & ? & ? \\
$d_i^1 \in I(\sigma)$ & ? & ? & ? & Yes & ? \\
$d_i^0 \in I(\sigma)$ & $\sigma(b_{\nub(\bit)})=1 \wedge \sigma(b_{\nsb(\bit)}) = 0$ & ? & ? & ? & ? \\
$e_i^1 \in I(\sigma)$ & Yes & ? & ? & ? & ? \\
$e_i^0 \in I(\sigma)$ & ? & ? & $\bit^\oplus_i = 0$ & ? & ? \\
\hline	
\end{tabular}
\caption{Improving Switches}
\label{table: improving switches}
\end{figure}

\noindent We finally arrive at the following main lemma describing the improving switches.

\begin{lemma}\label{lemma: improving switches}
Let $n \ge 3$.
The improving switches from policies that belong to the phases in Figure~\ref{table: phases}
are as specified in Figure~\ref{table: improving switches}.
\end{lemma}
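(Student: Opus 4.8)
The plan is to build directly on Lemma~\ref{lemma: valuation of player one nodes}, which supplies $\tau_\sigma(F_i)$ and $\Xi^{>6}_\sigma(F_i)$ for every phase strategy $\sigma$, and to turn the characterization of improving switches into a sequence of valuation comparisons. Since every game under consideration is a sink game, the cycle and length components are fixed, so comparing two valuations reduces to comparing their path components. Moreover, among the nodes of priority exceeding $6$ — namely $s$ (priority $8$), the $g_i$ (odd priority $2i+7$) and the $h_i$ (even priority $2i+8$) — all priorities are pairwise distinct. Hence $M \prec N$ holds exactly when the highest-priority node in the symmetric difference $M \triangle N$ is an even node ($s$ or some $h_i$) lying in $N$, or an odd node ($g_i$) lying in $M$. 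I would isolate this reduction as the single \emph{comparison principle} used throughout.

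First I would compute the derived valuations of all non-$F_i$ nodes from the $F_i$-valuations. The forced and player~$1$ nodes are immediate: $\Xi_\sigma(g_i)=\{g_i\}\cup\Xi_\sigma(F_i)$ since $g_i$ has unique successor $F_i$; $\Xi_\sigma(s)=\{s\}\cup\Xi_\sigma(c_n)$; and $\Xi_\sigma(h_i)=\{h_i\}\cup\Xi_\sigma(a_{i+1})$. For the three ladders I would observe that, under any phase strategy, the $b$- and $c$-chains descend monotonically and turn up into a single $g$-node (at index $\nub(\bit)$ for $b$, and at $\nsb(\bit)$ or $\nub(\bit)$ for $c$ according to the phase), whereas the $a$-chain climbs to the first index $j\ge i$ with $\sigma(a_j)=1$, or to the sink $t$. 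Because every ladder node has priority $3$, it disappears under the $\Xi^{>6}$-filter, and each ladder valuation collapses to $\{g_{j^*}\}\cup\Xi^{>6}_\sigma(F_{j^*})$ for the appropriate turning index $j^*$. Substituting the six cases of Lemma~\ref{lemma: valuation of player one nodes} then yields closed forms for $\Xi^{>6}_\sigma(a_i)$, $\Xi^{>6}_\sigma(b_i)$, $\Xi^{>6}_\sigma(c_i)$ and $\Xi^{>6}_\sigma(s)$.

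With these in hand, an edge $v^{b}\notin\sigma$ is improving iff the valuation of its target strictly dominates that of the current target $\sigma(v)$, and for each of the ten edge types this becomes a comparison between two of the closed forms above. I expect two qualitatively different mechanisms. The switches pointing into $F_i$, namely $d_i^1$ and $e_i^1$, are improving in the phases where the table asserts them — Phase~$4$ for $d_i^1$ and Phase~$1$ for $e_i^1$: there the current optimal counter routes $F_i$ straight back through the very node being switched ($\tau_\sigma(F_i)=d_i$, resp.\ $e_i$), so the target valuation equals the current one augmented by the even priority-$6$ node $F_i$. Here the decisive node sits at priority $6$, so this is the one place where I must descend below the $\Xi^{>6}$-filter and argue with the full path component. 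All remaining switches — the three ladders $a_i^{b}$, $b_i^{b}$, $c_i^{b}$ and the resets $d_i^0$, $e_i^0$ — change which $g/h/s$-nodes occur on the route, hence are decided at a differing node of priority exceeding $6$ and can be settled at the $\Xi^{>6}$-level by the comparison principle.

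The main obstacle, and the bulk of the work, is Phase~$1$, where the strategy is genuinely in transition: the table parameters $j$ and $k$ split the $b$-, $d$- and $e$-nodes between the current configuration $\bit$ and the incremented ones $\bit^\oplus$, $\bit^\nub$. Here the comparisons hinge on the fact that $\bit$ and $\bit^\oplus$ first differ at index $\nub(\bit)$ (with all lower set bits cleared), so that the top node separating a ``current'' valuation from a ``next'' valuation is exactly $h_{\nub(\bit)}$ or $g_{\nub(\bit)}$, together with the presence or absence of $s$. Tracking when $\sigma(e_{\nub(\bit)})$ and the $b$-turn at $\nub(\bit)$ have already flipped is precisely what produces the side conditions $\sigma(e_i)=1$ and $\sigma(b_{\nub(\bit)})=1\wedge\sigma(b_{\nsb(\bit)})=0$ in the $b_i^{0/1}$ and $d_i^0$ rows. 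Once Phase~$1$ is settled, Phases~$2$–$5$ follow the same template with fixed, non-transitional configurations: I would verify each non-``?'' cell of Figure~\ref{table: improving switches} by substituting the closed forms and invoking the comparison principle, reading off the stated side condition as the location of the top differing node.
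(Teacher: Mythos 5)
Your proposal is correct and follows essentially the same route as the paper: it derives the valuations of the $g_i$, $s$, and ladder nodes from Lemma~\ref{lemma: valuation of player one nodes}, reduces each candidate switch to a comparison of $\Xi^{>6}$ path components via the highest differing priority, and correctly isolates the two cases ($d_i^1$ in Phase~4, $e_i^1$ in Phase~1) where the decision is made at the priority-$6$ node $F_i$ because $\tau_\sigma(F_i)$ routes back through the switched node, exactly as in the paper's $\Xi_\sigma(F_i)=\{F_i\}\cup\Xi_\sigma(d_i)$ argument. The extra scaffolding (the explicit comparison principle and the closed forms for the ladder valuations) only makes explicit what the paper leaves implicit.
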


\begin{proof}
Let $n \ge 3$ and $\sigma$ be a strategy belonging to one of the phases w.r.t.\ $\bit$.
\begin{itemize}
\item \emph{Phase 1}:
	It follows from Lemma~\ref{lemma: valuation of player one nodes} that
	\begin{flalign*}
	\Xi^{>6}_\sigma(g_i) &= \begin{cases}
		\{g_j, h_j \mid \bit_j = 1, j > i\} \cup \{g_i, h_i\} & \text{ if } \sigma(d_i) = \sigma(e_i) = 1 \\
		\{g_j, h_j \mid \bit_j = 1\} \cup \{g_i, s\} & \text{ if } \sigma(e_i) = 0\\
		\{g_j, h_j \mid \bit^\oplus_j = 1\} \cup \{g_i\} & \text{ if } \sigma(d_i) = 0 \wedge \sigma(e_i) = 1\\
	\end{cases}
	\end{flalign*}
	Since $\tau_\sigma(F_i) = e_i$ if $\sigma(e_i) = 0$, it immediately follows that $e_i^1$ is an improving switch.
	It can furthermore be easily observed that $d_i^0$ is an improving switch for $\bit_i = 0$, $i > \nsb(\bit)$ iff
	$\sigma(e_{\nsb(\bit)}) = 1$ and $\sigma(b_{\nsb(\bit)}) = 1$.

\item \emph{Phase 2}:
	It follows from Lemma~\ref{lemma: valuation of player one nodes} that
	\begin{displaymath}
	\Xi^{>6}_\sigma(g_i) = \begin{cases}
		\{g_j, h_j \mid \bit_j = 1, j > i\} \cup \{g_i, h_i\} & \text{ if } \bit_i^\nub = 1\\
		\{g_j, h_j \mid \bit^\oplus_j = 1\} \cup \{g_i\} & \text{ if } \bit_i^\nub = 0
	\end{cases}
	\end{displaymath}
	
	If $\bit_1 = 0$ and $\bit_2 = 1$, it follows that
	\begin{displaymath}
	\Xi^{>6}_\sigma(c_i) = \{g_j, h_j \mid \bit_j = 1\}
	\end{displaymath}
	It immediately follows that $c_2^0$ is the only improving switch w.r.t.\ $c$.
	
	Otherwise, if $\bit_1 = 0$ and $\bit_2 = 0$, it follows that
	\begin{displaymath}
	\Xi^{>6}_\sigma(c_i) = \begin{cases}
		\{g_j, h_j \mid \bit_j = 1\} & \text{ if } i \geq \nsb(\bit) \\
		\{g_j, h_j \mid \bit^\oplus_j\} & \text{ if } i < \nsb(\bit)
	\end{cases}
	\end{displaymath}
	It immediately follows that $c_{\nsb(\bit)}^0$ is the only improving switch w.r.t.\ $c$.

	Otherwise, if $\bit_1 = 1$, it follows that
	\begin{displaymath}
	\Xi^{>6}_\sigma(c_i) =
		\{g_j, h_j \mid \bit_j = 1\}
	\end{displaymath}
	Hence, it follows that $c_{\nub(bit)}^1$ is the only improving switch w.r.t.\ $c$.

\item \emph{Phase 3}:
	Let $i$ s.t.\ $\sigma(e_i) = 1$. It follows from Lemma~\ref{lemma: valuation of player one nodes}
	that
	\begin{flalign*}
	\Xi^{>6}_\sigma(F_i) &= \begin{cases}
		\{g_j, h_j \mid \bit_j = 1, j > i\} \cup \{h_i\} & \text{ if } \sigma(d_i) = 1\\
		\{g_j, h_j \mid \bit^\oplus_j = 1\} & \text{ if } \sigma(d_i) = 0
	\end{cases} \\
	\Xi^{>6}_\sigma(s) &= \{g_j, h_j \mid \bit^\oplus_j = 1\} \cup \{s\}
	\end{flalign*}
	Hence, we have $F_i \prec_\sigma s$.

\item \emph{Phase 4}:
	Let $i$ s.t.\ $\sigma(d_i) = 0$. It follows from Lemma~\ref{lemma: valuation of player one nodes}
	that $\tau_\sigma(F_i) = d_i$ and hence $\Xi_\sigma(F_i) = \{F_i\} \cup \Xi_\sigma(d_i)$,
	i.e.\ $\sigma(d_i) \prec_\sigma F_i$.

\item \emph{Phase 5}:
	Let $k = \max (\{i \mid \sigma(a_i) \not= \bit_i^\oplus\} \cup \{1\})$.
	It follows from Lemma~\ref{lemma: valuation of player one nodes}
	that
	\begin{displaymath}
	\Xi^{>6}_\sigma(g_i) = \begin{cases}
		\{g_j, h_j \mid \bit_j = 1, j > i\} \cup \{g_i, h_i\} & \text{ if } \bit^\oplus_i = 1 \vee (i < k=\nub(\bit)) \\
		\{g_j, h_j \mid \bit^\oplus_j = 1\} \cup \{g_i, s\} & \text{ if } \bit^\oplus_i = 0 \wedge i\geq k\\
		\{g_j, h_j \mid \bit^\oplus_j = 1 \vee i<j<k\} \cup \{s, g_k, g_i, h_i\} & \text{ if } \bit^\oplus_i = 0 \wedge i< k<\nub(\bit)
	\end{cases}
	\end{displaymath}
	Hence, it follows that if $k > 1$, the only improving switch is either $a_k^1$ or $a_k^0$ w.r.t.\ $a$.
	
\end{itemize}
\qed
\end{proof}

We are now ready to formulate our main lemma describing the transitioning from
an initial phase 1 strategy corresponding to $\bit$ to a successor initial
phase 1 strategy corresponding to $\bit^\oplus$, complying with the given
ordering selection.

\begin{lemma}\label{lemma: policy proceeding}
Let $\sigma$ be a phase~1 strategy with configuration $\boldzero_n < \bit < \boldone_n$. Let
$z$ be an edge with $z \preceq e_1^1$ or
$a_n^1 \preceq e$.
Then, there is a phase~1 strategy $\sigma'$ with configuration $\bit^\oplus$ and
an edge $z'$ with $z' \preceq e_1^1$ or
$a_n^1 \preceq z'$ s.t.\
$(\sigma, z) \leadsto_\prec (\sigma', z')$.
\end{lemma}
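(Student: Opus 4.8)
The goal is to show that one full increment of the counter—from a phase~1 strategy for $\bit$ to a phase~1 strategy for $\bit^\oplus$—is realized by Cunningham's rule as a sequence of improving switches that respects the edge ordering~(\ref{ordering}) and returns the round-robin pointer to a position with $z' \preceq e_1^1$ or $a_n^1 \preceq z'$. The natural approach is to trace the run phase by phase, using Lemma~\ref{lemma: improving switches} at each step to identify exactly which edges are available, and the successor operator $\succedge{\prec}{\cdot}{\cdot}$ together with the detailed ordering to determine which one Cunningham's rule actually selects. The phase structure in Figure~\ref{table: phases} is designed precisely so that the set of improving switches appearing in phase~$p$ lies in the $p$-th block of the ordering, so that once the pointer has swept past block~$p$ it does not return to it until the next increment.

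**Key steps, in order.** First I would argue that the starting configuration (a phase~1 strategy with pointer $z \preceq e_1^1$ or $a_n^1 \preceq z$) forces Cunningham's rule to begin scanning at the head of the Phase~1 block, since the successor operator wraps around when no improving edge is $\succeq$-above the current one. I would then verify, one phase at a time, the following invariant: upon entering phase~$p$ the current strategy is a phase~$p$ strategy with configuration $\bit$ (for $p \le 4$) or with the appropriate partial update toward $\bit^\oplus$, and the improving switches are exactly those listed in the phase~$p$ column of Figure~\ref{table: improving switches}. Within phase~1 the switches $e_i^1$ (all of them, by the ``Yes'' entry), the triggered $d_i^0$ switches, and the $b$-ladder switches $b_i^1,b_i^0$ get applied in the scanning order $e_1^1 \prec d_2^0 \prec e_2^1 \prec b_2^1 \prec b_2^0 \prec \cdots$; I would check that after these the strategy satisfies the phase~2 row of Figure~\ref{table: phases}, i.e.\ all $e_i$ point left and the $d_i$ and $b$-ladder have been updated to reflect $\bit^\oplus$. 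Phases~2 through~5 are handled analogously: each applies a single family of switches ($c$-ladder reset, then the $e_i^0$ resets, then the $d_i^1$ resets, then the $a$-ladder update), and in each case I would use Lemma~\ref{lemma: improving switches} to confirm that the \emph{only} improving switches in the relevant edge-block are the intended ones, so that round-robin selection is forced. The phase~5 analysis is where the $a$-ladder is rewritten from encoding $\bit$ to encoding $\bit^\oplus$; here I would invoke the last case of Lemma~\ref{lemma: valuation of player one nodes} and the phase~5 entries of Figure~\ref{table: improving switches}, which show that at each step the unique improving $a$-switch is $a_k^1$ or $a_k^0$ for $k=\nub(\bit)$ and descending indices, exactly matching the ordering $a_n^1 \prec a_n^0 \prec \cdots \prec a_2^1 \prec a_2^0$.

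**Closing the loop on the pointer.** Having shown the strategy at the end of phase~5 is a phase~1 strategy with configuration $\bit^\oplus$, it remains to locate the final pointer $z'$. Since phase~5 applies $a$-switches, the last applied edge lies in the $a_*^*$ block, so $z'$ is of the form $a_i^j$; I would argue that after the final $a$-switch, the next call to $\succedge{\prec}{\cdot}{I_\sigma}$ finds no improving edge above $z'$ in the ordering and hence wraps around, equivalently that we may take $z'$ with $a_n^1 \preceq z'$, satisfying the hypothesis needed to restart. This re-establishes exactly the interface of the lemma, so the argument composes across increments.

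**Main obstacle.** The principal difficulty is not any single valuation computation but the bookkeeping that the phase definitions in Figure~\ref{table: phases} are genuinely \emph{closed} under the switches dictated by the ordering: I must show that applying a switch to a phase~$p$ strategy yields another phase~$p$ strategy (until the phase's switch family is exhausted) and then a phase~$(p{+}1)$ strategy, with the bit configuration $\bit$, the partial update $\bit^\nub$, and the target $\bit^\oplus$ all appearing in exactly the cells where the table predicts them. In particular the interplay in phase~1 between $\nsb(\bit)$, $\nub(\bit)$, and the two ladders $b,c$—which together implement ``carry'' by resetting the trailing set bits and advancing to $\nub(\bit)$—is delicate, and verifying that the $d_i^0$ and $b$-ladder switches fire in precisely the right order to keep the counter semantics correct is the crux of the proof. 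I expect the valuation machinery of Lemma~\ref{lemma: valuation of player one nodes} and the improving-switch characterization of Lemma~\ref{lemma: improving switches} to carry essentially all of the analytic weight, reducing each phase to a finite case check against the ordering.
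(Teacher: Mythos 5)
Your proposal is correct and follows essentially the same route as the paper's proof, which likewise establishes the lemma by tracing the explicit switch sequence $e_{i_1}^1 \leadsto b_{i_1}^1 \leadsto d_{i_2}^0 \leadsto e_{i_2}^1 \leadsto \cdots$ through the five phases, justified by the phase table, the improving-switch characterization of Lemma~\ref{lemma: improving switches}, and the edge ordering~(\ref{ordering}). If anything, your plan is more explicit than the paper's proof about the two points it leaves implicit, namely that each phase's strategy class is closed under the applied switches and that the round-robin pointer wraps around correctly at the end of phase~5.
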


\begin{proof}
The proof of the lemma is ultimately based on the five phases described in
Figure~\ref{table: phases}, the corresponding improving switches given in Figure~\ref{table: improving switches}
(proven correct in Lemma~\ref{lemma: improving switches}) and the introduced selection ordering.

We prove the lemma by outlining the complete sequence of switches that are applied
to $\sigma$ in order to obtain $\sigma'$ (we do not explicitly describe the intermediate
strategies which can be derived by applying all mentioned switches up to that point).

Let $i_1, \ldots, i_k$ be the complete sequence of ascending indices s.t.\ $\bit_{i_j} = 0$
for $1\leq j \leq k$. The following holds:
\begin{align*}
  &\stackrel{\text{P1}}{\leadsto} e_{i_1}^1 \leadsto b_{i_1}^1 \leadsto d_{i_2}^0 \leadsto e_{i_2}^1 \leadsto d_{i_3}^0 \leadsto e_{i_3}^1 \leadsto \ldots \leadsto d_{i_k}^0 \leadsto e_{i_k}^m1 \\
  &\stackrel{\text{P2}}{\leadsto} c_{i_1}^1 \\
  &\stackrel{\text{P3}}{\leadsto} \{e_i^0 \mid \bit^\oplus = 0\} \\
  &\stackrel{\text{P4}}{\leadsto} \{d_i^1 \mid \bit^\nsb = 0\} \\
  &\stackrel{\text{P5}}{\leadsto} a_{i_1}^* \leadsto a_{i_1 - 1}^* \leadsto \ldots \leadsto a_{2}^*  
\end{align*}
\qed
\end{proof}

It follows immediately that the parity game provided here indeed simulate a binary
counter by starting with the designated initial strategy and the $\prec$-minimal edge.
This completes the proof of Theorem \ref{lbthm}.

\subsection{Application to Other Games}

Theorem \ref{lbthm} can be applied to a variety of other games using known connections
between these games. For completeness we give these results here. However we will
not give definitions of the games concerned, referring instead to the relevant literature,
as they will not be needed in the rest of the paper.

Parity games can be reduced to mean payoff games \cite{puri/phd}, mean payoff games to
discounted payoff games, and the latter ones to turn-based stochastic games
\cite{zwickpaterson/1996}. Friedmann showed \cite{FriedmannExpBound09}
that the strategy improvement algorithm for payoff and stochastic games, when
applied to the reduced game graphs of sink parity games, behaves exactly the
same on the reduced games. In other words, the strategies in both the original
game and the reduced game graphs coincide as well as the associated sets of
improving switches.

\begin{theorem}[\cite{FriedmannExpBound09}]
Let~$G$ be a sink parity game. Discrete strategy improvement requires the same number of iterations to solve~$G$
as strategy improvement for the induced payoff games as well as turn-based stochastic games to solve the respective
game $G'$. The game $G'$ is induced by applying the standard reduction from~$G$ to the 
respective game class,
assuming that the improvement rule solely depends on the combinatorial
valuation-ordering of the improving edges.
\end{theorem}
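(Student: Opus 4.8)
The plan is to show that the standard reductions preserve not merely the winner but the entire \emph{combinatorial skeleton} of the strategy improvement run, so that the two runs are step-for-step identical and hence have equal length. First I would observe that the reductions cited in \cite{puri/phd} and \cite{zwickpaterson/1996}---parity to mean payoff, mean payoff to discounted payoff, and discounted payoff to turn-based stochastic---all leave the underlying directed graph and the partition $V = V_0 \cup V_1$ untouched; they merely decorate the game with edge weights, a discount factor, or transition probabilities. Consequently the positional player~0 strategy spaces $\mathcal{S}_0(G)$ and $\mathcal{S}_0(G')$ are canonically identified, and a single-edge switch is literally the same object in $G$ and $G'$. It therefore suffices to prove that, under this identification, the set $I_\sigma$ of improving switches and the induced ordering $\prec_\sigma$ on it coincide for every strategy $\sigma$.

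Second, I would fix $\sigma$ and show that the real-valued valuation $\mathtt{val}$ computed in $G'$ is an \emph{order-embedding} of the discrete node valuation $\vartheta_{\sigma,\tau,v}$ of $G$, which in particular forces the optimal player~1 counter-strategy $\tau_\sigma$ to be the same in both games. Here I would use the classical exponential encoding $\Omega(v) \mapsto (-N)^{\Omega(v)}$ with base $N$ strictly larger than $|V|$, so that a single occurrence of a higher priority dominates the combined contribution of all lower priorities along any simple path or in any cycle average. The key claim is
\begin{displaymath}
  \vartheta_{\sigma,\tau,u} \prec \vartheta_{\sigma,\tau,v}
  \iff
  \mathtt{val}(u) < \mathtt{val}(v),
\end{displaymath}
which I would verify componentwise. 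Because $G$ is a sink game the cycle component is frozen at $v^*$ by Lemma~\ref{lemma: sink game lemma}, so the cycle contribution is a common constant and the comparison descends to the path. The dominance property of $N$ then makes the discounted sum over the path reproduce exactly the priority-occurrence ordering $\prec$ on the path set (an extra even-priority node contributes a large positive $+N^p$, an extra odd one a large negative $-N^p$, matching the direction of $\prec$), and the residual lowest-order term reproduces the length component as the correct tie-breaker. For the discounted and stochastic reductions I would argue identically after choosing the discount factor, respectively the stopping probability, close enough to $1$ that the same dominance inequalities survive; since there are finitely many strategies and nodes, one uniform choice works.

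Third, with the order-embedding in hand the conclusion is immediate: minimisers of the valuation coincide, so $\tau_\sigma$ is identical across the game classes, and an edge $(v,w)$ satisfies $\sigma(v) \prec_\sigma w$ in $G$ exactly when it does so in $G'$. Hence $I_\sigma$ and its internal $\prec_\sigma$-ordering are literally the same in both games for every $\sigma$, and in particular $\sigma$ is improvable in $G$ if and only if it is improvable in $G'$. Since the improvement rule is assumed to depend solely on the combinatorial ordering of the improving edges---precisely the data that Cunningham's rule reads through the fixed edge ordering together with the current improving set---the rule selects the same switch in $G$ and in $G'$ at each iteration. A straightforward induction on the iteration index, with termination governed on both sides by emptiness of the improving set (Theorem~\ref{winning theorem}), shows the two runs are identical and therefore require the same number of iterations.

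The main obstacle is the quantitative order-embedding of the second step: one must pin down the weight base $N$, and for the downstream reductions the discount factor and stopping probability, precisely enough that the lexicographic priority-occurrence order $\prec$ on path components is faithfully encoded by a single real number, with no spurious carry between priority levels and with the length component surfacing correctly as the final tie-breaker. Bounding the geometric tails of the discounted sums to exclude such carries is the delicate estimate on which the whole equivalence rests.
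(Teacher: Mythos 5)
This theorem is quoted from \cite{FriedmannExpBound09}; the paper itself gives no proof, only the surrounding remark that under the standard reductions the strategies and the associated sets of improving switches coincide. So there is no in-paper argument to compare yours against step by step. That said, your sketch is the standard argument, and it is essentially the one the paper carries out one level further down for the parity-game-to-MDP step in Section~\ref{mdp}: there priorities are encoded as rewards $\pri{v}=(-N)^{\Omega(v)}$ with $N\geq 2n$ and $\varepsilon\leq\frac{1}{2n}$, and Lemma~\ref{lemma: priorities work as in parity games} is precisely your dominance claim --- the highest priority present in the symmetric difference determines the sign and the order of the real-valued sums, so the numeric valuation order-embeds the discrete priority-occurrence ordering on path components. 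Your use of the sink-game property to freeze the cycle component and reduce everything to path components likewise matches the paper's own remark after the definition of sink games. One small correction: in a sink game the length component never actually serves as a tie-breaker, because the path component contains \emph{every} node on the path (all nodes are more relevant than the sink), so equal priority-occurrence mappings already force equal cardinalities and hence equal lengths. Consequently you neither need, nor --- in an undiscounted total-reward encoding where unprioritized nodes contribute zero --- would you obtain, a residual lowest-order term encoding the path length; the delicate tie-breaking estimate you flag as the main obstacle can simply be dropped. With that simplification your argument is the intended one.
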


By this, we can conclude that the exponential lower bound for Cunningham's rule on
parity games presented here also applies to payoff and turn-based stochastic
games. We will see in the next section that it also applies to Markov decision
processes, which can be seen as a one-player version of turn-based stochastic
games.

\begin{corollary}
There is a family of mean payoff games, discounted payoff
games resp.\ turn-based stochastic games on which number of improving steps
performed by Algorithm \ref{algorithm: cunningham} is at least $2^n$, where 
the size of the games are $O(n)$.
\end{corollary}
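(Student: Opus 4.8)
The plan is to combine the exponential lower bound of Theorem~\ref{lbthm} on the parity games $G_n$ with the reduction-invariance result of \cite{FriedmannExpBound09} stated immediately above. First I would take as the desired family the sink parity games $\{G_n\}_{n\ge 3}$ of the lower bound construction. By Theorem~\ref{lbthm}, Algorithm~\ref{algorithm: cunningham}, started at the initial strategy with the $\prec$-minimal edge and following the ordering~(\ref{ordering}), performs at least $2^n$ improving switches before reaching the terminal strategy, while $|G_n| \in \BigO{n}$. Since each $G_n$ is a sink parity game, the hypotheses of the reduction-invariance theorem are met.

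The central point is to check that Cunningham's rule, as specified in Algorithm~\ref{algorithm: cunningham}, is an improvement rule that \emph{solely depends on the combinatorial valuation-ordering of the improving edges}, which is exactly the condition the theorem of \cite{FriedmannExpBound09} demands. Inspecting the algorithm, its only game-dependent decision is the successor operator $\succedge{\prec}{e}{I_\sigma}$: given the history edge $e$ and the fixed ordering $\prec$, it picks the next switch as the $\prec$-least improving edge $e'$ with $e \preceq e'$ (wrapping around to $\min_\prec I_\sigma$ if no such $e'$ exists). Neither $\prec$ nor the history $e$ refers to the concrete node valuations, and the only valuation-dependent input, the set $I_\sigma$, is determined purely combinatorially: an edge $(v,w)$ lies in $I_\sigma$ exactly when $\sigma(v) \prec_\sigma w$, i.e.\ by the relative ordering of the valuations $\Xi_\sigma$, not by their numerical content. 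Hence on any induced game $G'$ whose improving-switch sets and valuation-orderings agree step-by-step with those of $G_n$, Algorithm~\ref{algorithm: cunningham} produces an identical sequence of strategies.

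With this observation in hand, I would invoke the standard reductions from sink parity games to mean payoff games \cite{puri/phd}, thence to discounted payoff games, and finally to turn-based stochastic games \cite{zwickpaterson/1996}, as assembled in the cited theorem. That theorem then guarantees that discrete strategy improvement driven by Cunningham's rule takes the same number of iterations on each induced game $G'$ as on $G_n$, namely at least $2^n$, since each iteration applies exactly one improving switch. It remains only to observe that each of these reductions inflates the size of the game by at most a linear factor, so that the induced mean payoff, discounted payoff, and turn-based stochastic games all have size $\BigO{n}$; together these facts give the claimed family and establish the corollary.

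I expect the one genuine step to be the verification in the second paragraph that Cunningham's rule satisfies the ``solely combinatorial'' hypothesis of the reduction theorem; once that is settled the transfer of the iteration count and the size bound follow immediately from the cited results.
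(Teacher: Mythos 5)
Your proposal is correct and follows essentially the same route as the paper: instantiate the family as the sink parity games $G_n$, invoke Theorem~\ref{lbthm} for the $2^n$ bound and the linear size, and transfer the iteration count through the cited reduction-invariance theorem of \cite{FriedmannExpBound09}. Your explicit verification that Cunningham's rule depends only on the combinatorial valuation-ordering of the improving edges is a detail the paper leaves implicit, but it is exactly the right hypothesis to check.
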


\section{Markov Decision Process Policy Iteration Lower Bound}
\label{mdp}
Markov decision processes (MDPs) provide a mathematical model for
sequential decision making under uncertainty. They are employed
to model stochastic optimization problems in various areas ranging
from operations research, machine learning, artificial intelligence,
economics and game theory. For an in-depth coverage of MDPs, see the books of
Howard \cite{Howard60}, Derman \cite{Derman72}, Puterman
\cite{Puterman94} and Bertsekas \cite{Bertsekas01}.

\subsection{Markov Decision Processes and Policy Iteration}
Formally, an MDP is defined by its \emph{underlying graph} $G{=}(V_0,V_R,E_0,E_R,r,p)$. Here, $V_0$ is the set of vertices (states) operated by the controller, also known as \emph{player~$0$}, and $V_R$ is a set of \emph{randomization} vertices corresponding to the probabilistic actions of the MDP. We let $V=V_0\cup V_R$. The edge set $E_0\subseteq V_0\times V_R$ corresponds to the actions available to the controller.
The edge set $E_R \subseteq V_R\times V_0$ corresponds to the probabilistic transitions associated with each action.
The function $r:E_0\to\mathbb{R}$ is the immediate reward function. The function $p:E_R\to [0,1]$ specifies the transition probabilities. For every $u\in V_R$, we have $\sum_{v:(u,v)\in E_R} p(u,v) = 1$, i.e., the probabilities of all edges emanating from each vertex of $V_R$ sum up to~$1$.
%

A policy $\sigma$ is a function $\sigma:V_0\to V$ that selects for each vertex $u\in V_0$ a target node $v$ corresponding to an edge $(u,v) \in E_0$, i.e.\ $(u,\sigma(u)) \in E_0$. We assume that each vertex $u\in V_0$ has at least one outgoing edge). There are several \emph{objectives} for MDPs and we consider the \emph{expected total reward objective} here. 
The \emph{values} $\mdpval{\sigma}(u)$ of the vertices under~$\sigma$ are defined as the unique solutions of the following set of linear equations:
\begin{displaymath}
\mdpval{\sigma}(u) = \begin{cases}
\mdpval{\sigma}(v) + r(u,v) & \text{if $u\in V_0$ and $\sigma(u)=v$}\\
\sum_{v:(u,v)\in E_R} p(u,v)\, \mdpval{\sigma}(v) & \text{if $u\in V_R$}
\end{cases}
\end{displaymath}
together with the condition that $\mdpval{\sigma}(u)$ sum up to~$0$ on each irreducible recurrent class of the Markov chain defined by~$\sigma$.

All MDPs considered in this paper satisfy the
\emph{unichain} condition (see \cite{Puterman94}). It
states that the Markov chain obtained from each policy $\sigma$ has a single
irreducible recurrent class. 

This condition implies, in particular, that all vertices have the same value. It is not difficult to check that $\mdpval{\sigma}(u)$ is indeed the expected reward per turn, when the process starts at~$u$ and policy~$\sigma$ is used. The potentials $\mdppot{\sigma}(u)$ represent \emph{biases}. Loosely speaking, the expected reward after~$N$ steps, when starting at~$u$ and following~$\sigma$, and when~$N$ is sufficiently large, is about $N\,\mdpval{\sigma}(u)+\mdppot{\sigma}(u)$.

Howard's \cite{Howard60} \emph{policy iteration} algorithm is the most widely
used algorithm for solving MDPs. It is closely related to the simplex method, which provides
a practical way to solve such problems. Nevertheless in the worst case,
Melekopoglou and Condon \cite{MC94} showed that the
simplex method with the smallest index pivot rule
needs an exponential number of iterations to compute
an optimal policy for a specific MDP problem regardless
of discount factors.

Although most MDPs can be solved in polynomial time using the interior point method the search continues
for a strongly polynomial time method.
Post and Ye \cite{PY12} have recently made progress in this direction by showing that the simplex method with the
greedy pivot rule terminates in at most $O(m^3 n^2 log^2 m)$
pivot steps when discount factors are uniform, or in at
most $O(m^5 n^3 log^2 m)$
pivot steps with non-uniform discounts.
No such results have been proved for the
policy iteration method. As history based pivot rules provide good candidates for subexponential
time behaviour it is important to analyze their worst case performance for policy improvement algorithms.

As is the case for parity games, the policy iteration algorithm starts with some initial policy $\sigma_0$
and generates an improving sequence $\sigma_0,\sigma_1,\ldots,\sigma_N$ of
policies, ending with an optimal policy $\sigma_N$. In each iteration the algorithm
first \emph{evaluates} the current policy $\sigma_i$, by computing the values
$\mdpval{\sigma_i}(u)$  of all vertices. An edge $(u,v')\in E_0$, such that
$\sigma_i(u) \not= v'$ is then said to be an \emph{improving switch} if and only
if either $\mdpval{\sigma_i}(v')>\mdpval{\sigma_i}(u)$.
Given a policy $\sigma$, we again denote the \emph{set of improving switches} by
$I_\sigma$.

A crucial property of policy iteration is that $\sigma$ is an optimal policy if
and only if there are no improving switches with respect to it (see, e.g.,
\cite{Howard60}, \cite{Puterman94}).
Furthermore, if $(u,v') \in I_\sigma$ is an improving switch w.r.t.\ $\sigma$,
and $\sigma'$ is defined as $\sigma[(u,v')]$ (i.e., $\sigma'(u) = v'$ and
$\sigma'(w) = \sigma(w)$ for all $w\not= u$), then $\sigma'$ is \emph{strictly}
better than $\sigma$, in the sense that for every $u\in V_0$, we have
$\mdpval{\sigma'}(u)\geq\mdpval{\sigma}(u)$, with a strict inequality for at
least one vertex $u\in V_0$.

\begin{lemma}\label{lemma: increasing valuation mdp}
Let $\sigma$ be a policy and $(v,w)$ be a $\sigma$-improving switch. Let $\sigma' = \sigma[v \mapsto w]$.
Then the following holds:
\begin{enumerate}
  \item $\mdpval{\sigma'}(u) \geq \mdpval{\sigma}(u)$ for all $u \in V$,
  \item $\mdpval{\sigma'}(v) > \mdpval{\sigma}(v)$, and particularly
  \item $\sum_{u \in V_0} \mdpval{\sigma'}(u) > \sum_{u \in V_0} \mdpval{\sigma}(u)$.
\end{enumerate}
\end{lemma}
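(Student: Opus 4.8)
The plan is to run the classical policy improvement argument: view $\mdpval{\sigma'}$ as the unique fixed point of the $\sigma'$-evaluation operator, show that $\mdpval{\sigma}$ fails to be a fixed point of that operator only at the single vertex $v$, where it falls strictly short, and then propagate this one-coordinate strict gain to all of $V_0$ through the nonnegative fundamental matrix of the Markov chain induced by $\sigma'$. First I would eliminate the randomization vertices. For a policy $\rho$ let $P_\rho$ be the stochastic matrix on $V_0$ obtained by composing the controller edge $u \mapsto \rho(u) \in V_R$ with the probabilistic transition $\rho(u) \mapsto V_0$, and put $r_\rho(u) := r(u,\rho(u))$. The defining equations then collapse on $V_0$ to $\mdpval{\rho} = r_\rho + P_\rho\,\mdpval{\rho}$, and the unichain condition together with the recurrent-class normalization makes this solution unique. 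Define $L_{\sigma'}(f) := r_{\sigma'} + P_{\sigma'}\,f$, so that $\mdpval{\sigma'}$ is precisely the normalized fixed point of $L_{\sigma'}$.

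Next I would compute the defect $\delta := L_{\sigma'}(\mdpval{\sigma}) - \mdpval{\sigma}$. Since $\sigma'$ and $\sigma$ agree off $v$, the operators $L_{\sigma'}$ and $L_\sigma$ act identically at every $u \neq v$; as $\mdpval{\sigma}$ is the fixed point of $L_\sigma$, this forces $\delta(u) = 0$ for all $u \neq v$. At the coordinate $v$, expanding and using the randomization equation $\mdpval{\sigma}(w) = \sum_x p(w,x)\,\mdpval{\sigma}(x)$ gives $L_{\sigma'}(\mdpval{\sigma})(v) = r(v,w) + \mdpval{\sigma}(w)$, and this strictly exceeds $\mdpval{\sigma}(v)$ exactly because $(v,w)$ is a $\sigma$-improving switch. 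Hence $\delta \geq 0$, with $\delta(v) > 0$ and $\delta(u) = 0$ for $u \neq v$.

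I would then transfer this to the genuine difference $\Delta := \mdpval{\sigma'} - \mdpval{\sigma}$. Subtracting $L_{\sigma'}(\mdpval{\sigma}) = \mdpval{\sigma} + \delta$ from $\mdpval{\sigma'} = r_{\sigma'} + P_{\sigma'}\,\mdpval{\sigma'}$ yields the identity $(I - P_{\sigma'})\,\Delta = \delta$. Under the unichain hypothesis the chain induced by $\sigma'$ is absorbed into its single recurrent class, so on the transient part $P_{\sigma'}$ is substochastic and the fundamental matrix $(I - P_{\sigma'})^{-1} = \sum_{k \geq 0} P_{\sigma'}^k$ converges with nonnegative entries and diagonal entries at least $1$. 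Consequently $\Delta = (I - P_{\sigma'})^{-1}\delta \geq 0$, which is statement~(1), and $\Delta(v) \geq \delta(v) > 0$, which is statement~(2). Statement~(3) is then immediate, since every summand $\mdpval{\sigma'}(u) - \mdpval{\sigma}(u) = \Delta(u)$ is nonnegative while the summand at $v \in V_0$ is strictly positive, giving $\sum_{u \in V_0}\mdpval{\sigma'}(u) > \sum_{u \in V_0}\mdpval{\sigma}(u)$.

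The step I expect to be the main obstacle is this final transfer: making rigorous that $(I - P_{\sigma'})$ is invertible with a nonnegative inverse dominating the identity. This is precisely where the unichain (single recurrent class, absorbing sink) assumption is indispensable, as it guarantees convergence of the Neumann series and thereby that the single strict improvement $\delta(v) > 0$ cannot be cancelled but instead spreads nonnegatively across $V_0$. Alternatively one can avoid the matrix inverse entirely by a monotone-iteration argument, using that $L_{\sigma'}$ is order-preserving and that its iterates $\mdpval{\sigma} \le L_{\sigma'}(\mdpval{\sigma}) \le L_{\sigma'}^2(\mdpval{\sigma}) \le \cdots$ converge to $\mdpval{\sigma'}$; in the degenerate reading where $(I - P_{\sigma'})$ is singular, the same conclusion follows after replacing the fundamental matrix by the deviation matrix and first verifying that the gain is preserved.
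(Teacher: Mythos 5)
Your proof is correct, and it actually supplies more than the paper does: the paper states this lemma without proof, deferring to the standard policy-iteration literature (Howard, Puterman). Your argument is precisely the classical one those references give --- collapse the randomization vertices, observe that the defect $\delta = L_{\sigma'}(\mdpval{\sigma}) - \mdpval{\sigma}$ is supported only at $v$ and strictly positive there, and then solve $(I-P_{\sigma'})\Delta = \delta$ via the Neumann series $\sum_{k\ge 0}P_{\sigma'}^k$, which converges because under the unichain condition every $\sigma'$-chain is absorbed into the sink $t$ (Lemma~\ref{lemma: mdp unichain}), and whose $k=0$ term gives the strict inequality $\Delta(v)\ge\delta(v)>0$ needed for part~(2); part~(3) follows since $v\in V_0$. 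One small point worth flagging: the paper's stated definition of an improving switch, ``$\mdpval{\sigma}(v')>\mdpval{\sigma}(u)$,'' omits the reward term (and contains a dangling ``either''). You silently use the reduced-cost form $r(v,w)+\mdpval{\sigma}(w)>\mdpval{\sigma}(v)$, which is the correct condition for the expected-total-reward objective and matches the dual-infeasibility characterization the paper uses in Section~\ref{lp}, but you should say explicitly that this is how you are reading the definition, since with the literal text of the definition the computation of $\delta(v)$ would not go through.
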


\subsection{Lower Bound Construction}

We relate the description of the lower bound construction for Markov decision
processes closely to the construction of the parity games. For that reason, we
relax our definition of MDPs such that it corresponds almost directly to parity
games.

As defined in the previous subsection, the underlying graph $G$ is bipartite.
However one can relax this condition and allow edges from $V_0$ to $V_0$ 
and from $V_R$ to $V_R$. It suffices to subdivide these edges by inserting
a node belonging to player 1 or player 0, respectively, with out-degree 1 and
no priority. This leads to the following definition.

A \emph{relaxed MDP} is a tuple $M=(V, V_0, V_R, E, E_0, E_R, r, p)$, where $V =
V_0 \cup V_R$, $E \subseteq V \times V$, $E_0 = E \cap (V_0 \times V)$, $E_R =
E \cap (V_R \times V)$, $r: E \mapsto \RR$ and $p: E_R \rightarrow [0, 1]$ with
$\sum_{w \in vE} p(v,w) = 1$ for all $v \in V_R$.

Relaxed MDPs allow us to show the close relationship between the original parity
games and the corresponding MDPs. 
\begin{enumerate}
  \item Edges $(v,w) \in E_0 \cap V_0 \times V_0$ can be realized by adding a
  randomization node $(v,w)$ and by replacing the edge $(v,w)$ with new edges
  $(v, (v,w))$ and $((v,w), w)$. The outgoing edge from $(v,w)$ obviously has
  probability 1.
  \item Edges $(v,w) \in E_R \cap V_R \times V_R$ can be realized by adding a
  player 0 node $(v,w)$ and by replacing the edge $(v,w)$ with new edges $(v,
  (v,w))$ and $((v,w), w)$. Note that player 0 does not obtain new choices
  by adding this node since the out-degree is one.
  \item Randomization edges are not allowed to have rewards
  in MDPs.
  Hence we insert, for every outgoing edge of a randomization node with reward,
  a new node of player 0 connected to a new node of the randomizer connected to the original target node. We push
  the reward to the new player 0 edge.
  \item Since we consider the expected total reward here, adding new intermediate nodes
  does not change the value of the nodes.
\end{enumerate}

For each $n \ge 3 $ we define the underlying graph $M_n = (V,V_0,V_R,E,E_0,E_R,r,p)$ of a relaxed
MDP as shown schematically in Figure~\ref{figure: mdp}
which the reader is invited to compare with Figure~\ref{figure: example}. 

More formally:
\begin{flalign*}
V_0 \;:=\;& \{a_i, c_i, d_i \mid 1 < i \leq n\} \cup \{b_i \mid 1 < i < n\} \cup
\{e_i \mid 1 \leq i \leq n\}  \\
V_R \;:=\;& \{F_i \mid 1 \leq i \leq n\} \cup \{g_i,h_i \mid 1 \leq i \leq n\}  \cup \{s,t\}
\end{flalign*}

With $M_n$, we associate a large number $N \in \NN$ and a small number
$0 < \varepsilon$. We require $N$ to be at least as large as the
number of nodes with priorities, i.e.\ $N \geq 2 n$ and
$\varepsilon^{-1}$ to be significantly larger than the largest
occurring priority induced reward, i.e.\ $\varepsilon \leq
\frac{1}{2n}$.

Some of the vertices are assigned integer \emph{priorities}. If a
vertex~$v$ has priority~$\Omega(v)$ assigned to it, then a reward of
$\pri{v}=(-N)^{\Omega(v)}$ is \emph{added} to all edges emanating from~$v$.
This idea of using priorities is inspired by the reduction from parity games
to mean payoff games, see \cite{puri/phd}.

Figure~\ref{table: mdp lower bound edges} defines the edge sets, the
probabilities and the priorities of $M_n$. For convenience of notation, we
identify the node names $a_{n+1}$ with $t$,
$b_1$ with $g_1$, and
$c_1$ with $g_1$. Explicit constructions for small $n$ are available online \cite{online}.

\begin{figure}[h]
\centering
\begin{tabular}[ht]{|c||c|c|}
  \hline
  Node & Successors & Probability \\
  \hline
  \hline
  $F_i$ & $h_i$ & $\varepsilon$ \\
  $F_i$ & $d_i$ & $0.5 \cdot (1-\varepsilon)$ \\
  $F_i$ & $e_i$ & $0.5 \cdot (1-\varepsilon)$ \\
  \hline
  \hline
  Node & Successors & Priority \\
  \hline
  \hline
  $g_i$ & $F_i$ & $2 \cdot i - 1$ \\
  $h_i$ & $a_{i+1}$ & $2 \cdot i$ \\
  $s$ & $c_n$ & $0$ \\
  \hline
\end{tabular}
\begin{tabular}[ht]{|c||c|}
  \hline
  Node & Successors \\
  \hline
  \hline
  $a_i$ & $g_i$, $a_{i+1}$ \\		
  $b_i$ & $g_i$, $b_{i-1}$	\\	
  $c_i$ & $g_i$, $c_{i-1}$		\\
  $d_i$ & $F_i$, $b_{i-1}$ \\
  $e_i$ & $F_i$, $s$ \\
  $t$ & $t$ \\
  \hline
\end{tabular}
\caption{{MDP Lower Bound Graph}}
\label{table: mdp lower bound edges}
\end{figure}

\begin{figure}[ht!]
\center
\includegraphics[scale=0.58]{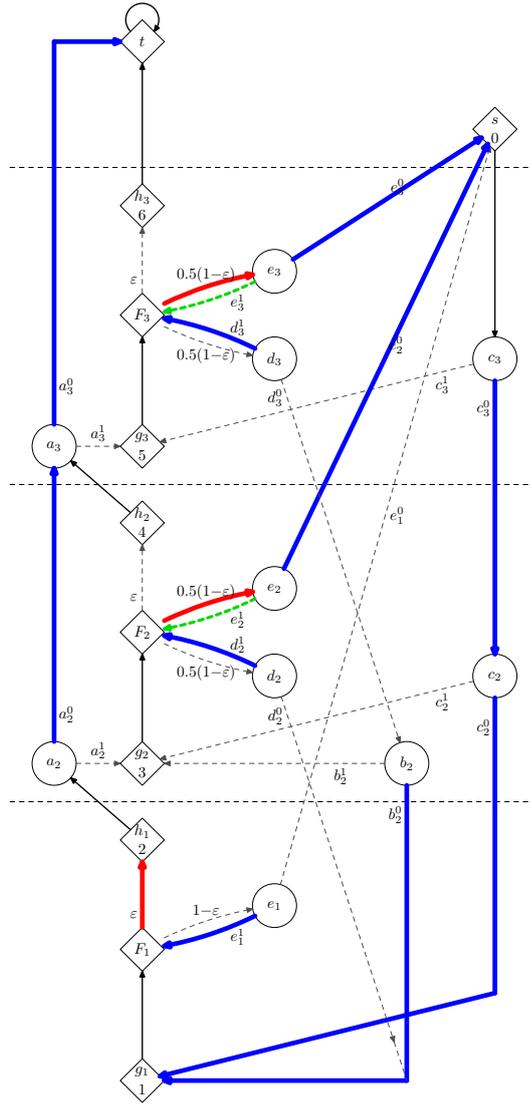}
\caption{(Relaxed) Markov Decision Process Lower Bound Graph}
\label{figure: mdp}
\end{figure}

In comparing Figure~\ref{figure: mdp}
with Figure~\ref{figure: example} the connections between MDPs and parity games becomes clear.
Blue edges show
respectively the current policy and current policy for player 0, and red edges for player 1.
Improving edges for player 0 are shown in dotted green and
the other current non-policy and non-strategy edges are shown in dotted black.
(Coloured edges show in bold on monochromatic printing.)
Whereas Figure~\ref{figure: example} shows the initial strategy for parity game $G_3$,
Figure~\ref{figure: mdp} shows the initial policy for the MDP $M_3$.
The sequence $P_n$ of improving switches
we constructed in $G_n$ starting at the initial strategy and following
Algorithm~\ref{algorithm: cunningham} using the ordering (\ref{ordering})
will be shown to correspond to an identical sequence of improving switches
in $M_n$ using policy iteration and the same ordering.

\begin{lemma}\label{lemma: mdp unichain}
The Markov chains obtained by any policy
reach the sink $t$ almost surely (i.e.\ the sink $t$ is the
single irreducible recurrent class).
\end{lemma}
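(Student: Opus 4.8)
The plan is to exhibit $t$ as an absorbing state that is reachable from every node in the Markov chain induced by an arbitrary policy $\sigma$. Since $t$ carries a self-loop of probability one, reachability of $t$ from all states immediately forces every other state to be transient, so that $\{t\}$ is the unique irreducible recurrent class and absorption in $t$ occurs almost surely. Concretely, I would view the induced chain as a directed graph in which each controller node $v \in V_0$ retains only its single chosen out-edge $(v,\sigma(v))$, while each randomization node keeps all of its positive-probability out-edges, and then I would argue pure graph-reachability of $t$; no values need to be computed.

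The heart of the argument is the $a$-ladder together with the escape edge $F_i \to h_i$, which has probability $\varepsilon > 0$ no matter what $\sigma$ does. First I would show, by downward induction on $i$ (with base case $a_{n+1} = t$, which reaches $t$ trivially), that $t$ is reachable from every $a_i$ for $2 \leq i \leq n+1$. In the inductive step, from $a_i$ the policy either selects $a_{i+1}$ directly, or selects $g_i$; in the latter case the deterministic step $g_i \to F_i$ followed by the positive-probability edge $F_i \to h_i$ and then $h_i \to a_{i+1}$ again reaches $a_{i+1}$. Either way $a_i$ reaches $a_{i+1}$, and the induction hypothesis closes the claim. As an immediate corollary, every $g_i$, $F_i$, and $h_i$ reaches $t$ as well, via $g_i \to F_i$, the $\varepsilon$-edge $F_i \to h_i$, and $h_i \to a_{i+1}$.

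Next I would dispatch the remaining nodes using the identifications $b_1 = c_1 = g_1$. Following $\sigma$ down the $b$-ladder (respectively the $c$-ladder), the index strictly decreases at every step, so after finitely many steps the walk either branches out into some $g_j$ or bottoms out at $b_1 = g_1$ (respectively $c_1 = g_1$); in every case it reaches a $g$-node and hence $t$, and this also settles $s \to c_n$. Finally, each $d_i$ points under $\sigma$ either to $F_i$ or to $b_{i-1}$, and each $e_i$ either to $F_i$ or to $s$, all of which have already been shown to reach $t$; note that the possible absence of a $d_1$-successor at $F_1$ is irrelevant, since only the $\varepsilon$-edge $F_i \to h_i$ is used in the reachability argument.

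The only delicate point, and the main obstacle, is ruling out a recurrent class that avoids $t$. The candidate traps are the short cycles $F_i \to d_i \to F_i$ and $F_i \to e_i \to F_i$ (arising when $\sigma(d_i)=F_i$ or $\sigma(e_i)=F_i$) and the coupled $a$/$g$/$F$ region; these are precisely what the probability-$\varepsilon$ escape $F_i \to h_i \to a_{i+1}$ breaks, since from any state inside such a region there is then a positive-probability path up the $a$-ladder to $t$. Having established that $t$ is reachable from every state of a finite Markov chain and that $t$ is absorbing, standard Markov chain theory yields that all non-sink states are transient, so $\{t\}$ is the single recurrent class and the chain reaches $t$ almost surely, which is the unichain property asserted by the lemma.
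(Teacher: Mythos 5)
Your proof is correct. The paper in fact states this lemma without any proof, so there is nothing to compare against; your graph-reachability argument --- the $\varepsilon$-probability escape $F_i \to h_i \to a_{i+1}$ climbing the $a$-ladder to the absorbing state $t$, the strictly decreasing indices on the $b$- and $c$-ladders bottoming out at $g_1$, and the standard conclusion that reachability of an absorbing state from every node of a finite chain makes all other states transient --- is exactly the argument the authors leave implicit, including the correct observation that the missing $d_1$-successor of $F_1$ is immaterial.
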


It is not too hard to see that the absolute value of all nodes corresponding to
policies are bounded by $\varepsilon^{-1}$. More formally we have:

\begin{lemma}\label{lemma: priorities work as in parity games}
Let $P = \{s, g_*, h_*\}$ be the
set of nodes with priorities. For a subset $S \subseteq P$, let
$\sum(S) = \sum_{v \in S} \pri{v}$. For non-empty subsets $S \subseteq P$,
let $v_S \in S$ be the node with the largest priority in $S$.
\begin{enumerate}
\item $|\sum(S)| < N + 1$ and $\varepsilon \cdot |\sum(S)| < 1$ for every subset
$S \subseteq P$, and
\item $|v_S| < |v_{S'}|$ implies $|\sum(S)| < |\sum(S')|$ for non-empty subsets $S,S' \subseteq P$.
\end{enumerate}
\end{lemma}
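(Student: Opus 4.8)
The plan is to exploit the fact that the rewards $\pri{v}=(-N)^{\Omega(v)}$ attached to the nodes of $P=\{s,g_*,h_*\}$ are pairwise distinct signed powers of $N$, so that any sum $\sum(S)$ behaves like an $N$-ary numeral whose sign and order of magnitude are dictated by its most significant term $\pri{v_S}$. Everything then reduces to a single geometric-series estimate, the separation-of-scales phenomenon inherited from the reduction of parity games to mean payoff games.

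First I would record that the priorities on $P$ are pairwise distinct and exhaust $\{0,1,\ldots,2n\}$: indeed $\Omega(s)=0$, while $\Omega(g_i)=2i-1$ runs through the odd values and $\Omega(h_i)=2i$ through the even values for $1\le i\le n$. Hence for a non-empty $S$ the maximiser $v_S$ is unique, $|v_S|=N^{\Omega(v_S)}$, and every other $w\in S$ satisfies $\Omega(w)\le\Omega(v_S)-1$.

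The key step is leading-term domination. Writing $P:=\Omega(v_S)$ and using distinctness of the exponents, the non-dominant part of the sum is controlled by a full geometric series,
\begin{displaymath}
\Bigl|\sum(S)-\pri{v_S}\Bigr|\;\le\;\sum_{p=0}^{P-1}N^{p}\;=\;\frac{N^{P}-1}{N-1}\;<\;N^{P}\;=\;|v_S|,
\end{displaymath}
valid for all $N\ge 2$, in particular for $N\ge 2n$. Consequently $\sum(S)$ has the same sign as $\pri{v_S}$ and its magnitude is pinned to that of its leading term, $\tfrac{N-2}{N-1}\,|v_S|<|\sum(S)|<\tfrac{N}{N-1}\,|v_S|<2\,|v_S|$. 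From these two-sided bounds the first assertion is routine: the magnitude of $\sum(S)$ is governed by its leading power of $N$ under the construction's size convention for $N$, and $\varepsilon\,|\sum(S)|<1$ holds because $\varepsilon$ is taken small relative to the largest occurring reward, which is exactly the condition underlying the $\varepsilon^{-1}$ bound on node values quoted just before the lemma.

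For the second, substantive assertion, suppose $|v_S|<|v_{S'}|$, i.e.\ $\Omega(v_S)<\Omega(v_{S'})$; since priorities are integers this forces $\Omega(v_{S'})\ge\Omega(v_S)+1$. Combining the upper bound on $S$ with the lower bound on $S'$ gives
\begin{displaymath}
|\sum(S)|\;<\;\frac{N}{N-1}\,N^{\Omega(v_S)}\;=\;\frac{1}{N-1}\,N^{\Omega(v_S)+1}\;\le\;\frac{1}{N-1}\,N^{\Omega(v_{S'})}\;\le\;\frac{N-2}{N-1}\,N^{\Omega(v_{S'})}\;<\;|\sum(S')|,
\end{displaymath}
where the penultimate inequality uses $N\ge 3$ (guaranteed by $N\ge 2n\ge 6$). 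I expect the only real care to lie in the bookkeeping of the geometric tail and in checking that this chain of inequalities closes \emph{strictly}; the conceptual engine — a single dominant power of $N$ swamping all lower-order contributions — is entirely elementary once distinctness of the priorities is in hand.
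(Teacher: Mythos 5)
The paper itself offers no proof of this lemma (it is stated with the remark that it is ``not too hard to see''), so there is nothing to compare line-by-line; your leading-term-domination argument via the geometric series is certainly the intended one, and your proof of part~2 is complete and correct: the priorities on $P$ are the distinct integers $0,1,\ldots,2n$, the tail bound $\bigl|\sum(S)-\pri{v_S}\bigr|<N^{\Omega(v_S)}$ holds for $N\ge 2$, and the chain of inequalities you give closes strictly once $N\ge 3$.

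There is, however, a genuine issue in part~1 that you dismiss as ``routine'' when it cannot be: with $\pri{v}=(-N)^{\Omega(v)}$ and $\Omega(h_n)=2n$, the singleton $S=\{h_n\}$ already gives $|\sum(S)|=N^{2n}$, so the bound $|\sum(S)|<N+1$ as literally stated is false, and no argument (yours or anyone's) will establish it. What your own geometric-series estimate actually yields is $|\sum(S)|\le\sum_{p=0}^{2n}N^{p}<N^{2n+1}$, which is surely the intended reading of the first claim; but then $\varepsilon\cdot|\sum(S)|<1$ requires $\varepsilon<N^{-(2n+1)}$, which is much stronger than the condition $\varepsilon\le\frac{1}{2n}$ given in the paper (and is what the paper's phrase ``$\varepsilon^{-1}$ significantly larger than the largest occurring priority induced reward'' actually demands). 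You should make this explicit: state the corrected bound $N^{2n+1}$, derive it from your tail estimate, and note that the $\varepsilon$-condition must be read as $\varepsilon\le N^{-(2n+1)}$ for the second half of part~1 to hold. Waving at ``the construction's size convention for $N$'' hides the fact that the constants in the statement do not work as written.
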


\subsection{Lower Bound Proof}

In this section we prove the following theorem.
The {\em initial policy}, $\{ a_*^0, b_*^0, c_*^0, d_*^1, e_*^0 \}$, corresponds exactly 
to the initial strategy for the parity game $G_n$.

\begin{theorem}
\label{theorem: mdp} The sequence $P_n$ of improving switches followed by policy iteration 
in $M_n$ from the initial policy to the terminal policy using the ordering
(\ref{ordering})
has length at least $2^n$, where $M_n$ has size $O(n)$.
\end{theorem}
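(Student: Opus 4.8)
The plan is to reduce Theorem~\ref{theorem: mdp} to the already-established parity-game bound (Theorem~\ref{lbthm}) by showing that policy iteration on $M_n$ traces out exactly the same run as Algorithm~\ref{algorithm: cunningham} on $G_n$. The two objects share the same player-$0$ nodes and the same player-$0$ edges (compare the vertex definitions and Figure~\ref{table: mdp lower bound edges} with Figure~\ref{table: parity game lower bound edges}), so a policy of $M_n$ and a strategy of $G_n$ are literally the same object, and an improving switch is an edge flip in both worlds. Consequently it suffices to prove that for every policy $\sigma$ the set $I_\sigma$ of improving switches computed by the MDP coincides with the set of $\prec_\sigma$-improving switches computed by the parity game. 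Once this is known, feeding both algorithms the same ordering~(\ref{ordering}), the same successor operator $\succedge{\prec}{e}{I_\sigma}$, the same initial policy $\{a_*^0,b_*^0,c_*^0,d_*^1,e_*^0\}$ and the same $\prec$-minimal initial edge forces the two runs to agree step for step; hence $P_n$ has the same length in $M_n$ as in $G_n$, namely at least $2^n$, while $|M_n|\in\BigO{n}$ is immediate from the construction.

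The heart of the argument is therefore a value--valuation correspondence: I would compute $\mdpval{\sigma}(v)$ for every node and show that the resulting ordering on nodes matches the reward ordering $\prec_\sigma$ on the parity valuations $\Xi_\sigma(v)$. The mechanism is the priority encoding $\pri{v}=(-N)^{\Omega(v)}$, which turns the lexicographic, top-priority-dominated comparison of path components in the parity game into an ordinary comparison of sums of rewards in the MDP. Lemma~\ref{lemma: priorities work as in parity games} is exactly the tool for this: part~(2) guarantees that the sign and magnitude of the single largest priority on a path dominate all lower-order contributions, so two expected rewards compare the same way as the corresponding dominant cycle/path nodes, while part~(1) together with $\varepsilon\le\frac{1}{2n}$ keeps the $\varepsilon$-weighted corrections from ever overturning a genuine priority difference. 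Lemma~\ref{lemma: mdp unichain} supplies the sink property (every policy reaches $t$ almost surely), which is the MDP counterpart of the sink-game condition and ensures the values are well defined.

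Concretely, I would prove an MDP analogue of Lemma~\ref{lemma: valuation of player one nodes}, evaluating the randomization nodes $F_i$ phase by phase. This is where the probabilities $\varepsilon$, $0.5(1-\varepsilon)$, $0.5(1-\varepsilon)$ at $F_i$ earn their keep: the small mass $\varepsilon$ on $h_i$ together with the balanced mass on $d_i$ and $e_i$ makes the expected value of $F_i$ select, in the ordering, exactly the successor that player~$1$'s optimal counter-strategy $\tau_\sigma$ would choose in the parity game, and it simultaneously encodes the finer length component of $\Xi_\sigma$. With these node values in hand the MDP analogue of Lemma~\ref{lemma: improving switches} follows---$(u,v')\in I_\sigma$ if and only if $\sigma(u)\prec_\sigma v'$---so Figure~\ref{table: improving switches} describes the improving switches of $M_n$ verbatim. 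At that point the transition Lemma~\ref{lemma: policy proceeding}, which depends only on the phase table (Figure~\ref{table: phases}) and the improving-switch table, carries over unchanged, and an induction over the increments $\bit\mapsto\bit^\oplus$ of the binary counter from the initial to the terminal policy yields the claimed $2^n$ lower bound; Lemma~\ref{lemma: increasing valuation mdp} guarantees the required strict monotonicity so that the run is finite and well defined.

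The main obstacle I anticipate is precisely the value--valuation correspondence in the presence of the $\varepsilon$-perturbation. In the parity game the valuation is a lexicographic triple (cycle component, priority-multiset path component, length), whereas in the MDP it is collapsed to a single expected-reward scalar; I must verify that summing $(-N)^{\Omega}$-rewards weighted by reachability probabilities reproduces that lexicographic order \emph{exactly}, including the subtle second- and third-order comparisons that distinguish strategies differing only in their path length or in low-priority nodes. The danger is that a large number of small, lower-priority or $\varepsilon$-scaled contributions could in principle accumulate and flip a comparison decided at a higher priority; ruling this out is exactly what the bounds $|\sum(S)|<N+1$ and $\varepsilon\cdot|\sum(S)|<1$ of Lemma~\ref{lemma: priorities work as in parity games} are designed to prevent, and the careful phase-by-phase bookkeeping of the $F_i$ values is what makes the domination rigorous.
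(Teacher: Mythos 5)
Your proposal follows essentially the same route as the paper: establish that the MDP values $\mdpval{\sigma}$ reproduce the parity-game valuation ordering (via the $(-N)^{\Omega(v)}$ priority rewards and the bounds of Lemma~\ref{lemma: priorities work as in parity games}), conclude that the improving switches of $M_n$ are exactly those of Figure~\ref{table: improving switches}, and then invoke Lemma~\ref{lemma: policy proceeding} to inherit the $2^n$ bound from Theorem~\ref{lbthm}. The paper carries this out by explicitly computing the phase-by-phase node values in terms of the sums $T_i$ and $S_i$, which is precisely the ``MDP analogue of Lemma~\ref{lemma: valuation of player one nodes}'' you describe.
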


We will show that the sequence $P_n$ in Theorem \ref{theorem: mdp} is in one-to-one correspondence with
the sequence $P_n$ in Theorem \ref{lbthm}.
We first show that Figure~\ref{table: improving switches} indeed specifies the sets of improving switches 
for each phase $p$. 

\begin{lemma}
Let $n > 1$.
The improving switches from policies that belong to the phases in Figure~\ref{table: phases}
are as specified in Figure~\ref{table: improving switches}.
\end{lemma}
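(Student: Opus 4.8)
The plan is to transfer the parity-game characterisation of improving switches (Lemma~\ref{lemma: improving switches}) to the MDP essentially verbatim, by showing that the value ordering induced by $\mdpval{\sigma}$ agrees with the valuation ordering $\prec_\sigma$ used in $G_n$. Two facts established above are the pillars of this transfer. First, Lemma~\ref{lemma: mdp unichain} provides the MDP analogue of the sink-game property: under every policy the chain is absorbed at $t$, so the comparison of any two values is governed entirely by the priority-carrying nodes $P = \{s, g_*, h_*\}$ collected before absorption. Second, Lemma~\ref{lemma: priorities work as in parity games} guarantees that the priority-encoded rewards $\pri{v} = (-N)^{\Omega(v)}$ are dominated by their highest-priority term: if two plays differ first in whether they collect the reward of a node of priority $p$, the sign of that term (positive for even $p$, negative for odd $p$) decides the comparison, and no accumulation of lower-priority or non-priority rewards can overturn it. This is exactly the highest-relevant-node rule defining $\prec$ on the filtered path components $\Xi^{>6}_\sigma$, whose members are precisely the priority-$>6$ nodes $g_*, h_*, s$, i.e.\ the elements of $P$.

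First I would record, for a policy $\sigma$ belonging to one of the phases with configuration $\bit$, the value analogue of Lemma~\ref{lemma: valuation of player one nodes}: the values $\mdpval{\sigma}(s)$, $\mdpval{\sigma}(g_i)$ and $\mdpval{\sigma}(F_i)$, expressed as expected sums of the rewards $\pri{u}$ over $u \in P$. Writing each value as such a sum plus a bounded non-priority contribution, part~1 of Lemma~\ref{lemma: priorities work as in parity games} lets me discard the bounded part, and part~2 reduces every comparison to the single most relevant differing priority node. Hence each value is determined, up to $\prec_\sigma$-equivalence, by exactly the same set of $P$-nodes that constitutes the corresponding $\Xi^{>6}_\sigma$ value in $G_n$.

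With this dictionary in hand, the proof proceeds by the same phase-by-phase, node-by-node case distinction as Lemma~\ref{lemma: improving switches}. For each player-$0$ node $a_i, b_i, c_i, d_i, e_i$ I would compare the values of its two successors (for instance $\mdpval{\sigma}(F_i)$ against $\mdpval{\sigma}(s)$ for $e_i$, against $\mdpval{\sigma}(b_{i-1})$ for $d_i$, and $\mdpval{\sigma}(g_i)$ against the ladder successor for $a_i, b_i, c_i$) and read off the improving-switch condition, matching Figure~\ref{table: improving switches} cell by cell. Since the value comparisons coincide with the $\prec_\sigma$-comparisons already carried out for $G_n$, every cell is reproduced and the lemma follows.

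I expect the randomization node $F_i$ to be the main obstacle. In $G_n$ the node $F_i$ belongs to player~$1$, whose optimal counter-strategy $\tau_\sigma$ selects the single minimizing successor among $h_i, d_i, e_i$; in $M_n$ the node $F_i$ is a randomization vertex taking the fixed convex combination with weight $\varepsilon$ on $h_i$ and $\tfrac12(1-\varepsilon)$ on each of $d_i, e_i$. The crux is to show that, for every comparison deciding a player-$0$ switch, the dominant priority node appearing in this mixture is precisely the node player~$1$ would select, so that the averaged value $\mdpval{\sigma}(F_i)$ orders against the alternatives exactly as $\Xi^{>6}_\sigma(F_i)$ does. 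This is where the tuning of $\varepsilon$ against $N$ matters: the $\varepsilon$-weight on $h_i$ must be small enough that the reward $(-N)^{2i}$ collected at $h_i$ cannot dominate a more relevant node reached through $d_i$ or $e_i$, yet large enough that, when $h_i$ is the most relevant node in the mixture, its contribution still prevails. Both are exactly the inequalities supplied by Lemma~\ref{lemma: priorities work as in parity games}.
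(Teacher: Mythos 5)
Your proposal is correct and follows essentially the same route as the paper: the paper's proof computes $\mdpval{\sigma}$ for the relevant nodes phase by phase as sums $T_i, S_i$ of priority rewards (the exact analogues of the filtered path components $\Xi^{>6}_\sigma$), discards the $\varepsilon\cdot\BigO{1}$ contributions via Lemma~\ref{lemma: priorities work as in parity games}, and reads off the improving switches by comparing differences — including handling the randomization at $F_i$ exactly through the $\varepsilon$-weighting you identify as the crux. Your framing as an explicit ``dictionary'' with the parity-game valuations is slightly more systematic than the paper's direct computation, but the substance is identical.
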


\begin{proof}
Let $n > 1$ and $\sigma$ be a policy belonging to one of the phases w.r.t.\
$\bit$. Let $\mu = \mu(\bit)$. Define $T_i = \pri{h_i} + \sum_{j > i,\bit_j = 1} (\pri{g_j} + \pri{h_j})$ and $S_i = \pri{g_i} + T_i$.
\begin{itemize}

\item \emph{Phase 1}:
	The following holds:
	\begin{flalign*}
		\mdpval{\sigma}(g_i) &= \begin{cases}
			S_i & \text{ if } \sigma(d_i) = 1, \sigma(e_i) = 1 \\
			\varepsilon \cdot T_i + \pri{g_i} + (1-\varepsilon) \cdot (\pri{s} + S_\mu) & \text{ if } \sigma(d_i) = 1, \sigma(e_i) = 0 \\
			\varepsilon \cdot T_i + \pri{g_i} + (1-\varepsilon) \cdot S_\mu & \text{ if } \sigma(d_i) = 0, \sigma(e_i) = 1 \\
			\varepsilon \cdot T_i + \pri{g_i} + \frac{1-\varepsilon}2 \cdot (\pri{s} + 2\cdot S_\mu) & \text{ if } \sigma(d_i) = 0, \sigma(e_i) = 0
		\end{cases} \\
		\mdpval{\sigma}(e_i) &= \begin{cases}
			\pri{s} + S_\mu & \text{ if } \sigma(e_i) = 0
		\end{cases}
	\end{flalign*}
	It is easy to see that $e_i^1$ are improving switches as $T_i > S_\mu$.
	It can furthermore be easily observed that $d_i^0$ is an improving switch for $\bit_i = 0$, $i > \nsb(\bit)$ iff
	$\sigma(e_{\nsb(\bit)}) = 1$ and $\sigma(b_{\nsb(\bit)}) = 1$.

\item \emph{Phase 2}:
	Similar to phase 5.

\item \emph{Phase 3}:
	Similar to phase 4.

\item \emph{Phase 4}:
	The following holds:
	\begin{flalign*}
		\mdpval{\sigma}(b_i) &= \begin{cases}
			S_\mu & \text{ if } i \geq \mu \\
			S_1 & \text{ otherwise}
		\end{cases} \\
		\mdpval{\sigma}(F_i) &= \begin{cases}
			T_i & \text{ if } \bit_i^\oplus = 1 \\
			\varepsilon \cdot \BigO{1} + \frac12 \cdot (\pri{s} + S_\mu + \mdpval{\sigma}(b_{i-1})) & \text{ if } \bit_i^\mu = 0 \wedge \sigma(d_i) \not= F_i \\
			\varepsilon \cdot \BigO{1} + \pri{s} + S_\mu & \text{ otherwise}
		\end{cases}
	\end{flalign*}
	We conclude: $\mdpval{\sigma}(F_i) - \mdpval{\sigma}(b_{i-1}) = $
	\begin{displaymath}
	  \begin{cases}
	  	T_i - \mdpval{\sigma}(b_{i-1}) \geq T_i - S_\mu > 0 & \text{ if } S_\mu \\
	  	\varepsilon \cdot \BigO{1} + \frac12 \cdot (\pri{s} + S_\mu - \mdpval{\sigma}(b_{i-1})) \geq \varepsilon \cdot \BigO{1} + \frac12 \cdot \pri{s} > 0 & \text{ if } \bit_i^\mu = 0 \wedge \sigma(d_i) \not= F_i \\
	  	\varepsilon \cdot \BigO{1} + \pri{s} + S_\mu - \mdpval{\sigma}(b_{i-1}) \geq \varepsilon \cdot \BigO{1} + \pri{s} > 0 & \text{ otherwise}
	  \end{cases}
	\end{displaymath}

\item \emph{Phase 5}:
	The following holds:
	\begin{flalign*}
		\mdpval{\sigma}(g_i) &= \begin{cases}
			S_i & \text{ if } \bit_i^\oplus = 1 \\
			\varepsilon \cdot \BigO{1} + \pri{g_i} + \pri{s} + S_\mu & \text{ otherwise}
		\end{cases} \\
		\mdpval{\sigma}(a_i) &= \begin{cases}
			S_i & \text{ if } \bit_i^\oplus = 1 \wedge (i > \mu \vee \sigma(a_\mu) = g_\mu) \\
			\mdpval{\sigma}(a_{i+1}) & \text{ if } i = \mu \wedge \sigma(a_\mu) \not= g_\mu) \\
			S_\mu & \text{ if } i < \mu \wedge \sigma(a_i) \not= g_i \\
			\varepsilon \cdot \BigO{1} + \pri{g_i} + \pri{s} + S_\mu & \text{ if } i < \mu \wedge \sigma(a_i) = g_i
		\end{cases}
	\end{flalign*}
	By computing the difference we again see that the improving switches are as described.
\end{itemize}
\end{proof}

Now Lemma~\ref{lemma: policy proceeding} becomes applicable again and the proof of Theorem \ref{theorem: mdp}
follows.

\section{Linear Program Simplex Method Lower Bound}
\label{lp}

In this section we use the well known transformation from MDPs to linear programs to obtain
an exponential lower bound for the simplex method using Cunningham's rule.
\subsection{Linear Programs and the Simplex Method}
We briefly give a few basic definitions and state our notation. For more information the reader is referred
to any standard linear programming text, such as \cite{Chvatal}.
Given an $m$ by $n$ matrix $A$ with $m \le n$, an $n$-vector $c$ and $m$-vector $b$
we consider the primal linear program in the standard form
$$\begin{array}{ll}
\max & c^T x \\
\text{s.t.} & Ax = b \\
   &           x \ge 0
\end{array}$$
Let $B$ and $N$ be a partition of the indices $\{1,2, \ldots ,n\}$ such that $|B|=m$.
We denote by $A_B$ the submatrix of $A$ with columns indexed by $B$,
and $x_B$ the $m$-subvector of $x$ with indices in $B$. We say that
$B$ is a \emph{feasible basis} if $A_B$ is non-singular
and
\[
x_B = A_B^{-1} b \ge 0.
\]
A corresponding \emph{basic feasible solution(BFS)} $x$
is obtained if we extend $x_B$ to $x$ by setting $x_N =0$.

We call $c^T x$ the \emph{objective function}.
A \emph{pivot} from a feasible basis $B$ is defined by a pair of indices $i \in B$ and $j \in N$ for
which $B \setminus \{i\} \cup \{j\}$ is also a feasible basis. If the corresponding BFS are $x$ and $x'$,
then the pivot is \emph{improving} if $ c^T x \le c^T x'$. 
If the inequality is strict then we call the pivot step \emph{non-degenerate} otherwise it is called
\emph{degenerate}.
An \emph{optimal basis} is one
for which the corresponding BFS maximizes the objective function. 
A \emph{deterministic pivot rule} gives a unique pivot pair for 
every non-optimal feasible basis. 

The simplex method starts from a given feasible basis and applies improving pivots until
an optimal basis is obtained. The sequence of pivots depends on the specific pivot rule,
and care must be taken to ensure that it does not cycle if there are degenerate pivots.

The corresponding dual LP is written
$$\begin{array}{ll}
\min & b^T y \\
\text{s.t.} & A^T y \ge c. \\
\end{array}$$

\subsection{Markov Decision Processes as LPs}
\label{mdplp}

Optimal policies for MDPs that satisfy the unichain condition can be found by solving the following 
primal linear program (see, e.g., \cite{Puterman94}.)
$$\begin{array}{ll}
\max  & \sum_{(u,v)\in E_0} r(u,v)x(u,v) \\[1pt]
\text{s.t.} & \sum_{(u,v)\in E} x(u,v) - \sum_{(v,w)\in E_0,(w,u)\in E_R} p(w,u)x(v,w)  = 1 ,\, u\in V_0 \\[1pt]
            & x(u,v) \;\ge\; 0 \quad,\quad (u,v)\in E_0
\end{array}\leqno{(P)}$$
The variable $x(u,v)$, for $(u,v)\in E_0$, stands for the probability (frequency) of using the edge (action) $(u,v)$.
The constraints of the linear program are \emph{conservation constraints} that state that the probability of
entering a vertex~$u$ is equal to the probability of exiting~$u$.
It is not difficult to check that the BFS's of (P) correspond directly to policies of the MDP. 
For each policy $\sigma$ we can define a feasible setting of primal variables $x(u,v)$, for $(u,v)\in E_0$,
such that $x(u,v)>0$ only if $\sigma(u)=(u,v)$. Conversely, for every BFS $x(u,v)$ we can define a corresponding
policy $\sigma$. It is well known that the policy corresponding to an optimal BFS of (P) is an optimal policy
of the MDP. (See, e.g., \cite{Puterman94}.)
\begin{lemma}\label{lemma: primal bfs valuation}
Let $\sigma$ be a policy and $x(u,v)$ be a corresponding BFS. Then the following holds:
\[
\sum_{u \in E_0} \mdpval{\sigma}(u) = \sum_{(u,v)\in E_0} r(u,v)x(u,v)
\]
\end{lemma}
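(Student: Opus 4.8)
The plan is to prove the identity by a direct algebraic manipulation that plays the defining linear equations for $\mdpval{\sigma}$ against the conservation constraints of the program $(P)$; the ``$+1$'' appearing on the right-hand side of each conservation constraint is precisely what will produce the sum $\sum_{u\in V_0}\mdpval{\sigma}(u)$ (the left-hand side is understood as the sum of the values over the controller vertices $V_0$, matching Lemma~\ref{lemma: increasing valuation mdp}). There is an equivalent probabilistic reading, in which $x(u,v)$ is the expected number of traversals of the edge $(u,v)$ when one unit of mass is injected at every controller vertex and the chain is run under $\sigma$ until absorption at $t$, so that both sides simply compute the same total expected reward; but the algebraic route is shorter and needs Lemma~\ref{lemma: mdp unichain} only for the well-definedness of $\mdpval{\sigma}$.

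First I would use that on the support of the BFS $x$ we have $v=\sigma(u)$, so the controller value equation gives $r(u,v)=\mdpval{\sigma}(u)-\mdpval{\sigma}(v)$ there; since every off-support term carries the factor $x(u,v)=0$, this lets me rewrite the objective as
\[
\sum_{(u,v)\in E_0} r(u,v)\,x(u,v) \;=\; \sum_{(u,v)\in E_0}\bigl(\mdpval{\sigma}(u)-\mdpval{\sigma}(v)\bigr)\,x(u,v).
\]
Splitting this into two sums and regrouping, the first becomes $\sum_{u\in V_0}\mdpval{\sigma}(u)\,X(u)$ with $X(u)=\sum_{v}x(u,v)$ the total outflow of $u$, while the second runs over the randomization targets $w\in V_R$ and equals $\sum_{w\in V_R}\mdpval{\sigma}(w)\sum_{u}x(u,w)$, each randomizer value weighted by its total inflow.

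Next I would eliminate the randomizer values using the second case of the defining equations, $\mdpval{\sigma}(w)=\sum_{(w,u')\in E_R}p(w,u')\,\mdpval{\sigma}(u')$. Substituting and swapping the order of summation to group by the controller target $u'$ turns the second sum into $\sum_{u'\in V_0}\mdpval{\sigma}(u')\,Y(u')$, where $Y(u')=\sum_{(u,w)\in E_0,\,(w,u')\in E_R}p(w,u')\,x(u,w)$ is exactly the inflow term appearing (with a minus sign) in the conservation constraint of $(P)$ for $u'$. The objective has thus become $\sum_{u\in V_0}\mdpval{\sigma}(u)\bigl(X(u)-Y(u)\bigr)$, and the conservation constraint states precisely that $X(u)-Y(u)=1$ for every $u\in V_0$. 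Hence each bracket collapses to $1$ and the objective equals $\sum_{u\in V_0}\mdpval{\sigma}(u)$, as claimed.

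The main obstacle is the bookkeeping in the middle step: I must check that, after substitution and re-indexing, the inner double sum over the triples $(u,w,u')$ really reproduces the $Y(u')$ occurring in $(P)$, which relies on the bipartite layering $E_0\subseteq V_0\times V_R$ and $E_R\subseteq V_R\times V_0$ so that controller and randomization vertices strictly alternate. The remaining points are routine: the reward substitution off the support of $x$ is harmless because those terms vanish, and the equations used for $\mdpval{\sigma}$ hold by definition once $\mdpval{\sigma}$ is well defined, which is guaranteed by the unichain property together with Lemma~\ref{lemma: mdp unichain}.
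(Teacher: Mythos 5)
Your proof is correct. Note that the paper itself states this lemma without proof, treating it as part of the standard correspondence between unichain MDPs and the linear program $(P)$ (the kind of fact it attributes to Puterman), so there is no in-paper argument to compare against. Your derivation is the standard one and it is complete: substituting $r(u,v)=\mdpval{\sigma}(u)-\mdpval{\sigma}(v)$ on the support of $x$, telescoping through the randomization vertices via $\mdpval{\sigma}(w)=\sum_{(w,u')\in E_R}p(w,u')\,\mdpval{\sigma}(u')$, and collapsing each coefficient $X(u)-Y(u)$ to $1$ using the conservation constraints. You also correctly flag the two points that need care — that the left-hand side must be read as a sum over $V_0$ (the ``$u\in E_0$'' in the statement is a typo), and that the regrouping step relies on the bipartite layering $E_0\subseteq V_0\times V_R$, $E_R\subseteq V_R\times V_0$ under which $(P)$ is formulated.
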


\noindent The dual linear program (for unichain MDPs) is:
$$\begin{array}{ll}
\min & \sum_{u \in V} y(u) \\
\text{s.t.} & y(u)-\sum_{(v,w)\in E_R} p(v,w)y(w) \;\ge\; r(u,v) \quad,\quad (u,v)\in E_0
\end{array}\leqno{(D)}$$
together with the condition that $y(u)$ sum up to~$0$ on the single irreducible recurrent class.

If $y^*$ is an optimal solution of (D), then $y^*(u)$, for every $u\in V_0$, is the value of~$u$
under an optimal policy. An optimal policy $\sigma^*$ can be obtained by letting
$\sigma^*(u)=(u,v)$, where $(u,v)\in E_0$ is an edge for which the inequality constraint in
(D) is tight, i.e., $y(u)-\sum_{w:(v,w)\in E_R} p(v,w)y(w) = r(u,v)$. Such a tight edge is guaranteed to exist.

\subsection{Policy Iteration and Simplex Method}

A policy iteration algorithm with policy $\sigma$ that perform a single switch at each iteration --
like Cunningham's rule --  corresponds to a variation of the simplex method
where the selection rule behaves like $\sigma$. Indeed $\sigma$
gives rise to a feasible solution $x(u,v)$ of the primal
linear program~(P) . We use $\sigma$ to define a Markov chain and let $x(u,v)$ be
the `steady-state' probability that the edge (action) $(u,v)$ is used. In
particular, if $\sigma(u)\ne v$, then $x(u,v)=0$.

We can also view the values corresponding to~$\sigma$ as settings of the variables
$y(u)$ of the dual linear program (D). By linear programming duality, if $y(u)$ is
feasible then $\sigma$ is an optimal policy. It is easy to check that an edge
$(u,v')\in E_0$ is an improving switch if and only if the dual constraint
corresponding to $(u,v')$ is violated. Furthermore, replacing the edge $(u,v)$ by
the edge $(u,v')$ corresponds to a pivoting step, with a non-negative reduced cost,
in which the column corresponding to $(u,v')$ enters the basis, while the column
corresponding to $(u,v)$ leaves the basis.

\subsection{Lower Bound Construction}

Let $n > 1$.
The variables of the LP correspond to the edges $E_0$ controlled by player~0, i.e.\ we have $10(n-1)$ variables
\begin{displaymath}
\{a_i^1, a_i^0, c_i^1, c_i^0, d_i^1, d_i^0 \mid 1 < i \leq n\} \cup \{b_i^1, b_i^0 \mid 1 < i < n\} \cup \{e_i^1, e_i^0 \mid 1 \leq i \leq n\}.
\end{displaymath}
The LP has $5(n - 1)$ constraints, corresponding to the nodes $V_0$ controlled by player~0, and labelled 
\begin{displaymath}
\{a_i, c_i, d_i \mid 1 < i \leq n\} \cup \{b_i \mid 1 < i < n\} \cup \{e_i \mid 1 \leq i \leq n\}.
\end{displaymath}
The linear program is defined as follows for each $n \ge 3$ (non-existent variables are assumed to be zero):
$$\begin{array}{ll}
LP_n: \\
\max  \sum_{i=1}^n \left(\left(a_i^1 + b_i^1 + c_i^1\right)  \left(\Omega(g_i) + \varepsilon  \Omega(h_i)\right) +
                    \varepsilon \left(d_i^1 + e_i^1\right)   \Omega(h_i) +
                    e_i^0 \Omega(s)\right) 
\end{array}
$$
$$\begin{array}{ll}
\text{subject to:}  \\
(a_2) \quad a_2^0 + a_2^1 = 1  + \varepsilon  ( b_2^0 + c_2^0 + d_2^0 + e_1^1 ) \\
(a_i) \quad a_i^0 + a_i^1 = 1 + a_{i-1}^0 + \varepsilon  (a_{i-1}^1 + b_{i-1}^1 + 
                  c_{i-1}^1 + d_{i-1}^1 + e_{i-1}^1)~~ & 3 \le i \le n\\
(b_i) \quad b_i^0 + b_i^1 = 1 + b_{i+1}^0 + d_{i+1}^0 & 2 \le i < n \\
(c_i) \quad c_i^0 + c_i^1 = 1 + c_{i+1}^0 & 2 \le i < n \\
(c_n) \quad c_n^0 + c_n^1 = 1 + \sum_{j=1}^n e_j^0 \\
(d_i) \quad d_i^0 + d_i^1 = 1 + \frac{1-\varepsilon}2 (a_i^1+b_i^1+c_i^1+d_i^1+e_i^1) & 2 \le i \le n \\
(e_1) \quad e_1^0 + e_1^1 = 1 + (1-\varepsilon) (b_2^0+c_2^0+d_2^0+e_1^1) \\
(e_i) \quad e_i^0 + e_i^1 = 1 +  \frac{1-\varepsilon}2 (a_i^1+b_i^1+c_i^1+d_i^1+e_i^1) & 2 \le i \le n \\
            ~~~~~~~~~~\text{ All variables non-negative}
\end{array}
$$
Note that the size of $LP_n$ is linear in $n$.
Depending on the context we let $LP_n$ denote both the linear program and the polytope defined
by its constraints.
For small values of $n$ explicit constructions of $LP_n$ and it dual are available online \cite{online}.

We now use the correspondence between the parity game $G_n$, the MDP $M_n$ and linear program $LP_n$
to get a lower bound for the simplex method using Cunningham's rule. 
The initial strategy for $G_n$ and initial policy for $M_n$, $\{ a_*^0, b_*^0, c_*^0, d_*^1, e_1^1, e_{*>1}^0 \}$,
defines a {\em starting basis} for $LP_n$. 
We construct a path on the polytope $LP_n$ from this starting basis
using the least recently considered rule with ordering (\ref{ordering}). 
Using this construction, Theorem \ref{theorem: mdp} implies that the path generated
will be in one-to-one correspondence with the sequence $P_n$ 
of improving switches generated in $M_n$ (and hence $G_n$).
We observe that the objective function strictly increases 
with each pivot due to Lemma~\ref{lemma: primal bfs valuation}
and Lemma~\ref{lemma: increasing valuation mdp}. Therefore we have the following result.
\begin{theorem}
\label{theorem: lp}
The pivot path $P_n$ for $LP_n$ from the starting basis to the optimum basis
followed by the least recently considered rule with ordering (\ref{ordering})
has length at least $2^n$. The objective function strictly increases with each pivot.
\end{theorem}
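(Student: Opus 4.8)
The plan is to derive Theorem~\ref{theorem: lp} from the Markov decision process lower bound (Theorem~\ref{theorem: mdp}) by transporting the \emph{entire} run of policy iteration on $M_n$ to a run of the simplex method on $LP_n$ along the correspondence set up in Section~\ref{mdplp} and in the subsection on policy iteration and the simplex method. The statement has two parts: that the pivot path has length at least $2^n$, and that the objective strictly increases at every pivot. Both follow once I have verified that a single $\sigma$-improving switch in $M_n$ maps to a single simplex pivot in $LP_n$ selecting the \emph{same} edge under Cunningham's rule, and that no pivot is degenerate.

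First I would fix the dictionary between the two settings, citing the established facts rather than re-deriving them. By the primal program~(P), each policy $\sigma$ of $M_n$ determines a basic feasible solution $x_\sigma$ of $LP_n$, and this is a bijection between policies and the vertices of the polytope, with the basic columns being exactly the edges $\sigma(u)$, one per conservation constraint. The variables of $LP_n$ are \emph{literally} the player-0 edge names $a_i^1,a_i^0,\dots,e_i^1,e_i^0$ of $M_n$, so the total ordering~(\ref{ordering}) used by Cunningham's rule on the game is verbatim an ordering on the LP variables. By duality, an edge $(u,v')$ is a $\sigma$-improving switch exactly when the dual constraint of $(u,v')$ is violated, i.e.\ exactly when the reduced cost of the non-basic variable $(u,v')$ is positive; hence the improving-switch set $I_\sigma$ coincides with the set of eligible entering variables at $x_\sigma$. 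Replacing $\sigma(u)=(u,v)$ by $(u,v')$ is the pivot in which the column $(u,v')$ enters and, forced by the single conservation constraint at $u$, the column $(u,v)$ leaves.

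Given this dictionary, I would argue by induction along the run. Suppose after some pivots the simplex method sits at $x_{\sigma_i}$ with last entering variable $e_i$, exactly as in the policy-iteration run on $M_n$. Since $I_{\sigma_i}$ is the same edge set in both worlds and the ordering~(\ref{ordering}) is the same, the successor operator $\succedge{\prec}{e_i}{I_{\sigma_i}}$ returns the identical edge $e_{i+1}$, and the induced pivot moves to $x_{\sigma_{i+1}}$ with $\sigma_{i+1}=\sigma_i[e_{i+1}]$. Thus the simplex path is in one-to-one correspondence with the run $P_n$, starting at the BFS of the initial policy and ending at the BFS of the terminal policy, where $I_\sigma=\emptyset$ and, by duality, the basis is optimal. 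Theorem~\ref{theorem: mdp} then yields path length at least $2^n$, and the size of $LP_n$ is linear in $n$, as noted above.

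For the second claim, Lemma~\ref{lemma: primal bfs valuation} identifies the objective value at $x_\sigma$ with $\sum_{u\in V_0}\mdpval{\sigma}(u)$, and part~3 of Lemma~\ref{lemma: increasing valuation mdp} shows this sum strictly increases across any improving switch; hence the objective strictly increases at every pivot. I expect the main obstacle to be exactly the two structural facts that make the reduction faithful: that the leaving variable is \emph{forced} to be $(u,v)$, so that one switch equals one pivot rather than a longer ratio-test computation, and that no pivot is degenerate. The strict-increase argument resolves the latter at once: since the objective strictly rises, every pivot is non-degenerate, no vertex repeats, and the path on the polytope cannot cycle, so the correspondence is genuinely one-to-one on vertices.
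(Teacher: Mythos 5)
Your proposal is correct and follows essentially the same route as the paper: transport the policy-iteration run on $M_n$ to $LP_n$ via the standard primal formulation~(P), identify improving switches with violated dual constraints so Cunningham's rule selects the same entering variable at each step, invoke Theorem~\ref{theorem: mdp} for the length bound, and obtain strict monotonicity of the objective from Lemma~\ref{lemma: primal bfs valuation} together with Lemma~\ref{lemma: increasing valuation mdp}. Your added remarks on the forced leaving variable and non-degeneracy only make explicit what the paper leaves implicit.
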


\section{Acyclic Unique Sink Orientations}
\label{auso}

Our final result concerns \emph{acyclic
unique sink orientations} (AUSOs), which are abstractions of various optimization
problems including linear programming, linear complementarity and binary
payoff games. In this section we extend our exponential lower bound to finding the sink
of an AUSO using the least considered rule.
For background information
on AUSOs, see \cite{SzWe01,Ga02,GaSc06}. 

\subsection{Definitions and previous results}
\label{ausodef}

AUSOs can be defined on arbitrary polytopes, but here we consider only hypercubes.
An AUSO on a $n$-dimensional hypercube is an orientation of its
edges that is acyclic and such that every face of the hypercube has a unique 
sink (vertex of outdegree 0).
The goal is to find the unique sink of the AUSO.
There is at present no known polynomial time algorithm for doing this, nor is it known to be NP-hard.

A natural class of algorithms to find the sink of an AUSO are \emph{path following algorithms}.
Such an algorithm would start at any vertex $v$
of the hypercube and repeatedly choose an outgoing edge according to some rule
until the unique sink is located. Each edge of the path corresponds to
flipping one bit of the current vertex. 

There is a very natural analogy between path following algorithms and
pivoting in linear programming. Pivot rules for
LPs therefore have natural analogues for AUSOs and
a full discussion of this is contained in \cite{AADMM12}.
In particular the least recently considered rule can be adapted to give
a path following algorithm to find the unique sink of an $n$-cube AUSO starting at any given vertex as follows.

We define, for each $i=1,2,...,n$, the variable $v_i$ to denote a flip of bit $i$ from 0 to 1, and
variable $v_{n+i}$ to denote a flip of bit $i$ from 1 to 0. We may now arrange the $2n$ variables
$v_1, v_2, ... , v_{2n}$ in any cyclic order. For any vertex of the AUSO that is not the sink
we find the first allowable flip in this cyclic order starting at the last chosen $v_i$. 

Suppose we are given a polytope and an objective function that is not constant on any edge of the polytope.
We can can then orient each edge in the direction of increasing objective function. The
corresponding directed graph on the skeleton of the polytope can be shown to be an AUSO.
The converse is not always true.
Indeed, we call an AUSO \emph{realizable} if there is a polytope and objective function that induces
a directed graph on its skeleton which is graph isomorphic to the AUSO.
Not all AUSOs are realizable.

The Klee-Minty examples\cite{Klee1972} are realizations of AUSOs on hypercubes, so exponential lower
bounds for most of the non-history based deterministic LP pivot rules immediately give similar bounds for AUSOs.
G{\"a}rtner\cite{Ga02} gives an $exp(2 n^{1/2})$ lower bound for random facet algorithms and
Matousek and
Szab{\'o}\cite{MatousekS04} give an $exp(const~ n^{1/3})$ lower bound for the
random edge rule on AUSOs.
In the next subsection we derive an exponential lower bound for finding the sink of a realizable AUSO using
a path following algorithm based on Cunningham's rule.

\subsection{Lower Bound Construction}

As we saw in Section \ref{parity} there is a direct relationship between binary
parity games and oriented 
hypercubes. Each vertex $v$ corresponds to a strategy ${\sigma}$ for player 0.
A partial orientation of the hypercube's edges is 
given by the notion of improving switches: the orientation goes from $\sigma$ to
$\sigma'$ iff there is a game edge $e$ 
belonging to player 0 s.t. $\sigma' = \sigma[e]$ and $e$ is
$\sigma$-improving. This is only a partial orientation since there are strategies
for the parity game that are not used in the lower bound construction and for which
the notion of improving is not well defined.
For each $n \ge 3$ we denote this partially oriented $n$-cube $H_n$.

Our goal is to embed $H_n$ into an AUSO $A_n$ whose edges orientations are consistent
with those already set in $H_n$. Note that this is not a trivial operation, as even for
$n=3$ it can be readily verified that there are partial acyclic orientations of the 3-cube
which are USOs on every complete face but do not embed into an AUSO.
We will achieve this embedding via the linear programming formulation of the last section,
achieving the stronger result that the AUSO is realizable.

\begin{lemma}
$LP_n$ is a realization an AUSO $A_n$ which is consistent with the edge orientations of $H_n$. 
\end{lemma}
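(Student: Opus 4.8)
The plan is to exhibit $LP_n$ together with its objective function directly as the sought realization: I will argue that the skeleton of the polytope $LP_n$ is combinatorially the $5(n-1)$-cube, that the objective orients it into an AUSO, and that this orientation extends the partial orientation $H_n$. First I would pin down the combinatorial type of $LP_n$. The correspondence of Section~\ref{mdplp} already matches the basic feasible solutions of $(P)$ with the policies of $M_n$, i.e.\ with the $2^{5(n-1)}$ binary strategies of $G_n$; since every player-$0$ node constraint has right-hand side at least $1$ and all variables are non-negative, the basic variable selected at a node is always $\ge 1 > 0$, so each such vertex is non-degenerate, and two vertices differing in a single node's choice are joined by a genuine edge of $LP_n$ (one pivot, as established in Section~\ref{lp}). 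Using Lemma~\ref{lemma: mdp unichain} to guarantee that every policy yields a feasible, well-defined BFS, this identifies the vertex set of $LP_n$ with $\{0,1\}^{5(n-1)}$ and its edges with the single-bit flips, so the skeleton is exactly the $5(n-1)$-cube on which $H_n$ lives.

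Next I would orient each edge of $LP_n$ in the direction of increasing objective and invoke the fact recalled in Section~\ref{ausodef}: a linear objective that is non-constant on every edge of a polytope induces an AUSO. The reduction is as follows. An edge of $LP_n$ joins a policy $\sigma$ to $\sigma' = \sigma[v \mapsto w']$ with $vE = \{w, w'\}$ and $w = \sigma(v)$, and by Lemma~\ref{lemma: primal bfs valuation} the objective equals $\sum_u \mdpval{\sigma}(u)$ at the first endpoint and $\sum_u \mdpval{\sigma'}(u)$ at the second. By Lemma~\ref{lemma: increasing valuation mdp} this sum strictly increases when the switch is $\sigma$-improving and strictly decreases when its reverse is $\sigma'$-improving; hence the objective is non-constant on the edge \emph{unless} $\mdpval{\sigma}(w) = \mdpval{\sigma}(w')$. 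Everything therefore reduces to the genericity claim that the two successors of every player-$0$ node have distinct values under \emph{every} policy. This is the step I expect to be the main obstacle, since the valuations computed earlier (Lemma~\ref{lemma: valuation of player one nodes}) only cover the strategies lying in the five phases, whereas here I must control all $2^{5(n-1)}$ policies.

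To settle the genericity I would rely on the reward design $\pri{v} = (-N)^{\Omega(v)}$ and Lemma~\ref{lemma: priorities work as in parity games}: under any policy a node value is a sum of the priority rewards of the nodes of $P = \{s, g_*, h_*\}$ reachable from it, weighted by probabilities that are products of $\varepsilon$ and $\tfrac{1-\varepsilon}{2}$, and by the number-system property of that lemma the value is governed by the largest reachable priority. A short case analysis over the five successor pairs $\{g_i, a_{i+1}\}$, $\{g_i, b_{i-1}\}$, $\{g_i, c_{i-1}\}$, $\{F_i, b_{i-1}\}$, $\{F_i, s\}$ then shows that the two members always reach priority-node sets whose dominant priorities differ, residual ties being broken by the $\varepsilon$-weighted contribution that the randomization node $F_i$ forces through $h_i$; this is where the admissible choices $N \ge 2n$ and $\varepsilon \le \tfrac{1}{2n}$ are used. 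With non-constancy on every edge established, the orientation is induced by a linear functional and is therefore acyclic and has a unique sink on every face, so it is an AUSO $A_n$, realized by the polytope $LP_n$.

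Finally I would verify consistency with $H_n$. An edge of $H_n$ is oriented from $\sigma$ to $\sigma' = \sigma[e]$ exactly when $e$ is a $\sigma$-improving switch, and $H_n$ is only partial because improving switches are undefined on the strategies outside the construction. Since the improving switches of $M_n$ coincide with those of $G_n$ (Figure~\ref{table: improving switches}, proven for the MDP in Section~\ref{mdp}) and every improving switch strictly increases $\sum_u \mdpval{\sigma}(u)$ by Lemma~\ref{lemma: increasing valuation mdp}, the objective is strictly larger at $\sigma'$ than at $\sigma$. Hence $A_n$ orients that same edge from $\sigma$ to $\sigma'$, so $A_n$ agrees with $H_n$ wherever the latter is defined and completes it to a full orientation, which is the assertion of the lemma.
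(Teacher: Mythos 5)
Your overall architecture is the same as the paper's: identify the vertices of $LP_n$ with the policies of $M_n$ (hence with the strategies of $G_n$ and the vertices of $H_n$), identify its edges with single switches, orient by the objective, and derive consistency with $H_n$ from Lemmas~\ref{lemma: primal bfs valuation} and~\ref{lemma: increasing valuation mdp}, which give strict increase of the objective across every improving-switch edge. Your non-degeneracy remark (each basic variable $x(u,\sigma(u))\ge 1$ because of the unit source at $u$) is a nice explicit justification that the skeleton really is the cube. Where you diverge from the paper is in how ties in the objective are handled, and that is exactly where your argument has a gap.

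You reduce everything to the claim that the objective is non-constant on \emph{every} edge of $LP_n$, i.e.\ that the two successors of each player-$0$ node induce distinct values under every one of the $2^{5(n-1)}$ policies, and you dispose of this with ``a short case analysis \ldots then shows.'' That analysis is never carried out, and it is not covered by anything proved earlier: Lemma~\ref{lemma: valuation of player one nodes} only computes valuations for strategies lying in the five phases (as you yourself note), and Lemma~\ref{lemma: priorities work as in parity games} compares plain sums $\sum(S)$ over subsets of priority nodes, whereas at a randomization node the relevant quantity is the convex combination $\varepsilon\,\mdpval{\sigma}(h_i)+\frac{1-\varepsilon}{2}\bigl(\mdpval{\sigma}(d_i)+\mdpval{\sigma}(e_i)\bigr)$; ruling out a coincidence of this with $\mdpval{\sigma}(s)$ or $\mdpval{\sigma}(b_{i-1})$ for \emph{all} policies, including the many configurations outside the phase table, is a genuine piece of work rather than a one-line consequence of $N\ge 2n$ and $\varepsilon\le\frac{1}{2n}$. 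The paper sidesteps this entirely: it applies symbolic dual perturbation to resolve whatever ties may exist, obtaining an AUSO $A_n$ that is by construction realizable, and then observes that perturbation cannot reverse an edge on which the unperturbed objective already strictly increases --- and the edges of $H_n$ are precisely such edges by the two lemmas above. So either carry out your genericity analysis in full, or replace it with the perturbation step; as written, the non-constancy claim is the load-bearing step of your proof and it is only asserted.
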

\begin{proof}
We argued in Section \ref{mdplp} that each basic feasible solution of $LP_n$ corresponded 
directly to a policy of the corresponding MDP $M_n$, and hence to strategy of the
parity game $G_n$. Also each edge of $LP_n$ corresponded to a switch in $M_n$ and
hence to an edge owned by player 0 in $G_n$. It follows that the vertices
and edges of $LP_n$ and $H_n$ are in one-to-one 
correspondence. Since $H_n$ is an $n$-cube so is the skeleton of $LP_n$.
By applying symbolic dual perturbation if necessary to 
resolve ties in the objective function (see, e.g. \cite{Chvatal}) $A_n$ can be
oriented to give an AUSO, which we denote as $A_n$. By definition $A_n$ is realizable.

According to Theorem \ref{theorem: lp} the objective function
of $LP_n$ is non-constant on every edge and increases in the direction corresponding
to an improving switch in $M_n$. 
However improving switches in $M_n$ correspond
to improving edges in $G_n$. The edges of $H_n$ were directed in the same way 
as these improving edges. Since dual perturbation will not change the direction
of any edges for which the objective function is strictly increasing, the directed edges in $H_n$
maintain their directions in $A_n$.
The lemma follows.
\qed
\end{proof}

The lemma implies that not only is $H_n$ an AUSO but stronger properties, such as the 
Holt-Klee condition and the shelling property, also hold (see, e.g. \cite{AM09}).
For small values of $n$ the AUSOs $A_n$ are available online \cite{online}.

We may now apply the least recently considered rule to $A_n$ as described in Section \ref{ausodef}.
The starting basis of $LP_n$ $\{ a_*^0, b_*^0, c_*^0, d_*^1, e_*^0 \}$ defines a 
{\em starting vertex} of $A_n$.
We construct a path from this vertex
using the least recently considered rule with ordering (\ref{ordering}).
Using this construction, Theorem \ref{theorem: lp} implies that the path generated
will be in one-to-one correspondence with the path $P_n$ generated in $LP_n$.

\begin{theorem}
The directed path $P_n$ in $A_n$ from the starting vertex 
to the unique sink
followed by the least recently considered rule with ordering (\ref{ordering})
has length at least $2^n$.
\end{theorem}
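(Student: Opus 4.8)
The plan is to transfer Theorem~\ref{theorem: lp} across the realization established in the preceding lemma: I would show that running Cunningham's path-following rule on $A_n$ reproduces, step for step, the pivot path $P_n$ on $LP_n$, so that the length bound of at least $2^n$ is inherited directly. The first ingredient is the dictionary between the two rules. The preceding lemma identifies the vertices and edges of $A_n$ with the basic feasible solutions and edges of $LP_n$, and hence with the player~$0$ strategies and edges of $G_n$. The $5(n-1)$ coordinates of the hypercube are exactly the player~$0$ nodes, so the directed bit-flips used by the path-following rule of Section~\ref{ausodef} correspond to the $10(n-1)$ player~$0$ edges (equivalently, the variables of $LP_n$): flipping a coordinate from $0$ to $1$ is the flip whose entering variable is the $1$-labelled edge at that node, and flipping from $1$ to $0$ is the reverse. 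Under this identification the cyclic ordering~(\ref{ordering}) on player~$0$ edges is precisely a cyclic ordering of the flip variables, and the ``last chosen variable'' maintained by the rule on $A_n$ is the same object as the last edge recorded by Cunningham's rule on $LP_n$.

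Next I would verify that an \emph{allowable} flip on $A_n$ coincides with an \emph{improving} pivot on $LP_n$. By the preceding lemma every edge of $A_n$ is oriented in the direction of increasing objective of $LP_n$, which by Theorem~\ref{theorem: lp} is the direction of the corresponding improving switch in $M_n$ and $G_n$. Hence, at any vertex of $A_n$ corresponding to a strategy visited by $P_n$, the outgoing (allowable) flips are exactly the improving pivots at the associated basic feasible solution. With this in hand the argument is a straightforward induction on the number of steps: starting from the common starting vertex $\{ a_*^0, b_*^0, c_*^0, d_*^1, e_*^0 \}$ with the $\prec$-minimal flip as history, at each step both rules select the first allowable (respectively improving) flip in the cyclic order beginning at the recorded last flip; since the candidate sets, the ordering, and the history all agree, the two rules select the same flip and move to the same next vertex. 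Therefore the path generated on $A_n$ is literally $P_n$; it terminates at the vertex corresponding to the optimal basis, which is the unique sink of $A_n$, and has length at least $2^n$, while the dimension of $A_n$ is only $5(n-1) = \BigO{n}$.

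The only step requiring genuine care is this faithful identification of the abstract path-following rule on $A_n$ with the pivot rule on $LP_n$: one must confirm that the set of outgoing edges at each relevant vertex is precisely the set of improving switches (so that ``allowable'' and ``improving'' never diverge and the induction cannot leave the visited region), and that the bookkeeping of the least recently considered variable transfers unchanged between the abstract and the geometric settings. Everything else is routine, inherited directly from the chain of correspondences $G_n \leftrightarrow M_n \leftrightarrow LP_n$ already proved.
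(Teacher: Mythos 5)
Your proposal is correct and follows essentially the same route as the paper: the paper likewise derives the theorem by transferring Theorem~\ref{theorem: lp} through the realization lemma, noting that the starting basis of $LP_n$ gives the starting vertex of $A_n$ and that the path followed by the least recently considered rule on $A_n$ is in one-to-one correspondence with the pivot path $P_n$ on $LP_n$. Your additional care in checking that allowable flips coincide with improving pivots and that the history bookkeeping transfers is exactly the content the paper leaves implicit.
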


\section{Conclusions}
\label{conclusion}

We have shown in this paper that Cunningham's least considered rule can lead
to exponential worst case behaviour in parity and other games, 
Markhov decision processes, linear programs and AUSOs.
This appears to be the first such result for a history based rule for AUSOs. 
However Cunningham's rule was in fact first proposed for the network simplex method.
The LPs presented in Section \ref{lp} are not networks, but are 
structurally remarkably close to them.
Our first open problem would be to extend our results to network LPs.

As remarked in the introduction, Zadeh's rule has recently been shown to 
have superpolynomial worst case behavior on linear programs. The parity games
behind these LPs were not binary, so it does not immediately follow that 
Zadeh's rule has similar behaviour on AUSOs. This is a second open problem.
More generally it is of interest determine whether all of the history based rules
mentioned in \cite{AADMM12} have exponential behaviour on AUSOs.

Finally it would be of interest to study the AUSOs presented in Section \ref{auso}
to see if they have a simple enough structure to be stated explicitly.
If so it may be possible to prove the existence of exponentially long paths in
a simpler manner than that done in Section \ref{parity}.




\bibliographystyle{spmpsci}
\bibliography{main}





\end{document}